%
%
%
%
%
\RequirePackage{fix-cm}
\documentclass[smallextended]{svjour3}       
\smartqed  
%

\newcounter{ContadorFazer}
\usepackage{xcolor}

\usepackage{enumerate}
\usepackage{graphicx}
%
%
%
%
%

\usepackage{amssymb}
\usepackage{complexity}
\usepackage{amsmath}
\usepackage{tikz}
\usepackage[ruled,vlined,linesnumbered]{algorithm2e}
\usepackage{hyperref}

\usepackage{cite}

\makeatletter
\renewcommand{\@biblabel}[1]{(#1)}
\makeatother

\newtheorem{choice}{Choice}

\newcommand{\semborda}[1]{\overset{\leftarrow}{#1}}
\newcommand{\borda}[1]{\overset{\rightarrow}{#1}}

\begin{document}

\title{Computing the hull number in toll convexity\thanks{Partially supported by CNPq, Brazil.}}


\author{Mitre C. Dourado}


\institute{Mitre C. Dourado \at
        	Instituto de Matemática, \\
        	Universidade Federal do Rio de Janeiro, Rio de Janeiro, Brazil. \\
            \email{mitre@dcc.ufrj.br}           
}

\date{Received: date / Accepted: date}

\maketitle

\begin{abstract}
A tolled walk $W$ between vertices $u$ and $v$ in a graph $G$ is a walk in which $u$ is adjacent only to the second vertex of $W$ and $v$ is adjacent only to the second-to-last vertex of $W$. A set $S \subseteq V(G)$ is toll convex if the vertices contained in any tolled walk between two vertices of $S$ are contained in $S$. The toll convex hull of $S$ is the minimum toll convex set containing~$S$. The toll hull number of $G$ is the minimum cardinality of a set $S$ such that the toll convex hull of $S$ is $V(G)$. The main contribution of this work is a polynomial-time algorithm for computing the toll hull number of a general graph.
\keywords{Extreme vertex \and hull number \and minimum toll hull sets \and toll convexity}
\subclass{05C85}
\end{abstract}

\section{Introduction} \label{sec:int}

A family ${\cal C}$ of subsets of a finite set $X$ is a {\em convexity on $X$} if $\varnothing, X \in {\cal C}$ and ${\cal C}$ is closed under intersection~\cite{Vel1993}.
Graph convexities have gained attention in the last decades~\cite{Duchet-mono,Gimbel2003,Henning2013,pelayo2013}.
Unlike the most studied graphs convexities, which are defined by a family of paths, the toll convexity~\cite{Alcon2015,GR2017} uses a special kind of walks.
Recall that a {\em walk} between vertices $u$ and $v$ of a graph $G$ (or a {\em $(u,v)$-walk}) is a sequence of vertices $w_1 \ldots w_k$ such that $k \ge 1$, $u = w_1$, $v = w_k$, and $w_iw_{i+1} \in E(G)$ for $1 \le i < k$.
A {\em tolled walk between vertices $u$ and $v$} (or a {\em tolled $(u,v)$-walk}) is a walk $w_1 \ldots w_k$ such that

\begin{itemize}
	\item $u \ne v$,
	\item $w_1w_i \in E(G)$ implies $i = 2$, and
	\item $w_iw_k \in E(G)$ implies $i = k-1$.
\end{itemize}

Given a graph $G$, a set $S \subseteq V(G)$ is {\em toll convex} (or {\em t-convex}) if the vertices contained in any tolled walk between two vertices of $S$ are contained in $S$; and $S$ is {\em toll concave} (or {\em t-concave}) if $V(G) - S$ is t-convex.
The {\em toll interval of $u,v \in V(G)$} is $[u,v] = \{w : w$ belongs to some tolled $(u,v)$-walk$\}$.
The {\em toll interval of $S$} is $[S] = \underset{u,v \in S}{\bigcup} [u,v]$ if $|S| \ge 2$ and $[S] = S$ otherwise.
If $[S] = V(G)$, then $S$ is said to be a {\em toll interval set of $G$} and the minimum cardinality of a toll interval set of $G$ is the {\em toll number of $G$}.
The {\em toll convex hull of $S$}, denoted by $\langle S \rangle$, is the minimum t-convex set containing $S$. If $\langle S \rangle = V(G)$, then $S$ is said to be a {\em toll hull set of $G$} and the minimum cardinality of a toll hull set of $G$ is the {\em toll hull number of $G$}. The main contribution of this work is a polynomial-time algorithm for computing the toll hull number of a general graph.

In the well-known {\em geodetic convexity}~\cite{FarberJamison1986,pelayo2013}, {\em monophonic convexity}~\cite{Duchet-mono,EJ1985}, and {\em $P_3$ convexity}~\cite{upper-Radon,Henning2013} all above concepts are analogously defined by replacing ``tolled walk'' by ``shortest path'', ``induced path'', and ``path of order three'', respectively. Note that every shortest path is an induced path, and every induced path is a tolled walk. This implies that the toll convex hull of $S$ contains the monophonic convex hull of $S$ and consequently also contains the geodetic convex hull of $S$. Therefore, the toll hull number is a lower bound of the monophonic and the geodetic hull numbers of the graph.

In the geodetic convexity, determining whether the hull number is at most $k$ is $\APX$-hard for general graphs~\cite{CDS2015}, $\NP$-complete for partial cube graphs~\cite{albenque} and chordal graphs~\cite{BDPR2018}, and solvable in polynomial time for unit interval graphs, cographs, split graphs~\cite{DGKPS2009}, cactus graphs, $P_4$-sparse graphs~\cite{araujoTCS}, distance hereditary graphs~\cite{KN2016}, ($P_5$,triangle)-free graphs~\cite{araujoENDM}.
In the $P_3$ convexity, this problem is $\APX$-hard even for bipartite graphs with maximum degree $\Delta \leq 4$~\cite{CDS2015}, and can be solved in polynomial time for block graphs and chordal graphs~\cite{Centeno2011}.
However, the monophonic hull number can be computed in polynomial time for general graphs~\cite{DPS-2010-monophonic}.
In the toll convexity, it is known that the hull number of every tree different of a caterpillar is equal to 2~\cite{Alcon2015}.

A graph $G$ is an {\em interval graph} if every vertex of $G$ can be associated with an interval of a straight line such that two vertices of $G$ are neighbors if and only if the corresponding intervals intersect. Given a convexity ${\cal C}$ on the vertex set of $G$, we say that $G$ is a {\em convex geometry under ${\cal C}$} if every ${\cal C}$-convex set of $G$ is equal to the ${\cal C}$-convex hull of its $\cal C$-extreme vertices.
In~\cite{Alcon2015}, it was shown that the interval graphs are precisely the graphs which are convex geometries in the toll convexity.
They also characterized the t-convex sets of a general graph and of some graph products.
In~\cite{GR2017}, the toll number of the Cartesian and the lexicographic product of graphs are studied, where some characterizations are presented.

The text is organized as follows. In Section~\ref{sec:tools}, we present useful definitions, notations, and results.
We also present the notion of hull characteristic family, which plays an important role in the proposed algorithm and can be an useful tool for future works dealing with the hull number. Section~\ref{sec:hullnumber} contains the algorithm for computing the toll hull number of a general graph in polynomial time. In the conclusions, we discuss that this result leads to an algorithm for generating all minimum toll hull sets of a general graph with polynomial delay and to a characterization of the toll extreme vertices of a graph.

\section{Useful tools} \label{sec:tools}

We consider finite, simple, and undirected graphs. For a graph $G$, its vertex and edge sets are denoted by $V(G)$ and $E(G)$, for a vertex $w \in V(G)$, the open and the closed neighborhoods of $w$ are denoted by $N(w)$ and $N[w]$, respectively.
Vertices $u, v \in V(G)$ are {\em twins} if $N[u] = N[v]$.
For $S \subseteq V(G)$, the {\em neighborhood of $S$} is $N(S) = \left(\underset{u \in S}{\bigcup} N(u)\right) - S$, is $\borda{S} = \{u : u \in S$, and $N(u) \cap (V(G) - S) \neq \varnothing \}$. We will also use $\semborda{S} = S - \borda{S}$.
We can also use {\em $(u,v)$-path} to refer to a path between vertices $u$ and $v$.

Denote by $G - S$ the graph obtained by the deletion of the vertices of $S$; and by $G[S]$ the subgraph of $G$ induced by $S$. 
If every two vertices of $S$ are adjacent, then $S$ is a {\em clique of $G$}.
For a set $T \subseteq V(G) - S$, we say that $(S,T)$ is {\em complete} if every vertex of $S$ is adjacent to every vertex of $T$.
Vertex~$u$ is {\em simplicial in $G$} if $N(u)$ is a clique. If $V(G)$ is a clique, then $G$ is said to be a {\em complete graph}.

We say that $S \subset V(G)$ {\em separates vertices $u,v \in V(G)$} if there is a $(u,v)$-path in $G$ but there is no one in $G - S$; that $S$ is a separator of $G$ if $S$ separates some pair of vertices of $G$; and that $S \subset V(G)$ is a {\em clique separator of $G$} if $S$ is a clique and a separator of $G$.
We say that $G$ is {\em reducible} if it contains a clique separator, otherwise it is {\em prime}.
A {\em maximal prime subgraph of $G$} (or {\em mp-subgraph of $G$}) is a maximal induced subgraph of $G$ that is prime.
An mp-subgraph $F$ of a reducible graph $G$ is called {\em extremal} if there is an mp-subgraph $F'$ different of $F$ such that for every mp-subgraph $F''$ different of $F$, it holds $F \cap F'' \subseteq F \cap F'$. See Figure~\ref{fig:gran} for an example.

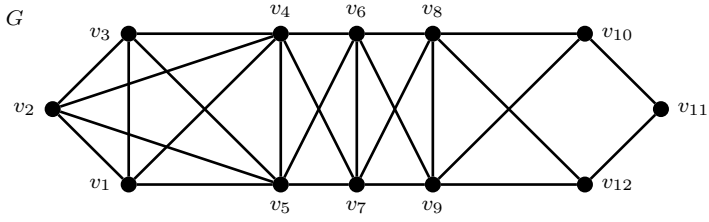
\begin{figure}[h]
	\begin{center}
\begin{tikzpicture}[scale=1]

\pgfsetlinewidth{1pt}

\tikzset{
	vertex/.style={circle,  draw, minimum size=5pt, inner sep=0pt}}

\draw (9.5,6) node[above] {$G$};

\node [vertex, fill] (v_2) at (10,5) [label=left:$v_2$]{};
\node [vertex, fill] (v_1) at (11,4) [label=left:$v_1$]{} edge (v_2);
\node [vertex, fill] (v_3) at (11,6) [label=left:$v_3$]{} edge (v_2) edge (v_1);

\node [vertex, fill] (v_{4}) at (13,6) [label=above:$v_{4}$]{} edge (v_1) edge (v_2) edge (v_3);
\node [vertex, fill] (v_{5}) at (13,4) [label=below:$v_{5}$]{} edge (v_1) edge (v_2) edge (v_3) edge (v_{4}) ;

\node [vertex, fill] (v_10) at (14,6) [label=above:$v_{6}$]{} edge (v_{5}) edge (v_{4});
\node [vertex, fill] (v_{7}) at (14,4) [label=below:$v_{7}$]{} edge (v_{4}) edge (v_10) edge (v_{5}) ;

\node [vertex, fill] (v_12) at (15,6) [label=above:$v_{8}$] {} edge (v_{7}) edge (v_10);
\node [vertex, fill] (v_11) at (15,4) [label=below:$v_{9}$]{} edge (v_{7}) edge (v_10) edge (v_12) ;

\node [vertex, fill] (v_{12}) at (17,4) [label=right:$v_{12}$]{} edge (v_11) edge (v_12);
\node [vertex, fill] (v_{11}) at (18,5) [label=right:$v_{11}$]{} edge (v_{12}); 
\node [vertex, fill] (v_{10}) at (17,6) [label=right:$v_{10}$]{} edge (v_11) edge (v_12) edge (v_{11});

\end{tikzpicture}
		
		\caption{The mp-subgraphs of $G$ are induced by the following sets $M_1 = \{v_1,v_2,v_3,v_4,v_5\}$, $M_2 = \{v_4,v_5,v_6,v_7\}$, $M_3 = \{v_6,v_7,v_8,v_9\},$ and $M_4 = \{v_8,v_9,v_{10},v_{11},v_{12}\}$. The extremal mp-subgraphs of $G$ are $M_1$ and $M_4$.}
		
		\label{fig:gran}
		
	\end{center}
\end{figure}

The following result on the monophonic convexity solves the problem of determining the toll hull number when the input graph is prime.

\begin{theorem} {\em \cite{DPS-2010-monophonic}} \label{thm:atom}
	If $G$ is a prime graph that is not a complete graph, then every pair of non-adjacent vertices is a monophonic hull set of $G$.
\end{theorem}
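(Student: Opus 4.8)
The plan is to show that if $G$ is prime and not complete, then for any two non-adjacent vertices $u,v$, the monophonic convex hull $\langle\{u,v\}\rangle_m$ is all of $V(G)$. The natural strategy is to proceed by contradiction: suppose $H = \langle\{u,v\}\rangle_m \neq V(G)$ is a proper monophonic convex set containing $u$ and $v$. Since $u$ and $v$ are non-adjacent and $H$ is monophonically convex, $H$ contains every induced $(u,v)$-path, so in particular $H$ is connected and $|H| \geq 3$. The key structural fact I would exploit is that a monophonically convex set in any graph induces a connected subgraph and, more importantly, that its neighborhood behaves like a separator: for a proper monophonically convex set $H$, the set $N(H) = N(\borda{H})$ must be ``small'' in a precise sense. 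Specifically, I would first establish the standard lemma that if $H$ is monophonically convex and $x \in N(H)$, then $N(x) \cap H$ is a clique — otherwise two non-adjacent neighbors $a,b \in N(x) \cap H$ together with $x$ would force an induced path through $x$ (take a shortest, hence induced, $(a,b)$-path inside the connected set $H$; prepending/appending $x$ appropriately, or using that $a$-$x$-$b$ is induced, yields an induced $(a,b)$-path leaving $H$), contradicting convexity.

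Next I would argue that $N(H)$ itself, or rather the relevant piece of it, is a clique that separates $G$. Here is the crux: pick a vertex $z \in V(G) \setminus H$ adjacent to $H$ (possible since $G$ is connected), and look at $S = N(z) \cap H$; we have just shown $S$ is a clique. I want to find a clique separator. Consider a connected component structure of $G - N[H]$... actually the cleaner route is: among all proper monophonically convex supersets argument aside, take $H$ to be a proper monophonically convex set that is \emph{maximal} such (it exists and still contains some induced $(u,v)$-path hence is nontrivial). Then $V(G) \setminus H$ has the property that pulling any single vertex $w \in N(H)$ into $H$ generates all of $V(G)$; this forces $N(w) \cap H$ to separate $w$ from the rest, and one shows the common neighborhood $\bigcap$ over the border behaves rigidly. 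The upshot I am aiming for is that $\borda{H}$, or $N(H)$, contains a clique separating $G$ — contradicting that $G$ is prime.

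Concretely, the cleanest packaging: let $H$ be a maximal proper monophonically convex set containing $\{u,v\}$. \textbf{Claim:} $N(H)$ is a clique. Given the claim, $N(H)$ separates $H \setminus N(H) \ni u$ (note $u \notin N(H)$ since $u$'s only way out would be through its neighbors, but one checks $u$ can be taken in the interior) from $V(G) \setminus (H \cup N(H))$, which is nonempty because $H$ is proper and $N(H)$ is a clique hence cannot be all of $V(G) \setminus H$ unless that set is a clique, in which case a separate easy argument (a clique hanging off $H$) still yields a clique separator or forces $G$ complete. Either way $G$ has a clique separator, contradicting primeness. To prove the claim, suppose $x,y \in N(H)$ are non-adjacent. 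Using that $H$ is connected, take an induced $x$--$y$ path $P$ with interior in $H$ (exists: go from $x$ into a neighbor in $H$, through $H$, out to $y$, then take an induced subpath); then $\langle H \cup \{x\}\rangle_m$ and $\langle H \cup\{y\}\rangle_m$ both must be $V(G)$ by maximality, but one shows $\langle H \cup\{x\}\rangle_m$ already contains $y$ via $P$, so $\langle H \cup \{x,y\}\rangle_m = \langle H\cup\{x\}\rangle_m = V(G)$ while also being obtainable more slowly — and chasing the induced paths created when $x,y$ are both adjacent to $H$ but not to each other produces an induced path witnessing that some $z \notin H\cup\{x,y\}$ lies in $\langle H\cup\{x\}\rangle_m$ prematurely, pinning down enough structure to extract the clique separator directly.

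The main obstacle I anticipate is precisely the claim that $N(H)$ (for a suitably chosen $H$) is a clique: the monophonic interval operation is subtle because induced paths can be long and can re-enter $H$, so the bookkeeping on which vertices get pulled in at each step needs care. A secondary annoyance is the degenerate case where $V(G) \setminus H$ is itself a clique attached to $H$ — then $N(H)$ being a clique does not immediately give a separator, and one must instead observe that the attachment clique plus its neighbors in $H$ forms a clique separator, or that $G$ would have to be complete, contradicting the hypothesis. Handling primeness cleanly will likely route through the characterization of prime graphs as exactly those with no clique separator, which is what lets us conclude.
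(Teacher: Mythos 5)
First, note that the paper does not prove this statement at all: Theorem~\ref{thm:atom} is imported verbatim from \cite{DPS-2010-monophonic}, so there is no in-paper proof to compare against. Judged on its own, your proposal has the right ingredients but a genuine gap at its center. The pivotal \textbf{Claim} --- that $N(H)$ (the \emph{outside} neighborhood of a maximal proper monophonically convex set $H$) is a clique --- is never actually established; the paragraph meant to prove it ends in ``chasing the induced paths \ldots pinning down enough structure,'' which is a hope rather than an argument, and you yourself flag it as the main obstacle. Worse, the claim is aimed at the wrong set: there is no reason for the attachment vertices of \emph{different} parts of $V(G)\setminus H$ onto $H$ to be pairwise adjacent, and an induced $(x,y)$-path through $H$ between non-adjacent $x,y\in N(H)$ does not violate the convexity of $H$, since $x,y\notin H$. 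Maximality of $H$ does not obviously rescue this, and the ``degenerate case'' you append at the end is a symptom of working with the wrong separator candidate.

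The standard and much shorter route inverts the direction. Let $H=\langle\{u,v\}\rangle_m$ and suppose $H\neq V(G)$. Pick a connected component $B$ of $G-H$ and consider $N(B)\subseteq H$, the vertices of $H$ with a neighbor in $B$. If $a,b\in N(B)$ were non-adjacent, an induced $(a,b)$-path in $G[B\cup\{a,b\}]$ would have all internal vertices in $B$, i.e.\ outside $H$, contradicting monophonic convexity of $H$ (here $a,b\in H$, so the contradiction is immediate --- no maximality needed). Hence $N(B)$ is a clique. It separates $B$ from $H\setminus N(B)$, and the latter is nonempty because $H$ contains the non-adjacent pair $u,v$ and therefore cannot be contained in the clique $N(B)$. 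This exhibits a clique separator, contradicting primeness. Your first lemma ($N(x)\cap H$ is a clique for $x\notin H$, via the induced path $a\,x\,b$) is correct and is the one-internal-vertex special case of this argument, but the proof as packaged does not close.
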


\begin{corollary} \label{cor:atom}
	Let $G$ be a prime graph. If $G$ is a complete graph, then the toll hull number of $G$ is $|V(G)|$; otherwise every two non-adjacent vertices form a toll hull set of $G$.
\end{corollary}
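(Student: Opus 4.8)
The statement has two cases, and the plan is to handle each separately, leaning on Theorem~\ref{thm:atom} for the non-complete case and on the relationship between induced paths and tolled walks recalled in the introduction.

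\emph{Complete graph case.} First I would observe that if $G$ is complete, then $G$ has no tolled walk at all: any walk $w_1\ldots w_k$ with $k\ge 2$ has $w_1w_k\in E(G)$, and in a complete graph the condition ``$w_1w_i\in E(G)$ implies $i=2$'' forces $k\le 2$, so the only walks that could be tolled are single edges, but an edge $w_1w_2$ is not a tolled $(w_1,w_2)$-walk since the defining conditions on intermediate vertices are vacuous while there are no intermediate vertices to justify calling it ``tolled'' — more to the point, $[u,v]=\{u,v\}$ for adjacent $u,v$, so every set is t-convex. Hence $\langle S\rangle = S$ for all $S$, and the only toll hull set is $V(G)$ itself, giving toll hull number $|V(G)|$.

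\emph{Prime non-complete case.} Here I would take any two non-adjacent vertices $u,v$ and show $\langle\{u,v\}\rangle = V(G)$. By Theorem~\ref{thm:atom}, the pair $\{u,v\}$ is a monophonic hull set of $G$, i.e.\ the monophonic convex hull of $\{u,v\}$ is $V(G)$. As noted in the introduction, every induced path is a tolled walk, so every monophonic interval is contained in the corresponding toll interval, and therefore every t-convex set is monophonically convex; consequently the toll convex hull of any set contains its monophonic convex hull. Applying this to $\{u,v\}$ gives $V(G) = \langle\{u,v\}\rangle_{\text{mono}} \subseteq \langle\{u,v\}\rangle \subseteq V(G)$, so $\{u,v\}$ is a toll hull set. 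Since a single vertex $w$ has $\langle\{w\}\rangle = \{w\} \ne V(G)$ (as $G$ is not a single vertex, being prime and non-complete), the toll hull number is exactly $2$, witnessed by any non-adjacent pair.

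\emph{Main obstacle.} There is essentially no deep obstacle; the only point requiring care is the inclusion ``t-convex $\Rightarrow$ monophonically convex,'' which must be argued from the direction ``induced path $\Rightarrow$ tolled walk'' (a t-convex set must contain all vertices of every tolled walk between its points, hence in particular all vertices of every induced path, hence is monophonically convex). I would state this inclusion of hulls explicitly as it is used verbatim, and make sure the complete-graph subcase correctly notes that no tolled walk exists, so that no non-trivial t-convex proper subset is ever forced to grow.
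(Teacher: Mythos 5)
Your proof is correct and follows essentially the same route as the paper: the complete-graph case is settled by noting that adjacent vertices have trivial toll intervals so every set is t-convex, and the non-complete prime case is exactly the paper's argument via Theorem~\ref{thm:atom} and the inclusion of the monophonic convex hull in the toll convex hull. The only minor slip is your aside that an edge is not a tolled walk --- by the definition it is one (the paper itself observes that if $uv\in E(G)$ then the unique tolled $(u,v)$-walk is $uv$) --- but this does not affect your correct conclusion that $[u,v]=\{u,v\}$ for adjacent $u,v$.
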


\begin{proof}
	If $G$ is a complete graph, it is clear that $V(G)$ is the only toll hull set of $G$.
	If $G$ is a not a complete graph, then the result follows from Theorem~\ref{thm:atom} because the monophonic convex hull of $S$ is contained in the toll convex hull of $S$ for any set $S \subseteq V(G)$.
	$\hfill\square$
\end{proof}

The following results state useful properties of reducible graphs.

\begin{lemma} {\em \cite{Leimer1993}} \label{lem:extremalmp}
	Every reducible graph has at least two extremal mp-subgraphs.
\end{lemma}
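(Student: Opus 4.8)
The plan is to invoke the tree-like organisation of the mp-subgraphs of a graph that underlies Leimer's decomposition theory. Concretely, one first establishes that the mp-subgraphs $F_1,\dots,F_t$ of $G$ can be taken as the nodes of a tree $\mathcal{T}$ so that: (i) for every edge $F_iF_j$ of $\mathcal{T}$ the set $F_i\cap F_j$ is a clique separator of $G$; and (ii) $\mathcal{T}$ has the clique-intersection property, namely, if $F_k$ lies on the path of $\mathcal{T}$ between $F_i$ and $F_j$, then $F_i\cap F_j\subseteq F_k$ (equivalently, for each $v\in V(G)$ the nodes whose mp-subgraph contains $v$ induce a subtree of $\mathcal{T}$). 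This tree is built by repeatedly splitting $G$ along a minimal clique separator and arguing that an mp-subgraph of $G$ is always an mp-subgraph of exactly one of the two sides, so that the decomposition forest can be assembled recursively.

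Granting this, I would next show that every leaf of $\mathcal{T}$ is an extremal mp-subgraph. Let $F$ be a leaf and let $F'$ be its unique neighbour in $\mathcal{T}$. For any mp-subgraph $F''\neq F$, the path of $\mathcal{T}$ between $F$ and $F''$ begins with the edge $F$--$F'$, so $F'$ lies on it; by (ii) this gives $F\cap F''\subseteq F'$, and therefore $F\cap F''\subseteq F\cap F'$. Hence $F'$ witnesses that $F$ is extremal in the sense of the definition in Section~\ref{sec:tools}.

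Finally I would check that $\mathcal{T}$ has at least two nodes. Since $G$ is reducible it has a minimal clique separator $C$; then $G-C$ has at least two components, and a prime induced subgraph of $G$ cannot have vertices in two of them (intersecting the clique $C$ with such a subgraph would produce a clique separator of it). Consequently each component side of $C$ carries at least one mp-subgraph of $G$, so $t\ge 2$. A tree on at least two nodes has at least two distinct leaves, and by the previous paragraph these yield two distinct extremal mp-subgraphs of $G$.

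The substantive point, and the step I expect to be the main obstacle, is establishing the tree $\mathcal{T}$ with the clique-intersection property --- essentially Leimer's decomposition theorem itself; if one is willing to cite it, what remains is only the short combinatorial argument above. The delicate bookkeeping in a from-scratch proof is verifying, by induction on $|V(G)|$, that splitting along a minimal clique separator produces exactly the mp-subgraphs of the two smaller graphs (so that primeness and maximality are preserved in both directions), together with the degenerate cases where a side is itself prime or is a clique. A minor additional remark is that not every extremal mp-subgraph need be a leaf of $\mathcal{T}$, but since only the existence of two such subgraphs is claimed, the two leaves suffice.
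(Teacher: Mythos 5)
The paper does not actually prove this lemma --- it is imported verbatim from Leimer (1993) with a citation and no proof --- so there is no in-paper argument to compare yours against. Your derivation is correct modulo the one ingredient you explicitly flag: that the mp-subgraphs of $G$ can be arranged in a junction tree $\mathcal{T}$ with the clique-intersection (running intersection) property. That is precisely the content of Leimer's decomposition theorem, so your write-up is best read as showing how the stated lemma follows from the structural theorem rather than as an independent proof. Granting the tree, the rest is sound: a leaf $F$ with neighbour $F'$ satisfies $F\cap F''\subseteq F'$ for every other node $F''$ (the path to $F''$ passes through $F'$), hence $F\cap F''\subseteq F\cap F'$, which is exactly the paper's definition of extremal with witness $F'$; a reducible graph has at least two mp-subgraphs, and a tree on at least two nodes has at least two leaves. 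Your side remarks are also accurate --- the delicate part of a from-scratch argument is indeed the induction showing that splitting along a minimal clique separator reproduces exactly the mp-subgraphs of the two parts, and extremal nodes need not all be leaves. One cosmetic caveat: the paper's definition of ``separator'' requires a $(u,v)$-path to exist in $G$ before $S$ can separate $u$ and $v$, so the whole discussion implicitly assumes $G$ connected; your step ``each side of $C$ carries an mp-subgraph'' should be phrased per connected component of $G-C$ together with $C$, which is how you in fact argue it. In short: correct, and consistent with the route the paper delegates to the cited reference.
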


Given a tolled $(u,v)$-walk $W = w_1 \ldots w_k$, observe that if $uv \in E(G)$, then $W = uv$; and that if $w_2 = w_{k-1}$, then $W = uw_2v$.
A vertex $u$ of a t-convex set $S$ is {\em extreme in $S$} if $S - \{u\}$ is also a t-convex set.
Denote the set of toll extreme vertices of $V(G)$ by $Ext_t(G)$. It is clear that $Ext_t(G)$ is subset of every toll interval set and of every toll hull set of $G$ and that every toll extreme vertex is a simplicial vertex but the converse is not always true.

\begin{lemma} \label{lem:prime}
If $H$ is a subgraph of a graph $G$ such that every mp-subgraph of $H$ is also an mp-subgraph of $G$, then every induced $(u,v)$-path for $u,v \in V(H)$ contains at least one internal vertex of $H$ if $uv \not\in E(G)$.
\end{lemma}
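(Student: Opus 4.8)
The plan is to argue by contradiction, using the clique‑separator (Leimer) decomposition of $G$ that underlies Lemma~\ref{lem:extremalmp}. Suppose there are $u,v\in V(H)$ with $uv\notin E(G)$ and an induced $(u,v)$‑path $P=w_0w_1\cdots w_k$ in $G$ (so $w_0=u$, $w_k=v$, $k\ge 2$) with $w_1,\dots,w_{k-1}\notin V(H)$; I want to reach a contradiction.

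First I would extract the real content of the hypothesis. If $M$ is an mp‑subgraph of $G$ with $V(M)\subseteq V(H)$, then $M$ is an mp‑subgraph of $H$ as well, since any prime extension of $M$ inside $H$ would be one inside $G$; combined with the hypothesis this gives that the mp‑subgraphs of $H$ are exactly the mp‑subgraphs of $G$ contained in $V(H)$, and, since each vertex of $H$ lies in one of them, $V(H)=\bigcup\{V(M):M$ an mp‑subgraph of $G$, $V(M)\subseteq V(H)\}$. In the decomposition tree $T$ of $G$ the corresponding nodes form a connected subtree $I$ (this is the point at which one needs $H$ to be connected: for a disconnected union such as two non‑adjacent triangles the conclusion actually fails, so I read the statement as implicitly requiring $H$ connected).

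The main step: because $w_1\notin V(H)$ but $u\in V(H)$, the edge $uw_1$ lies in an mp‑subgraph $M'$ of $G$ with $M'\notin I$. Walking in $T$ from an mp‑subgraph of $H$ containing $u$ towards $M'$, let $M_iM_{i+1}$ be the first edge leaving $I$, so $M_i\in I$ and $M_{i+1}\notin I$. Every node on this walk contains $u$, so $u\in V(M_i)\cap V(M_{i+1})=:S$, which is a clique of $G$ contained in $V(M_i)\subseteq V(H)$ and a minimal separator of $G$. Since $I$ lies entirely on the $M_i$‑side of the edge $M_iM_{i+1}$, the tree‑decomposition (running‑intersection) property puts all of $V(H)\setminus S$ into one component of $G-S$, while $w_1\in V(M')\setminus S$ lies in a different component. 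Now $P$ is induced and $S$ is a clique, so $P$ meets $S$ in at most two vertices, consecutive on $P$ if there are two; as $u\in V(P)\cap S$ while $w_1\notin S$, we get $V(P)\cap S=\{u\}$. Hence $w_1,\dots,w_{k-1},v$ all avoid $S$, and following $P$ from $w_1$ onward, $w_2,\dots,w_{k-1}$ and finally $v$ all stay in $w_1$'s component of $G-S$. But $v\in V(H)\setminus S$ lies in the other component — contradiction.

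The hard part, apart from the implicit connectedness assumption, is the decomposition bookkeeping in the second paragraph together with the ``a single boundary separator isolates $V(H)\setminus S$'' claim; once the tree picture is set up the path argument is only a few lines. A more self‑contained alternative I would consider, if I wanted to avoid invoking the full tree, is: set $R:=G[V(H)\cup V(P)]$; note $R$ is reducible (if it were prime, any mp‑subgraph of $H$ would be a proper prime induced subgraph of the prime graph $R$, hence not maximal in $G$); take an extremal mp‑subgraph $F$ of $R$ by Lemma~\ref{lem:extremalmp}; if the private part of $F$ misses $V(H)$ it is a subpath $w_p\cdots w_q$ of $w_1\cdots w_{k-1}$ whose two flanking vertices on $P$ lie together in the attaching clique of $F$, producing a chord of $P$; and if it meets $V(H)$ one shows $F$ is itself an mp‑subgraph of $H$ and peels it off to induct on $|V(H)|$. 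I expect the delicate step of that route to be checking that the peeling preserves the hypothesis.
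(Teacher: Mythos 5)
Your argument takes a genuinely different route from the paper's: the paper disposes of this lemma in one line by asserting that $R=G[V(H)\cup V(P)]$ is a prime subgraph of $G$ properly containing an mp-subgraph of $G$, contradicting maximality, whereas you route everything through a junction tree of the atoms of $G$ --- machinery the paper never sets up (Leimer is cited only for Lemma~\ref{lem:extremalmp} and for computing the decomposition). The problem is that the one step where you genuinely use that machinery is wrong as stated. The claim that the atoms contained in $V(H)$ form a connected subtree $I$ of the decomposition tree fails in general: take three atoms $A_1,A_2,A_3$ pairwise meeting in a common clique separator $C$ (e.g.\ $G=K_{1,3}$ with center $c$ and leaves $p_1,p_2,p_3$, atoms $\{c,p_i\}$), arranged as the path $A_1$--$A_2$--$A_3$, which is a perfectly valid junction tree; with $H=G[V(A_1)\cup V(A_3)]$ the hypothesis holds, $H$ is connected, yet $I=\{A_1,A_3\}$ is disconnected in $T$. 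Consequently ``$I$ lies entirely on the $M_i$-side of $M_iM_{i+1}$'', and hence the assertion that $S$ separates all of $V(H)\setminus S$ from $w_1$, does not follow; a priori $v$ could sit in an atom on the $M_{i+1}$-side. Everything downstream --- an induced path meets a clique in at most two consecutive vertices, so $V(P)\cap S=\{u\}$, and the two-sides contradiction --- is correct, so the proof would close if you supplied a correct argument that some clique $S\subseteq V(H)$ with $u\in S$ separates $w_1$ from $v$. You flagged this as ``the hard part,'' but flagging it is not proving it.

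Two smaller points. Your observation that the statement implicitly needs $H$ connected is legitimate: with the paper's definition of separator, two disjoint triangles whose union is a maximal prime induced subgraph of $G$ give a literal counterexample, and every application in the paper is indeed to a connected union of mp-subgraphs. Your alternative sketch (show $R$ is reducible and peel an extremal mp-subgraph of $R$) is much closer in spirit to the paper's actual proof --- which asserts the complementary fact that $R$ is prime and contradicts maximality directly --- but you leave both of its cases unfinished, so it cannot substitute for a proof either.
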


\begin{proof}
If $u,v \in V(H)$, $uv \not\in E(G)$, and $P$ is an induced $(u,v)$-path whose none internal vertex belongs to $H$, then the union of $H$ and $P$ is a prime subgraph of $G$ properly containing an mp-subgraph of $G$, which is a contradiction.
$\hfill\square$
\end{proof}

We observe that $N(F)$ is a clique for every t-concave set $F$ that induces a connected graph, because if $u,v \in N(F)$ are non-adjacent, then any induced $(u,v)$-path of $G[F \cup \{u,v\}]$ is a tolled $(u,v)$-walk containing vertices of $F$.

\begin{lemma} {\em \cite{Alcon2015}} \label{lem:tollconvexset}
A vertex $v$ is in some tolled walk between non-adjacent vertices $x$ and $y$ if and only if $N[x] - \{v\}$ does not separate $v$ from $y$ and $N[y] - \{v\}$ does not separate $v$ from $x$.
\end{lemma}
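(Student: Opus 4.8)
The statement is an equivalence; I would establish the two implications separately, working with $v\notin\{x,y\}$ (the case for which the statement is correct and in which it is used) and taking for granted that $x$, $y$, and $v$ lie in one connected component of $G$, since a tolled $(x,y)$-walk through $v$ forces this anyway.

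For the forward direction, suppose $v$ lies in a tolled $(x,y)$-walk $W=w_1\ldots w_k$, say $w_i=v$ with $2\le i\le k-1$ (such an $i$ exists because $x$ and $y$ are non-adjacent, so $k\ge 3$, and $v\neq x,y$). Split $W$ at $v$ into the subwalks $w_1\ldots w_i$ and $w_i\ldots w_k$. The defining property $w_1w_j\in E(G)\Rightarrow j=2$ shows that $x$ is adjacent to no $w_j$ with $j\ge 3$, and a one-line argument (if $w_j=x$ for some $3\le j\le k-1$, then $w_{j+1}$, whose index is at least $3$, would be adjacent to $x=w_1$, a contradiction; and $w_k=y\neq x$) shows $x$ does not reoccur among $w_3,\ldots,w_k$. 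Hence every vertex of $w_i\ldots w_k$ other than $v$ avoids $N[x]$, so this subwalk is a $(v,y)$-walk in $G-(N[x]-\{v\})$ and witnesses that $N[x]-\{v\}$ does not separate $v$ from $y$. Using $w_jw_k\in E(G)\Rightarrow j=k-1$ symmetrically, $w_1\ldots w_i$ lies in $G-(N[y]-\{v\})$, so $N[y]-\{v\}$ does not separate $v$ from $x$.

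For the converse, the hypotheses give a $(v,y)$-path in $G-(N[x]-\{v\})$ and a $(v,x)$-path in $G-(N[y]-\{v\})$; I would pick shortest ones, $P=vp_1\ldots p_{s-1}y$ and $Q=vq_1\ldots q_{t-1}x$ (shortest paths are induced), and form the $(x,y)$-walk through $v$ given by $W=x,q_{t-1},\ldots,q_1,v,p_1,\ldots,p_{s-1},y$. Since $x\neq y$, it remains to verify the two tolledness conditions. I would argue that $q_{t-1}$ is the unique vertex of $W$ adjacent to $x$: among $q_0,\ldots,q_t$ this holds because $Q$ is induced, and no $p_j$ with $j\ge 1$ is adjacent to $x$ (nor equals $v$) because $P$ avoids $N[x]-\{v\}$; moreover $q_{t-1}\in N(x)$ whereas no $p_j$ with $j\ge 1$ lies in $N(x)$ and no $q_j$ with $j\le t-2$ equals $q_{t-1}$, so $q_{t-1}$ occurs in $W$ only at position $2$. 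Hence $w_1w_i\in E(G)$ forces $i=2$, and the dual argument with $P$, $y$ and $p_{s-1}$ gives $w_iw_k\in E(G)\Rightarrow i=k-1$; thus $W$ is a tolled $(x,y)$-walk containing $v$.

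I expect the converse to be where the real work lies, for two reasons. First, $P$ and $Q$ may share vertices, so I must be careful that the candidate second vertex $q_{t-1}$ and second-to-last vertex $p_{s-1}$ do not reappear elsewhere in the concatenation and that no further vertex of $W$ is adjacent to $x$ or to $y$; the clean way to see this — and the reason it is essential that $P$ avoids $N[x]-\{v\}$ and $Q$ avoids $N[y]-\{v\}$, rather than being arbitrary connecting paths — is the observation that $q_{t-1}\in N(x)$ while the $P$-part of $W$ is disjoint from $N(x)$, and dually for $p_{s-1}$. Second, there is a genuinely degenerate case $s=t=1$, where $W=xvy$ and the two special vertices both coincide with $v$; there one simply checks by hand that $xvy$ (with $v$ adjacent to both endpoints, which are themselves non-adjacent) is a tolled walk.
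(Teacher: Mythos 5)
Your proof is correct, but note that the paper does not prove this statement at all: Lemma~\ref{lem:tollconvexset} is imported verbatim from Alc\'on et al.\ \cite{Alcon2015}, so there is no in-paper argument to compare against. Judged on its own, your argument is sound and complete. The forward direction is the routine half, and you handle the one non-obvious point (that $x$ cannot recur at a position $\ge 3$ of the walk) correctly. The converse is where the content lies, and you identify the right mechanism: taking $P$ and $Q$ to be \emph{shortest} (hence induced) paths in the respective vertex-deleted subgraphs is exactly what forces the degenerate collapse when $v$ happens to be adjacent to $x$ or to $y$ --- inducedness then forces $t=1$ (resp.\ $s=1$), so $v$ itself sits at position $2$ (resp.\ $k-1$) and the tolledness condition is not violated; with arbitrary connecting paths the construction would fail precisely there. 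Your accounting of repeated vertices in the concatenation (a vertex adjacent to $x$ can only be $q_{t-1}$, which occurs only at position $2$, and dually for $p_{s-1}$) is also the part most often glossed over, and you get it right. The two caveats you flag --- restricting to $v\notin\{x,y\}$ and assuming $x$, $y$, $v$ lie in one component --- are genuinely needed for the statement to be literally true under this paper's definition of ``separates,'' and they match how the lemma is actually used in Section~\ref{sec:tools}.
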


Lemma~\ref{lem:tollconvexset} can be used to test whether a set $S$ is t-concave as follows.
In fact, we show how to compute the toll interval of $x$ and $y$ in polynomial time, which also allows one to test whether a vertex $v$ is toll extreme, which is equivalent to consider $S = \{v\}$, and to compute the toll convex hull of $S$ in polynomial time.
For every $x,y \in S$ and $v \not\in S$ such $xy \not\in E(G)$, if
$x$ and $v$ belong to the same connected component of $G - (N[y] - \{v\})$ and 
$y$ and $v$ belong to the same connected component of $G - (N[x] - \{v\})$, then
$v \in [x,y]$. This can be done in $O(n^3m)$ steps, where $n$ is the number of vertices and $m$ the number of edges.

The following result contains useful properties of tolled walks.

\begin{lemma} \label{lem:walk}
	Let $G$ be a graph, let $S \subset V(G)$, let $C \subseteq \semborda{S}$ be maximal such that $G[C]$ is connected, and let $x,y \not\in S$. The following sentences are equivalent.
	
	\begin{enumerate}[$(i)$]
		\item There is a tolled $(x,y)$-walk containing vertices of $C$. \label{ite:simple}
		
		\item There is a tolled $(x,y)$-walk containing vertices $x',y' \in \borda{S}$ such that $xy' \not\in E(G)$, $yx' \not\in E(G)$, $N(x') \cap C \neq \varnothing$, and $N(y') \cap C \neq \varnothing$. \label{ite:complex}
		
		\item $C \subset [x,y]$. \label{ite:Cxy}
	\end{enumerate}
\end{lemma}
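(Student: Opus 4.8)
The plan is to prove the cycle of implications $(i)\Rightarrow(ii)\Rightarrow(iii)\Rightarrow(i)$. Note first that $C$ is a non-empty connected component of $G[\semborda{S}]$, and that we may assume $xy\notin E(G)$, hence $x\neq y$: if $xy\in E(G)$ the only tolled $(x,y)$-walk is $xy$, and since $C\subseteq S$, $\borda{S}\subseteq S$ and $x,y\notin S$, this walk meets neither $C$ nor $\borda{S}$, so all three statements fail. Two elementary facts will be used throughout. \emph{(a)} Since $C\subseteq\semborda{S}$ and $x,y\notin S$, no vertex of $C$ is adjacent to $x$ or $y$; and, since $C$ is a connected component of $G[\semborda{S}]$, every vertex outside $C$ having a neighbour in $C$ lies in $\borda{S}$. \emph{(b)} If $W=w_1\ldots w_k$ is a tolled $(x,y)$-walk, then $w_2$ is the only vertex of $W$ adjacent to $x$, $w_{k-1}$ the only one adjacent to $y$, and each occurs exactly once in $W$; hence $w_2,w_{k-1}\notin\semborda{S}$, so a vertex of $C$ can occur in $W$ only at a position in $\{3,\ldots,k-2\}$, and---combining \emph{(a)} with \emph{(b)}---any $(x,y)$-walk $W'$ with $V(W')\subseteq V(W)\cup C$ in which $w_2$ occurs only at position $2$ and $w_{k-1}$ only at the second-to-last position is again a tolled $(x,y)$-walk. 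Granting this, $(iii)\Rightarrow(i)$ is immediate: any $c\in C\subseteq[x,y]$ lies on a tolled $(x,y)$-walk, which then contains a vertex of $C$.

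For $(i)\Rightarrow(ii)$, take a tolled $(x,y)$-walk $W=w_1\ldots w_k$ meeting $C$, and let $p$ (resp.\ $q$) be the least (resp.\ greatest) index with $w_p\in C$ (resp.\ $w_q\in C$); by \emph{(b)}, $3\le p\le q\le k-2$. Set $x'=w_{p-1}$ and $y'=w_{q+1}$. By minimality of $p$, $w_{p-1}\notin C$, and it is adjacent to $w_p\in C$, so $x'\in\borda{S}$ by \emph{(a)}; symmetrically $y'\in\borda{S}$, and $w_p\in N(x')\cap C$, $w_q\in N(y')\cap C$. If $yx'=w_kw_{p-1}\in E(G)$, then since $W$ is tolled $p-1=k-1$, i.e.\ $p=k$, contradicting $w_p\in C\subseteq S$ and $w_k=y\notin S$; so $yx'\notin E(G)$, and symmetrically $xy'\notin E(G)$. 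Thus $W$, together with $x',y'$, witnesses $(ii)$.

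For $(ii)\Rightarrow(iii)$, start from $W=w_1\ldots w_k$, $x'=w_a$, $y'=w_b$, $c_x\in N(x')\cap C$ and $c_y\in N(y')\cap C$ as in $(ii)$; since $x',y'\in S$ and $x,y\notin S$ we have $2\le a,b\le k-1$. Fix any $c\in C$. As $G[C]$ is connected, there is a walk $R$ inside $C$ going from $c_x$ through $c$ to $c_y$. Splice $R$ into $W$ in place of the segment between the chosen occurrences of $x'$ and $y'$: for $a<b$ take $W'=w_1\ldots w_a\,R\,w_b\ldots w_k$ (with $R$ read from its $c_x$-end); for $a>b$ take $W'=w_1\ldots w_b\,R\,w_a\ldots w_k$ (with $R$ from its $c_y$-end); and for $a=b$ (so $x'=y'$) take $W'=w_1\ldots w_a\,R\,w_a\ldots w_k$. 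In each case $W'$ is an $(x,y)$-walk through $c$ with $V(W')\subseteq V(W)\cup C$, so by the fact above it remains only to verify that $w_2$ occurs in $W'$ only at position $2$ and $w_{k-1}$ only at the second-to-last position. If $a\neq b$, this follows from $2\le a,b\le k-1$, since then the piece of $W$ removed by the splice contains neither index $2$ nor index $k-1$, while $w_2$ stays in the retained prefix and $w_{k-1}$ in the retained suffix. If $a=b$, then the hypotheses $xy'=xx'\notin E(G)$ and $yx'\notin E(G)$ say that $x'=y'$ is adjacent to neither $x$ nor $y$; as $w_2$ is adjacent to $x$ and $w_{k-1}$ to $y$, this forces $a\notin\{2,k-1\}$, so $w_2,w_{k-1}$ keep their positions, and the repeated occurrence of $x'$ in $W'$ does no harm since $x'$ is adjacent to neither endpoint. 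Hence $c\in[x,y]$; as $c$ was arbitrary, $C\subseteq[x,y]$.

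The main obstacle is exactly this last check that $W'$ remains tolled: splicing a $C$-walk into $W$ between $x'$ and $y'$ can create a second occurrence of $x'=y'$ (unavoidably so when $x'=y'$), and the walk stays tolled only thanks to the edge-hypotheses of $(ii)$, which make $x'$ non-adjacent to both endpoints and also keep the distinguished vertices $w_2,w_{k-1}$ in place under the surgery. Finally, for $(i)\Leftrightarrow(iii)$ alone there is a shorter route via Lemma~\ref{lem:tollconvexset}: since $C$ is connected and $C\cap(N[x]\cup N[y])=\varnothing$, all vertices of $C$ lie in one component of $G-N[x]$ and in one component of $G-N[y]$, so by that lemma either all of them belong to $[x,y]$ or none does.
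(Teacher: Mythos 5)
Your proof is correct and follows essentially the same route as the paper: the same cycle $(i)\Rightarrow(ii)\Rightarrow(iii)\Rightarrow(i)$, with $(ii)$ witnessed by the border vertices just before the first and just after the last visit to $C$, and $(iii)$ obtained by splicing a walk of $G[C]$ into $W$ between $x'$ and $y'$. The only differences are that you verify explicitly (including the cases $a=b$ and $a>b$) the step the paper dismisses with ``it is easy to see,'' and that you note the shortcut for $(i)\Leftrightarrow(iii)$ via Lemma~\ref{lem:tollconvexset}; both are welcome but do not constitute a different method.
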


\begin{proof}
	\bigskip \noindent $(\ref{ite:simple}) \Rightarrow (\ref{ite:complex})$ Let $W$ be a tolled $(x,y)$-walk containing vertex $v \in C$. Since $\borda{S}$ separates $\semborda{S}$ from $V(G) - S$ and $C$ is maximal contained in $\semborda{S}$ with $G[C]$ connected, we can write $W = x \ldots x' x'' \ldots v \ldots y'' y' \ldots y$ such that $x',y' \in \borda{S}$ and $x'',y'' \in C$. Therefore, $N(x') \cap C \neq \varnothing$, $N(y') \cap C \neq \varnothing$, and, by the definition of tolled walk, $xy' \not\in E(G)$ and $yx' \not\in E(G)$.
	
	\bigskip \noindent $(\ref{ite:complex}) \Rightarrow (\ref{ite:Cxy})$
	Let $W$ be a tolled $(x,y)$-walk containing vertices $x',y' \in \borda{S}$ such that $xy' \not\in E(G)$, $yx' \not\in E(G)$, $N(x') \cap C \neq \varnothing$, and $N(y') \cap C \neq \varnothing$.
	Let $x'' \in N(x') \cap C \neq \varnothing$ and 
	let $y'' \in N(y') \cap C \neq \varnothing$. Since $G[C]$ is connected, there is a $(x'',y'')$-walk $W'$ in $G[C]$ containing all vertices of $C$. (Note that $W'$ is not necessarily a tolled walk.) It is easy to see that
	the concatenation of $W_x, W',$ and $W_y$, where $W_x$ is a subwalk of $W$ from $x$ to $x'$ and $W_y$ is a subwalk of $W$ from $y'$ to $y$, is a tolled $(x,y)$-walk containing all vertices of $C$.
	
	\bigskip \noindent $(\ref{ite:Cxy}) \Rightarrow (\ref{ite:simple})$ Direct from the definitions.
	$\hfill\square$
\end{proof}

We conclude this section introducing the hull characteristic families.

If $F$ is a concave set of a convexity ${\cal C}$ on a set $X$, then every hull set of ${\cal C}$ contains at least one vertex of $F$. We define the {\em granularity of $F$ under ${\cal C}$} as the maximum integer $g(F)$ such that every hull set of ${\cal C}$ has at least $g(F)$ vertices of $F$.
Let ${\cal F}$ be a family of pairwise disjoint concave sets of ${\cal C}$. The {\em granularity of ${\cal F}$} is the sum of the granularities of its members.
We say that ${\cal F}$ is a {\em hull characteristic family of ${\cal C}$} if the hull number of ${\cal C}$ is equal to the granularity of ${\cal F}$.

The problem of computing the hull number of ${\cal C}$ can be reduced to the one of finding a hull characteristic family of ${\cal C}$ and computing the granularity of each of its members. The family formed only by $X$ is itself a trivial hull characteristic family of ${\cal C}$, but it brings no advantage of the use of this notion for determining the hull number of ${\cal C}$.
The number of hull characteristic families of ${\cal C}$ can be an exponential on the size of $X$. For instance, every partition of the vertex set $V(G)$, where $G$ is a complete graph, is a hull characteristic family of the toll convexity of $G$, since the toll hull number of $G$ is $|V(G)|$ when $G$ is a complete graph.
An example of a non-trivial hull characteristic family in toll convexity is the family ${\cal C} = \{S_1 = \{v_1,v_2,v_3\}, S_2 = \{v_{10},v_{11},v_{12}\}\}$ of vertices of the graph $G$ of Figure~\ref{fig:gran}. One can use Lemma~\ref{lem:tollconvexset} to see that the members of $\cal C$ are really t-concave sets. In fact, this lemma can be used to show that all vertices of $S_1$ are extreme vertices, then $g(S_1) = 3$. Since $S_1$ is not a toll hull set of $G$, the toll hull number of $G$ is at least 4. Now, considering the tolled walks $v_3v_4v_6v_8v_{10}v_{11}$ and $v_1v_5v_7v_9v_{12}v_{11}$, we conclude that $S_1 \cup \{v_{11}\}$ is a toll hull set of $G$, that $g(S_2) = 1$, and also that the toll hull number of $G$ is $4$.

\section{The algorithm} \label{sec:hullnumber}

The central idea of the proposed algorithm is to find a toll hull characteristic family $\cal C$ of the input graph such that the granularity of each member of $\cal C$ can be determined in polynomial time. In order to get this, the algorithm begins constructing families ${\cal F}$ and ${\cal M}$ of vertex sets such that $\semborda{F} \cap \semborda{F'} = \varnothing$ for any $F, F' \in {\cal F} \cup {\cal M}$.
During the algorithm, the members of $F \in {\cal F}$ such that $\semborda{F}$ is not t-concave can be joined with other members of ${\cal F} \cup {\cal M}$ so that, at the end, the sets $\semborda{F}$ that are t-concave sets form the desired family. The following classification of the t-concave sets $F$ of a graph is useful to accomplish this task.

$$\mbox{Type of } F =
\left\{
\begin{array}{rl}
1 ,& \mbox{ if } (F,N(F)) \mbox{ is not complete} \\
2 ,& \mbox{ if } (F,N(F)) \mbox{ is complete and } F \mbox{ is not a clique} \\
3 ,& \mbox{ if } F \cup N(F) \mbox{ is a clique} \\
\end{array}
\right.
$$

\begin{lemma} \label{lem:granularity}
If $F$ is a t-concave set of a graph $G$, then $g(F) \geq i$ if the type of $F$ is $i \in \{1,2\}$ and $g(F) = |F|$ if the type of $F$ is $3$.
\end{lemma}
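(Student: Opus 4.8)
The plan is to treat the three types separately, using the characterization of toll intervals from Lemma~\ref{lem:tollconvexset} to show that toll hull sets cannot avoid large portions of $F$.

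First consider type~$3$, where $F \cup N(F)$ is a clique. The key observation is that a vertex $v \in F$ is simplicial in $G$: its neighborhood $N(v) \subseteq (F - \{v\}) \cup N(F) \subseteq F \cup N(F)$ is a clique. Moreover, I claim each $v \in F$ is a toll extreme vertex, i.e.\ $V(G) - \{v\}$ is t-convex. Suppose $v$ lies on a tolled $(x,y)$-walk with $x, y \neq v$. If $v$ is adjacent to both $x$ and $y$ then, by the defining conditions of a tolled walk, $v$ must be both the second and the second-to-last vertex, forcing $W = xvy$ with $xy \notin E(G)$; but $x, y \in N(v) \subseteq N[v]$ which is a clique, contradiction. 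So $v$ is non-adjacent to, say, $x$. Then $N[x] - \{v\}$ must not separate $v$ from $y$ (Lemma~\ref{lem:tollconvexset}), but every neighbor of $v$ other than possibly $x$ lies in $N(v) \subseteq N[x]$ when $v$ is adjacent to $x$ — more carefully, since $N(v)$ is a clique and $v \notin N[x]$, any path from $v$ leaving $v$ passes through $N(v)$; using that $N(v) \cup \{v\}$ is a clique one shows $N[x] - \{v\}$ (or $N[y]-\{v\}$) does separate $v$, giving the contradiction. Hence every $v \in F$ is toll extreme, so $F \subseteq Ext_t(G)$, and since $Ext_t(G)$ is contained in every toll hull set, $g(F) \geq |F|$; as trivially $g(F) \leq |F|$, we get $g(F) = |F|$.

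Next consider type~$2$: $(F, N(F))$ is complete but $F$ is not a clique, so there exist non-adjacent $a, b \in F$. We must show every toll hull set $S$ contains at least two vertices of $F$; since $F$ is t-concave, $S$ meets $F$ in at least one vertex, so it suffices to rule out $|S \cap F| = 1$. Suppose $S \cap F = \{w\}$. Since $\langle S \rangle = V(G)$ and $F$ is t-concave, the toll convex hull generated from $S$ must reach all of $F$ through tolled walks, but — and this is the crucial point — once we are forced to stay inside the t-concave set $F$, every tolled $(x,y)$-walk with endpoints in $F$ that leaves and re-enters $F$ must do so through $N(F)$, and since $(F, N(F))$ is complete, any vertex of $N(F)$ is adjacent to every vertex of $F$. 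This adjacency will block the walk from being tolled unless it behaves essentially like a walk within $F$: an entry vertex $z \in N(F)$ adjacent to the endpoint $w$ forces $z$ to be the second vertex, and then the next vertex back in $F$ is also adjacent to $w$, so no new vertex of $F$ can be separated from $w$ by $N[w] - \{v\}$. The argument is formalized via Lemma~\ref{lem:tollconvexset}: with only one seed $w$ in $F$, for any $v \in F - \{w\}$, the set $N[w] - \{v\}$ contains all of $N(F)$ (complete to $F$) together with $w$'s neighbors inside $F$, and this separates $v$ from everything outside $F$, so $v \notin [w, y]$ for any $y \notin F$; iterating, the hull never grows past $\{w\}$ inside $F$. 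Hence $g(F) \geq 2$.

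Type~$1$ is the easiest: $g(F) \geq 1$ holds for any nonempty t-concave set $F$, since — as noted right before the definition of granularity — every hull set of a convexity contains at least one vertex of each concave set. So there is nothing further to prove for $i=1$.

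I expect the main obstacle to be the type~$2$ case: making precise the claim that a single seed vertex in $F$ cannot generate, via iterated toll intervals, any other vertex of $F$. The delicate part is that the toll convex hull is built by repeatedly taking intervals, so one must argue that $F$ remains "sealed" at every stage — equivalently, that $V(G) - (F - \{w\})$ stays t-convex as long as only $w$ has been selected inside $F$ — and this requires carefully invoking the separation criterion of Lemma~\ref{lem:tollconvexset} using the completeness of $(F, N(F))$ to guarantee the required separator $N[w]-\{v\}$ or $N[y]-\{v\}$ always exists. The type~$3$ argument, while conceptually clean (it is just "$F$ consists of toll extreme vertices"), also needs the separation lemma applied carefully to handle the sub-case where $v$ is adjacent to exactly one endpoint of the walk.
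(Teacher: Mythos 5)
Your type~1 case and type~2 case follow essentially the paper's own route. For type~2 you correctly identify the key point: if $S\cap F=\{x\}$, then for any $v\in F-\{x\}$ and any $y\notin F$ with $xy\notin E(G)$ the set $N[x]-\{v\}$ contains $N(F)$ (by completeness of $(F,N(F))$) and hence separates $v$ from $y$, so Lemma~\ref{lem:tollconvexset} forbids $v\in[x,y]$; together with t-concavity of $F$ (which blocks tolled walks between two vertices outside $F$ from entering $F$) this shows $(V(G)-F)\cup\{x\}$ is t-convex, a contradiction. The paper phrases this via ``the first vertex of $F-\{x\}$ to enter the hull'' rather than your iteration, but the content is the same.

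The type~3 case, however, has a genuine gap: nowhere in that argument do you use that $F$ is t-concave, so what you are actually trying to prove is that every simplicial vertex is toll extreme --- which the paper explicitly notes is false. Concretely, in your sub-case ``$v$ non-adjacent to $x$'' you assert that $N[x]-\{v\}$ (or $N[y]-\{v\}$) separates $v$, but when $x\notin N[v]$ there is no reason for $N(v)\subseteq N[x]$, and no separation need hold. For example, in the tree with center $c$ and three legs $c a_1 a_2$, $c b_1 b_2$, $c d_1 d_2$, the leaf $a_2$ is simplicial and $\{a_2\}\cup N(a_2)$ is a clique, yet $a_2$ lies on the tolled $(b_2,d_2)$-walk $b_2 b_1 c a_1 a_2 a_1 c d_1 d_2$, and neither $N[b_2]-\{a_2\}$ nor $N[d_2]-\{a_2\}$ separates $a_2$ from the other endpoint. (Here $\{a_2\}$ is of course not t-concave, which is exactly the hypothesis your argument ignores.) The missing step is the one the paper takes: since $F$ is t-concave, any tolled $(x,y)$-walk through $v\in F$ must have an endpoint, say $x$, in $F$; then $x$ and $v$ both lie in the clique $F\cup N(F)$, so $N(v)\subseteq F\cup N(F)\subseteq N[x]$ and $y\notin F\cup N(F)\supseteq N[x]$ (else $xy\in E(G)$ and the walk is the edge $xy$), whence $N[x]-\{v\}$ --- the separator around the endpoint $v$ \emph{is} adjacent to, not the one it is non-adjacent to --- separates $v$ from $y$, contradicting Lemma~\ref{lem:tollconvexset}.
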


\begin{proof}
The case $t = 1$ is trivial. For $t = 2$, suppose for contradiction that $S$ is a toll hull set of $G$ such that $S \cap F = \{x\}$. The type of $F$ implies $|F| \ge 2$. Since $F$ is t-concave and $F - \{x\} \subset \langle S \rangle$ there is, for some $y \not\in F$, a tolled $(y,x)$-walk containing some vertex $v \in F - \{x\}$. However, since $N(F) \subset N[x]$, $N[x] - \{v\}$ separates $v$ from $y$, which contradicts Lemma~\ref{lem:tollconvexset}.

Finally consider $t = 3$. We claim that all vertices of $F$ are toll extreme vertices, which implies that $F \subseteq S$. Suppose the contrary and let $W$ be a tolled $(x,y)$-walk containing some vertex $v \in F - \{x,y\}$. Since $F$ is t-concave, at least one, say $x$, belongs to $F$. Since $F \cup N(F)$ is a clique, $y \not\in F \cup N(F)$. Now, the fact that $N[x] - \{v\}$ separates $v$ from $y$ contradicts Lemma~\ref{lem:tollconvexset}.
$\hfill\square$	
\end{proof}

An example of a t-concave set with granularity strictly bigger than its type is the set $F = \{v_1,v_2,v_3,v_4,v_5,v_6,v_7\}$ of Figure~\ref{fig:gran}, since the type of $F$ is $1$ and $g(F) \geq 3$ because vertices $v_1,v_2,v_3$ are toll extreme vertices of the graph. 

Once a t-concave set $F$ is formed by the algorithm, it can be chosen in a later iteration to compose another t-concave set. The vertices of $F$ that will be chosen to constitute the minimum toll hull set returned by the algorithm depends of the type of $F$ and of other properties that $F$ has at the moment that $F$ is formed. They are detailed in the following numbered choices.
The first 3 choices are for t-concave sets of type~1.

\begin{choice} \label{cho:T1}
add to $S$ a vertex $u \in \semborda{F^\bullet}$ having a non-neighbor in $\borda{F^\bullet}$	
\end{choice}

\begin{choice} \label{cho:T1F0}
add to $S$ a vertex $u \in \semborda{F^\bullet}$ having a non-neighbor $u'$ in $\borda{F^\bullet}$ and a non-neighbor $u''$ in $\borda{F^\circ}$, and there are different $F_1, F_2 \in {\cal M'} \cup {\cal F'}$ such that $u \in \semborda{F_1}$ and $\borda{F^\bullet} \subset F_2$.
\end{choice}

\begin{choice} \label{cho:T1FJ}
add to $S$ a vertex $u \in \semborda{F^\bullet}$ having a non-neighbor in $\borda{F^\bullet}$ if there are different $F_1, F_2 \in {\cal M'} \cup {\cal F'}$ such that $u \in \semborda{F_1}$ and $\borda{F^\bullet} \subset F_2$.
\end{choice}

The remaining 5 choices are for t-concave sets of type 2.

\begin{choice} \label{cho:T2}
add to $S$ non-adjacent vertices $u_1, u_2 \in \semborda{F^\bullet}$.
\end{choice}

\begin{choice} \label{cho:T20T1a}
add to $S$ non-adjacent vertices $u_1, u_2 \in \semborda{F_1}$ for some $F_1 \in {\cal M'} \cup {\cal F'}$ both having non-neighbors $u'_1, u'_2 \in \borda{F^\circ}$, respectively, such that $N[u_1]$ does not separate $u_2$ from $u'_1$, and $N[u_2]$ does not separate $u_1$ from $u'_2$.
\end{choice}

\begin{choice} \label{cho:T20T1b}
for $i \in \{1,2\}$, add to $S$ a vertex $u_i \in \semborda{F_i}$ for some $F_i \in {\cal M'} \cup {\cal F'}$ having a non-neighbor $u'_i \in \borda{F^\circ}$ such that $F_1 \ne F_2$, $N[u_1]$ does not separate $u_2$ from $u'_1$, and $N[u_2]$ does not separate $u_1$ from $u'_2$.
\end{choice}

\begin{choice} \label{cho:T21T1}
add to $S$ a vertex $u_2 \in \semborda{F_2}$ for some $F_2 \in {\cal M'} \cup ({\cal F'} - \{F_1\})$ having a non-neighbor in $\borda{F^\circ}$ where $\semborda{F_1}$ is t-concave of type~$1$.
\end{choice}

\begin{choice} \label{cho:T21T2}
add to $S$ a vertex $u_2 \in \semborda{F_2}$ for some $F_2 \in {\cal M'} \cup {\cal F'} - \{F_1\}$ where $\semborda{F_1}$ is t-concave of type~$1$.
\end{choice}

\begin{algorithm}[h!]
	
	\SetKwInOut{Input}{input}
	\SetKwInOut{Output}{output}
	
	\Input{A graph $G$}
	\Output{A minimum toll hull set of $G$}
	
	\If{$V(G)$ is a clique}{
		\Return $V(G)$
	}
	\If{$G$ is prime}{ \label{lin:prime}
		\Return two non-adjacent vertices of $G$
	}
	
	compute the mp-subgraphs of $G$ \label{lin:decomposition}
	
	${\cal F} \leftarrow \{F : F$ is the vertex set of an extremal mp-subgraph of $G \}$ \label{lin:extremal}

	${\cal M} \leftarrow \{F : F$ is the vertex set of a non-extremal mp-subgraph of $G$$\}$ \label{lin:othermp}
	
	$S \leftarrow \varnothing$
	
	\For{$F \in {\cal F}$}{ \label{lin:mp-subgraph-beg}
		\If{$\semborda{F}$ is t-concave of type $1$}{apply Choice~\ref{cho:T1} \label{lin:choT1}}
		\If{$\semborda{F}$ is t-concave of type $2$}{apply Choice~\ref{cho:T2} \label{lin:choT2}}
		\If{$\semborda{F}$ is t-concave of type $3$}{add $\semborda{F}$ to $S$} \label{lin:mp-subgraph-end}
	} 

	\While{there is $F^\circ \in {\cal F}$ such $\semborda{F^\circ}$ is not t-concave and $\borda{F^\circ} \subset F$ for some $F \in {\cal M} \cup {\cal F} - \{F^\circ\}$} { \label{lin:F}
		
		${\cal M'} \leftarrow \{ F : F \in {\cal M}$ and $\borda{F^\circ} \subseteq F\}$ \label{lin:H}
		
		${\cal M} \leftarrow {\cal M} - {\cal M'}$ \label{lin:removeM}
		
		${\cal F'} \leftarrow \{ F : F \in {\cal F}$ and $\borda{F^\circ} \subseteq F\}$ \label{lin:H2}
		
		$F^\bullet \leftarrow \underset{F \in {\cal M'} \cup {\cal F'}}{\bigcup} F$ \label{lin:F3}
		
		${\cal F} \leftarrow ({\cal F} - {\cal F'}) \cup \{F^\bullet\}$ \label{lin:join}
		
		\If{$\semborda{F^\bullet}$ is t-concave of type $i$}{ \label{lin:concave}
			
			let $k$ be the number of members $F$ of ${\cal F'}$ such that $\semborda{F}$ is t-concave \label{lin:F*beg}
			
			\If{$i = 1$ and $k = 0$}{if possible, apply Choice~\ref{cho:T1F0}; else apply Choice~\ref{cho:T1FJ} \label{lin:choT1k0}}
			
			\If{$i = 2$ and $k = 0$}{if possible, apply Choice~\ref{cho:T20T1a}; else apply Choice~\ref{cho:T20T1b} \label{lin:choT2k0}}
			
			\If{$i = 2$ and $k = 1$}{if possible, apply Choice~\ref{cho:T21T1}; else apply Choice~\ref{cho:T21T2} \label{lin:choT2k1}} \label{lin:F*end}
		}
	}
	\Return $S$ \label{lin:return}
	
	\caption{Minimum toll hull set \label{alg:thn}}
	
\end{algorithm}

\begin{lemma} \label{lem:F}
The following sentences hold for $F$ and $F'$ chosen at the same time from ${\cal F} \cup {\cal M}$ in Algorithm~$\ref{alg:thn}$.

\begin{enumerate}[$(i)$]
	\item If $F \in {\cal F}$, then $\semborda{F} \ne \varnothing$. \label{ite:non-empty}
	\item $\semborda{F} \cap \semborda{F'} = \varnothing$. \label{ite:disjoint}	
	\item $G[F - C]$ is connected for every clique $C \subseteq F'$. \label{ite:maximal}
	\item $F \cap F'$ is a clique. \label{ite:FM}
	\item If $\borda{F} \not\subseteq F''$ for every $F'' \in ({\cal F} \cup {\cal M}) - F$, then $G-F$ is disconnected. \label{ite:border-not-contained}
\end{enumerate}
\end{lemma}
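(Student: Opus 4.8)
The plan is to prove all five items at once by induction on the number of iterations of the \textbf{while} loop of Algorithm~\ref{alg:thn}, carrying along as an extra invariant the following structural fact $(\star)$: at every moment ${\cal F}\cup{\cal M}$ partitions the vertex sets of the mp-subgraphs of $G$, and for each member $F$ the mp-subgraphs of $G$ lying inside $F$ induce a connected subtree ${\cal R}_F$ of the clique-decomposition tree of $G$. By Lemma~\ref{lem:prime} this is equivalent to saying that the mp-subgraphs of $G[F]$ are exactly the mp-subgraphs of $G$ contained in $F$, so in particular $G[F]$ is connected. Invariant $(\star)$ holds before the loop, where every member is a single mp-subgraph. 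To see that it survives the join of lines~\ref{lin:H}--\ref{lin:join}, one observes that ${\cal M}'\cup{\cal F}'$ is exactly the set of members containing the set $\borda{F^\circ}$ (which is nonempty since $G$ is connected and $F^\circ\neq V(G)$ by the loop condition), that the members containing a fixed vertex form a connected set in the tree obtained from the decomposition tree by contracting each member to a point, and that an intersection of subtrees of a tree is again a subtree; hence ${\cal M}'\cup{\cal F}'$ is connected in that contracted tree, so $F^\bullet=\bigcup_{H\in{\cal M}'\cup{\cal F}'}H$ is again a connected union of mp-subgraphs, and replacing ${\cal M}'\cup{\cal F}'$ by $\{F^\bullet\}$ preserves the partition.

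Granting $(\star)$, fix a moment of the execution and distinct members $F,F'$, with disjoint subtrees ${\cal R}_F,{\cal R}_{F'}$; let $\{M,M'\}$, $M\in{\cal R}_F$, be the first edge of the tree path joining them and $S=M\cap M'$ its separator, which is a minimal clique separator of $G$. If $v\in F\cap F'$, then $v$ belongs to an mp-subgraph of ${\cal R}_F$ and to one of ${\cal R}_{F'}$, and the subtree of mp-subgraphs containing $v$ is connected, hence uses $\{M,M'\}$; thus $v\in S$. So $F\cap F'\subseteq S$ is a clique, proving item~$(iv)$. Since all of $F$ lies in the mp-subgraphs on the near side of $\{M,M'\}$ while $S$ is a minimal separator, every vertex of $S$ has a neighbour outside $F$, so $S\subseteq\borda F$; then $\semborda F\cap F'=(F\cap F')-\borda F=\varnothing$, and as $\semborda{F'}\subseteq F'$ this yields item~$(ii)$. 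For item~$(i)$, a vertex of $F^\circ$ whose neighbours all lie in $F^\circ$ has all its neighbours in $F^\bullet\supseteq F^\circ$, so $\semborda{F^\bullet}\supseteq\semborda{F^\circ}$; since for an extremal mp-subgraph $F$ with witness $F'$ one has $\borda F\subseteq F\cap F'$ and hence $\semborda F\supseteq F-F'\neq\varnothing$, item~$(i)$ follows by induction.

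For item~$(iii)$, a clique $C\subseteq F'$ satisfies $C\cap F\subseteq F\cap F'\subseteq S$, so it suffices to show $G[F-K]$ is connected for every $K\subseteq S$, the worst case being $K=S$. I would prove $G[F-S]$ connected by induction on $|{\cal R}_F|$, peeling off a leaf mp-subgraph $M_\ell\neq M$ of ${\cal R}_F$: then $G[M_\ell-S]$ is connected because $M_\ell$ is prime and $S\cap M_\ell$ is a clique properly contained in $M_\ell$, and $M_\ell-S$ stays attached to the rest of $F-S$ through $(M_\ell\cap M_\ell'')-S$, where $M_\ell''$ is the neighbour of $M_\ell$ inside ${\cal R}_F$, this set being nonempty because the minimal separator $M_\ell\cap M_\ell''$ cannot be contained in $S$ along the tree path; finally each leftover vertex of $S-K$ keeps a neighbour in $F-S$ by the same minimality argument, so $G[F-K]$ stays connected. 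For item~$(v)$: if $G-F$ is connected then, by $(\star)$, the subtree ${\cal R}_F$ has at most one edge leaving it in the decomposition tree, since two such edges to mp-subgraphs $M_1',M_2'\notin{\cal R}_F$ would force the nonempty sets $M_1'-F$ and $M_2'-F$ into different components of $G-F$; if there is exactly one such edge, its separator $S'$ satisfies $\borda F=S'$ and lies in the member holding the mp-subgraph on its far side, and if there is none then $F=V(G)$ and no other member exists — the contrapositive is item~$(v)$.

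The step I expect to be the main obstacle is the connectivity claim inside item~$(iii)$: that deleting an \emph{outer} separating clique from the connected union of mp-subgraphs $G[F]$ neither disconnects it nor isolates the surviving vertices of that clique. This is exactly where one must use carefully that the separators of the clique decomposition are \emph{minimal} separators — so every one of their vertices has a neighbour on each relevant side — and how these separators nest along the tree; once item~$(iii)$ is established, the remaining items are essentially bookkeeping on top of $(\star)$.
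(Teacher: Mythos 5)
Your overall route -- maintaining the invariant that every member of ${\cal F}\cup{\cal M}$ is a connected subtree of the atom tree (the junction tree of the mp-subgraphs, whose existence with the vertex-subtree property goes back to Leimer's running-intersection property) -- is a legitimate and in places cleaner alternative to the paper, which argues directly by taking a first/least iteration at which each property fails. Your derivations of items $(i)$, $(ii)$ and $(iv)$ from that invariant are sound: in particular, routing $F\cap F'$ through the separator on the first edge of the tree path is a nice way to get $(iv)$ and then $(ii)$.

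The gap is exactly where you suspected, in item $(iii)$, and it is not merely a technical obstacle: item $(iii)$ is \emph{false} for an arbitrary connected-subtree union of mp-subgraphs, so it cannot follow from your invariant $(\star)$ alone. Your key assertion -- that the internal separator $M_\ell\cap M_\ell''$ of an edge of ${\cal R}_F$ cannot be contained in the external separator $S=M\cap M'$ -- fails in general: distinct edges of the atom tree may carry identical or nested clique minimal separators (take $K_{1,3}$ with centre $b$; all three atoms pairwise intersect in $\{b\}$). If $F$ were the union of two atoms meeting in $\{b\}$ and a third member contained $b$, then $C=\{b\}$ is a clique of that member and $G[F-C]$ is disconnected. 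What rules this out in the algorithm is not the tree structure but the merge rule of lines~\ref{lin:H}--\ref{lin:F3}: $F^\bullet$ absorbs \emph{every} member containing $\borda{F^\circ}$, so no surviving member $F'$ can contain a clique that meets $F$ in an internal separator of ${\cal R}_F$ -- this is precisely the contradiction the paper's proof of $(\ref{ite:maximal})$ extracts (``the member containing $M'$ should have been chosen to compose $F^\bullet$''). Your proof of $(iii)$ never invokes the merge rule, so the induction ``peeling off a leaf'' cannot be completed as written; you would need to add an argument that, for the members actually produced by the algorithm, every internal edge separator of ${\cal R}_F$ has a vertex outside every other member's intersection with $F$. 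Item $(v)$ is essentially right but also leans on unproved assertions (e.g.\ $M_1'-F\neq\varnothing$, and that $\borda{F}$ equals the unique outgoing separator); these are repairable, whereas $(iii)$ needs the missing algorithmic ingredient.
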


\begin{proof}
\bigskip\noindent$(\ref{ite:non-empty})$ and~$(\ref{ite:disjoint})$
The mp-subgraphs of $G$ are obtained in line~\ref{lin:decomposition}. The extremal mp-subgraphs of $G$ form family ${\cal F}$ (line~\ref{lin:extremal}) and the remaining ones form family ${\cal M}$ (line~\ref{lin:othermp}).
Note that for any mp-subgraph $M$ of $G$, $M$ is the only mp-subgraph containing the vertices of $\semborda{M}$.
Since the definition of extremal mp-subgraph implies that $\semborda{M} \ne \varnothing$ for every extremal mp-subgraph $M$ of $G$, $(\ref{ite:non-empty})$ and~$(\ref{ite:disjoint})$ hold when line~\ref{lin:othermp} finishes.

The only operations performed on ${\cal M}$ are deletions and on ${\cal F}$ are deletions and inclusions. Since deletions do not interfere in the properties of items~$(\ref{ite:non-empty})$ and~$(\ref{ite:disjoint})$, we only need to consider the inclusions on ${\cal F}$.
The only line adding members to ${\cal F}$ after line~\ref{lin:extremal} is line~\ref{lin:join}, which is inside of the {\bfseries while} loop.

A set $F^\bullet$ constructed in line~\ref{lin:F3} contains the set $F^\circ \in {\cal F}$ chosen in line~\ref{lin:F} of the same iteration. Therefore, every member of ${\cal F}$ at the end of iteration $k-1$ contains the vertex set of at least one extremal mp-subgraph of $G$, which means that $\semborda{F^\bullet}$ is non-empty and $(\ref{ite:non-empty})$ does hold.

Now, note that every member belonging to ${\cal F} \cup {\cal M}$ at the end of iteration $k \ge 1$ of the {\bfseries while} loop is present in ${\cal F} \cup {\cal M}$ at the beginning of iteration $k$, except the new one, namely, the set $F^\bullet$ constructed in line~\ref{lin:F3}. Therefore, if there are two sets $F, F' \in {\cal F} \cup {\cal M}$ such that $\semborda{F} \cap \semborda{F'} \ne \varnothing$ at the end of iteration $k$ and $k$ is minimum with this property, them one of them is $F^\bullet$, say $F = F^\bullet$. However, observe that $F'$ and $\borda{F'}$ do not change from the beginning to the end of iteration $k$, which implies that $\semborda{F^\bullet} \cap \semborda{F'} = \varnothing$, a contradiction. Then $(\ref{ite:disjoint})$ does hold.

\bigskip\noindent$(\ref{ite:maximal})$ Suppose that $G[F - C]$ is disconnected.
If $F$ is an mp-subgraph of $G$, then it is clear that $G[F - C]$ is connected.
Therefore, we can choose $F$ as the first set added to ${\cal F}$ in line~\ref{lin:join} such that $G[F - C]$ is disconnected. Write $C' = C \cap F$. Note that $C'$ is not equal to any mp-subgraph of $G$ forming $F$ and that there is an mp-subgraph $M'$ of $G$ composing $F'$ which contains $C$. Furthermore, since $M - C'$ is connected for every mp-subgraph $M$ of $G$, there are mp-subgraphs $M_1$ and $M_2$ of $G$ doing part of $F$ such that $V(M_1) \cap V(M_2) \subseteq C'$ and $V(M_1) - C'$ belongs to connected component of $G[F - C]$ different of the one containing $V(M_2) - C'$. Hence, the set $F^\circ$, chosen in line~$\ref{lin:F}$ of some iteration $k$ of the {\bfseries while} loop, has the property that $\borda{F^\circ} \subseteq V(M_1) \cap V(M_2)$. Therefore, the member of ${\cal F} \cup {\cal M}$ containing $M'$ should have been chosen to compose the set $F^\bullet$, constructed in line~\ref{lin:F3} of iteration $k$, which is a subset of $F$, yielding a contradiction because every mp-subgraph of $G$ belongs to exactly one member of ${\cal F} \cup {\cal M}$ during all the algorithm.

\bigskip\noindent$(\ref{ite:FM})$ Since the intersection of two mp-subgraphs is a clique, we can consider that $k$ is the number of the least iteration of the {\bfseries while} loop that finishes with ${\cal F} \cup {\cal M}$ having members $F$ and $F'$ whose intersection is not a clique. Without loss of generality, we can assume that $F = F^\bullet$, where $F^\bullet$ is the set added to ${\cal F}$ in line~\ref{lin:join} of iteration $k$. Observe that the intersection of $F'$ and every member of ${\cal F'} \cup {\cal M'}$ of iteration $k$ is a clique. Therefore, $F'$ intersects with at least two members $F_1, F_2 \in {\cal F'} \cup {\cal M'}$ and there are $w_1 \in (F' \cap F_1) - F_2$ and $w_2 \in (F' \cap F_2) - F_1$. We know that there is a vertex $v \in \borda{F^\circ} - F'$ because otherwise $F'$ would have been chosen to be part of $F^\bullet$.
Let $P_1$ be an induced $(w_1,v)$-path of $G[F_1 - F']$ and
let $P_2$ be an induced $(w_2,v)$-path of $G[F_2 - F']$. These paths exist because~$(\ref{ite:maximal})$ guarantee that $G[F_1 - F']$ and $G[F_2 - F']$ are connected graphs once we know that the intersection of $F'$ with any member of f ${\cal F'} \cup {\cal M'}$ of iteration $k$ is a clique. Now, from the concatenation of $P_1$ and $P_2$ we can obtain an induced $(w_1,w_2)$-path whose internal vertices do not belong to $F'$, which contradicts the hypothesis that $F'$ is composed by mp-subgraphs.

\bigskip\noindent$(\ref{ite:border-not-contained})$
Let $F$ be such that $\borda{F}$ is not contained in any other member of ${\cal F} \cup {\cal M}$ and let $M_1$ be an mp-subgraph of $G$ not composing $F$ such that $C_1 = F \cap M_1$ is maximal. Therefore, there is an mp-subgraph $M_2 \ne M_1$ of $G$ not composing $F$ such that $C_2 = F \cap M_2$ satisfies $v_2 \in C_2 - C_1$. By the maximality of $C_1$, there is $v_1 \in C_1 - C_2$. Choose $u_i \in M_i - C_i$ for $i \in \{1,2\}$. Vertex $u_i$ exists because $M_i$ is an mp-subgraph for $i \in \{1,2\}$.

First, we show that $C_1 \cup C_2$ is a clique. Suppose the contrary. By~$(\ref{ite:FM})$, there are $u_1 \in C_1 - C_2$ and $u_2 \in C_2 - C_1$ such that $u_1u_2 \not\in E(G)$. Therefore, if $G - F$ is connected, since both $u_1$ and $u_2$ have neighbors in $G - F$, there is an induced $(u_1,u_2)$-path whose internal vertices do not belong to $F$, which contradicts Lemma~\ref{lem:prime}.

Next, we show that $V(M_1) \cap V(M_2) \subset F$. Suppose the contrary and let $w \in V(M_1) \cap V(M_2) - F$. By the choice of $M_1$ and $M_2$ and the fact that $C_1 \cup C_2$ is a clique, $w$ has a non-neighbor $w' \in C_1 \cup C_2$. Assume that $w' \in C_1$. (The case where $w' \in C_2$ is analogue.) Therefore, using $F$ and $M_2$, we can find an induced $(w,w')$-path whose internal vertices do not belong to $M_1$, which contradicts the assumption that $M_1$ is an mp-subgraph once $ww' \not\in E(G)$.

Now, supposing that $G - F$ is connected and let $P$ be a $(u_1,u_2)$-path of $G - F$, we reach in a contradiction because the union of $M_1$, $M_2$, and $P$ is a prime subgraph of $G$ properly containing an mp-subgraph.
$\hfill\square$
\end{proof}

\begin{corollary} \label{cor:non-extremal}
	If $M$ is a non-extremal mp-subgraph of $G$, then $G - M$ is disconnected.
\end{corollary}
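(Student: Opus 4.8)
The plan is to obtain Corollary~\ref{cor:non-extremal} from part~$(\ref{ite:border-not-contained})$ of Lemma~\ref{lem:F}. Right after lines~\ref{lin:extremal} and~\ref{lin:othermp} of Algorithm~\ref{alg:thn}, the family ${\cal F}\cup{\cal M}$ is exactly the collection of vertex sets of the mp-subgraphs of $G$; in particular, a non-extremal mp-subgraph $M$ is a member of ${\cal F}\cup{\cal M}$ at that moment, so Lemma~\ref{lem:F}$(\ref{ite:border-not-contained})$ applies with $F=M$. Hence I would reduce the statement to showing that $\borda{M}$ is not contained in any mp-subgraph $F''\ne M$, and I would prove this by contraposition: if $\borda{M}\subseteq F''$ for some mp-subgraph $F''\ne M$, then $M$ is extremal, contradicting the hypothesis.

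The heart of the argument is the auxiliary fact that $M\cap\hat F\subseteq\borda{M}$ for every mp-subgraph $\hat F\ne M$. To see this, take $u\in M\cap\hat F$ and suppose $N(u)\subseteq M$. The set $C:=M\cap\hat F$ is a clique (the intersection of two mp-subgraphs is a clique), and $N(u)\cap\hat F\subseteq C-\{u\}$. Since $\hat F\ne M$ and both are maximal prime subgraphs, $\hat F-M\ne\varnothing$; and $\hat F$ is connected (the same property of mp-subgraphs used, for $M-C'$, in the proof of part~$(\ref{ite:maximal})$). If $C-\{u\}=\varnothing$, this already forces $u$ to be isolated in the connected graph $\hat F$, which has at least two vertices --- impossible. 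Otherwise $C-\{u\}$ is a clique that separates $u$ from $\hat F-M$ in $\hat F$, because deleting it leaves $u$ with no neighbours while still retaining the nonempty set $\hat F-M$; this contradicts the primality of $\hat F$. Therefore $u\in\borda{M}$.

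With this in hand, suppose $\borda{M}\subseteq F''$ for an mp-subgraph $F''\ne M$. Then for every mp-subgraph $\hat F\ne M$ we obtain $M\cap\hat F\subseteq\borda{M}\subseteq F''$, and since also $M\cap\hat F\subseteq M$, this gives $M\cap\hat F\subseteq M\cap F''$; so $F''$ witnesses that $M$ is extremal, a contradiction. Hence $\borda{M}\not\subseteq F''$ for every mp-subgraph $F''\ne M$, and Lemma~\ref{lem:F}$(\ref{ite:border-not-contained})$ yields that $G-M$ is disconnected. I expect the one genuinely delicate point to be the auxiliary fact $M\cap\hat F\subseteq\borda{M}$, which rests on the primality and connectedness of mp-subgraphs; the remaining steps are routine bookkeeping about the contents of ${\cal F}$ and ${\cal M}$ at line~\ref{lin:mp-subgraph-beg} and about unwinding the definition of an extremal mp-subgraph.
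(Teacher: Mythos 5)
Your proposal is correct and follows the same route as the paper: the paper's proof of this corollary is a one-line appeal to Lemma~\ref{lem:F}~$(\ref{ite:border-not-contained})$, applied to $M$ as a member of ${\cal F} \cup {\cal M}$ just after line~\ref{lin:othermp}. The only difference is that you explicitly verify the hypothesis of that item---that $\borda{M}$ is contained in no other mp-subgraph, via the auxiliary fact $M \cap \hat F \subseteq \borda{M}$ and the definition of extremal---which the paper treats as immediate; your verification of it is sound.
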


\begin{proof}
It is a special case of Lemma~\ref{lem:F}~$(\ref{ite:border-not-contained})$ when $F$ is a non-extremal mp-subgraph. $\hfill\square$
\end{proof}

\begin{lemma} \label{lem:concave}
The following sentences hold at any time of Algorithm~$\ref{alg:thn}$ for $F \in {\cal F}$ if $\semborda{F}$ is t-concave.
	
\begin{enumerate}[$(1)$]
	\item $\borda{F}$ is contained in an mp-subgraph not composing $F$. \label{ite:inter}
	\item $G[\semborda{F}]$ is connected. \label{ite:connected}
	\item $\borda{F}$ is a clique. \label{ite:clique}
	\item $\borda{F} = N(\semborda{F})$. \label{ite:borda}
\end{enumerate}
	
\end{lemma}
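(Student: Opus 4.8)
The plan is to extract item~$(\ref{ite:inter})$ (and the hard half of item~$(\ref{ite:borda})$) from the toll-concavity of $\semborda{F}$ via Lemma~\ref{lem:tollconvexset}, and then to obtain items~$(\ref{ite:connected})$ and~$(\ref{ite:clique})$ (and the easy half of item~$(\ref{ite:borda})$) as short consequences of item~$(\ref{ite:inter})$, Lemma~\ref{lem:F}, and the definitions. For item~$(\ref{ite:inter})$, fix the moment considered and distinguish two cases. If $\borda{F}\subseteq F''$ for some member $F''\in({\cal F}\cup{\cal M})-\{F\}$, then $\borda{F}\subseteq F\cap F''$, which is a clique by Lemma~\ref{lem:F}~$(\ref{ite:FM})$; being complete, $G[\borda{F}]$ is a prime induced subgraph of $G[F'']$, hence it extends to one of the mp-subgraphs composing $F''$ (the latter being precisely the maximal prime induced subgraphs of $G[F'']$), and such an mp-subgraph does not compose $F$, proving item~$(\ref{ite:inter})$ in this case.

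Otherwise $\borda{F}$ lies in no member other than $F$, so Lemma~\ref{lem:F}~$(\ref{ite:border-not-contained})$ gives that $G-F$ is disconnected, and I must contradict the toll-concavity of $\semborda{F}$. Choose $x$ and $y$ in distinct components of $G-F$, so $xy\notin E(G)$; since $N(D)\subseteq\borda{F}$ for every component $D$ of $G-F$, neither $N[x]$ nor $N[y]$ meets $\semborda{F}$ (nonempty by Lemma~\ref{lem:F}~$(\ref{ite:non-empty})$) nor the component of the other endpoint. Now $\borda{F}$ is a separator of $G$ whose sides include the component of $x$, the component of $y$, and (the components of) $\semborda{F}$; by Lemma~\ref{lem:tollconvexset} it suffices to find $v\in\semborda{F}$ with $N[x]$ not separating $v$ from $y$ and $N[y]$ not separating $v$ from $x$, for then $v$ lies on a tolled $(x,y)$-walk, contradicting that $V(G)-\semborda{F}$ is t-convex. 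A path from $v$ leaving $\semborda{F}$ through $\borda{F}$ and on to $x$ (resp. $y$) does the job unless $\borda{F}$ is entirely absorbed into $N[x]$ (resp. $N[y]$); ruling out this degenerate possibility — by choosing $x,y,v$ suitably, or pushing the endpoints farther from $F$ — is the crux of the lemma.

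Granting item~$(\ref{ite:inter})$, pick an mp-subgraph $M^{\ast}\supseteq\borda{F}$ not composing $F$ and let $F'\ne F$ be the member containing $M^{\ast}$; then $\borda{F}\subseteq F\cap F'$, a clique by Lemma~\ref{lem:F}~$(\ref{ite:FM})$, which is item~$(\ref{ite:clique})$, and Lemma~\ref{lem:F}~$(\ref{ite:maximal})$ with $C=\borda{F}\subseteq F'$ gives that $G[\semborda{F}]=G[F-\borda{F}]$ is connected, which is item~$(\ref{ite:connected})$. For item~$(\ref{ite:borda})$: if $u$ has a neighbour in $\semborda{F}$ and $u\notin\semborda{F}$, then $u\in F$ (else a vertex of $\semborda{F}$ would have a neighbour outside $F$), so $u\in F-\semborda{F}=\borda{F}$, giving $N(\semborda{F})\subseteq\borda{F}$. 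Conversely, if some $v\in\borda{F}$ had no neighbour in $\semborda{F}$, then by item~$(\ref{ite:clique})$ $v$ is adjacent to all of $\borda{F}-\{v\}$ and, since $G[\semborda{F}]$ is connected and nonempty, $v$ is dispensable for connecting $\semborda{F}$ to $G-F$; feeding an outside neighbour of $v$ together with a vertex lying beyond $\semborda{F}$ on the far side of $\borda{F}$ into Lemma~\ref{lem:tollconvexset} as before produces a tolled walk through $\semborda{F}$ between two vertices of $V(G)-\semborda{F}$, contradicting toll-concavity. One may also note that $N(\semborda{F})$ is a clique because $\semborda{F}$ is a connected t-concave set, consistently with items~$(\ref{ite:clique})$ and~$(\ref{ite:borda})$.

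Since items~$(\ref{ite:connected})$ and~$(\ref{ite:clique})$ rest on item~$(\ref{ite:inter})$, if the proof of item~$(\ref{ite:inter})$ secretly required connectivity of $\semborda{F}$ one would instead carry items~$(\ref{ite:inter})$, $(\ref{ite:connected})$ and~$(\ref{ite:clique})$ together by induction on the iterations of the {\bfseries while} loop, using Lemma~\ref{lem:F} at each step. The main obstacle is precisely the tolled-walk construction feeding Lemma~\ref{lem:tollconvexset} in the disconnected case of item~$(\ref{ite:inter})$ and again in item~$(\ref{ite:borda})$; everything else is bookkeeping with the clique-separator decomposition recorded in Lemma~\ref{lem:F}.
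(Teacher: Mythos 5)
There is a genuine gap, and you have located it yourself: the ``degenerate possibility'' in your proof of item~$(\ref{ite:inter})$ is not a loose end but the entire content of that item, and your choice of endpoints cannot close it. If $x$ and $y$ lie in distinct components of $G-F$, nothing prevents $x$ (indeed every vertex of its component) from dominating $\borda{F}$ --- for instance when that component is a single vertex adjacent to all of $\borda{F}$ --- and then $N[x]$ genuinely separates every $v\in\semborda{F}$ from $y$, so Lemma~\ref{lem:tollconvexset} produces no tolled walk and there is nowhere ``farther from $F$'' to push $x$. The paper's proof chooses the endpoints so that this cannot happen: assuming $\borda{F}$ is \emph{not} contained in a single mp-subgraph not composing $F$, it takes two such mp-subgraphs $M,M'$, vertices $v\in(\borda{F}\cap M)-M'$ and $v'\in(\borda{F}\cap M')-M$, and \emph{non-adjacent} $u\in V(M)-V(M')$, $u'\in V(M')-V(M)$; it then concatenates an induced $(u,v)$-path in $M-V(M')$, induced paths from $v$ and $v'$ to a common $w\in\semborda{F}$ inside $G[F-((N(u)\cap F)-\{v\})]$ and $G[F-((N(u')\cap F)-\{v'\})]$ (connected by Lemma~\ref{lem:F}~$(\ref{ite:maximal})$, since these deleted sets are cliques), and an induced $(v',u')$-path in $M'-V(M)$. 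The explicit avoidance of $N(u)$ and $N(u')$ is precisely what makes the walk tolled, i.e.\ what rules out your degenerate case. Your case~(a) is also not sound as written: a clique contained in a member $F''$ need not lie inside a single mp-subgraph \emph{composing} $F''$ (the mp-subgraph of $G$ containing a clique of $F\cap F''$ could a priori be one composing $F$, or a third one), so that branch does not deliver item~$(\ref{ite:inter})$ either. In short, neither branch of your dichotomy completes the proof of the item on which everything else rests.

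Your derivations of items~$(\ref{ite:clique})$ and~$(\ref{ite:connected})$ from item~$(\ref{ite:inter})$ via Lemma~\ref{lem:F}~$(\ref{ite:FM})$ and~$(\ref{ite:maximal})$ are correct and, for item~$(\ref{ite:connected})$, arguably cleaner than the paper's minimal-counterexample argument over the iterations of the {\bfseries while} loop --- but they are conditional on item~$(\ref{ite:inter})$. For the inclusion $\borda{F}\subseteq N(\semborda{F})$ in item~$(\ref{ite:borda})$ you again reach for a tolled-walk construction you do not complete; the paper's argument is a one-liner that you should adopt instead: if $v\in\borda{F}$ had no neighbour in $\semborda{F}$, then $\borda{F}-\{v\}$ would be a clique separator of an mp-subgraph composing $F$ that contains $\borda{F}$, contradicting primality.
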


\begin{proof}
\bigskip\noindent$(\ref{ite:inter})$ Suppose by contradiction that $\borda{F}$ is not contained in a mp-subgraph of $G$ not composing $F$. Then, there are mp-subgraphs $M$ and $M'$ of $G$ not composing $F$ such that there are $v \in (\borda{F} \cap M) - M'$ and $v' \in (\borda{F} \cap M') - M$.
By the definition of mp-subgraph, we can choose $u \in V(M) - V(M')$ and $u' \in V(M') - V(M)$ such that $uu' \not\in E(G)$.
Let $P_u$ be an induced $(u,v)$-path of $M - V(M')$ and
let $P_{u'}$ be an induced $(u',v')$-path of $M' - V(M)$.
Note that $C = (N(u) \cap F) - \{v\}$ and $C' = (N(u') \cap F) - \{v'\}$ are cliques.
By Lemma~\ref{lem:F}~$(\ref{ite:maximal})$, $G[F - C]$ and $G[F - C']$ are connected subgraphs of $G$.
Lemma~\ref{lem:F}~$(\ref{ite:non-empty})$ guarantees that there is a vertex $w \in \semborda{F}$.
Since $w$ belongs to $G[F - C]$ and to $G[F - C']$, 
there is an induced $(v,w)$-path $P_w$ in $G[F - C]$ and 
there is an induced $(v',w)$-path $P_{w'}$ in $G[F - C']$.
Since the concatenation of the paths $P_u, P_w, P_{w'}$, and $P_{u'}$ is a tolled $(u,u')$-walk containing $w$, we have a contradiction.

\bigskip\noindent$(\ref{ite:connected})$ Suppose by contradiction that $G[\semborda{F}]$ is disconnected. If $F$ was added to ${\cal F}$ in line~\ref{lin:extremal}, then $G[\semborda{F}]$ is connected because $F$ is an extremal mp-subgraph of $G$. Then, we can choose $F$ as the first set added to ${\cal F}$ in line~\ref{lin:join} such that $\semborda{F}$ is t-concave and $G[\semborda{F}]$ is disconnected. Therefore, $F$ is equal to a set $F^\bullet$ constructed in line~\ref{lin:F3} of some iteration $k$. Let $H_1$ and $H_2$ be connected components of $G[\semborda{F}]$. Hence, $H_1$ and $H_2$ belong to different members $F_1$ and $F_2$ of ${\cal F'} \cup {\cal M'}$ constructed in iteration $k$. Therefore, for the set $F^\circ$ chosen in line~\ref{lin:F} of iteration $k$, it holds that $\borda{F^\circ}$ is contained in $F_1$ and in $F_2$.
On the one hand, $\borda{F}$ contains $\borda{F^\circ}$. Since item~$(\ref{ite:inter})$ guarantees that there is an mp-subgraph $M$ not composing $F$ containing $\borda{F}$, there is a member of ${\cal F} \cup {\cal M}$ that should have been chosen to compose $\borda{F}$ in iteration $k$ that was not chosen, which is a contradiction.
On the other hand, there is a vertex of $\borda{F^\circ}$ present in $H_1$ and in $H_2$, but this is a contradiction because different connected components have disjoint vertex sets.

\bigskip\noindent$(\ref{ite:clique})$ It is a consequence of item~$(\ref{ite:inter})$ and Lemma~\ref{lem:F}~$(\ref{ite:FM})$. 

\bigskip\noindent$(\ref{ite:borda})$ By the definitions, it holds $N(\semborda{S}) \subseteq \borda{S}$ for any set $S$.
Now, if $F \in {\cal F}$ and $\semborda{F}$ is t-concave, by item~$(\ref{ite:inter})$, there are mp-subgraphs $M$ and $M'$ properly containing $\borda{F}$ such that $V(M) \subseteq F$ and $V(M') \not\subseteq F$. If some vertex $v \in \borda{F}$ had no neighbor in $\semborda{F}$, then $\borda{F} - \{v\}$ would be a clique separator of $M$. Then $\borda{F} \subseteq N(\semborda{F})$.
$\hfill\square$
\end{proof}

Whenever Algorithm~\ref{alg:thn} constructs a t-concave set $F$, some vertices of $F$ are choosen appropriatly to form the toll hull set that will be returned at the end. We show in the next result that at least one choice of each line is always possible.

\begin{lemma} \label{lem:choice}
	In Algorithm~$\ref{alg:thn}$, Choices~$\ref{cho:T1},~\ref{cho:T2},~\ref{cho:T1FJ},$ and $~\ref{cho:T21T2}$ are always possible in lines~$\ref{lin:choT1},\ref{lin:choT2},\ref{lin:choT1k0},$ and $\ref{lin:choT2k1}$, respectively, and Choice$~\ref{cho:T20T1a}$ or$~\ref{cho:T20T1b}$ is possible in line~$\ref{lin:choT2k0}$.
\end{lemma}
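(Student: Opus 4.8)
The statement bundles five assertions — one for each place where Algorithm~\ref{alg:thn} invokes a \textbf{Choice} that must always succeed: Choices~\ref{cho:T1} and~\ref{cho:T2} in the first \textbf{for} loop, Choices~\ref{cho:T1FJ} and~\ref{cho:T21T2} as the ``else'' fallbacks in lines~\ref{lin:choT1k0} and~\ref{lin:choT2k1}, and the disjunction of Choices~\ref{cho:T20T1a},~\ref{cho:T20T1b} in line~\ref{lin:choT2k0} — so the plan is a case analysis over these five situations. The recurring tools will be Lemma~\ref{lem:concave} (the shape of $\borda{F}$ and $\semborda{F}$ whenever $\semborda{F}$ is t-concave), Lemma~\ref{lem:F}~$(\ref{ite:non-empty})$--$(\ref{ite:FM})$ (non-emptiness and pairwise disjointness of the shores, connectedness of a member minus a clique, and cliqueness of intersections of members), and Lemma~\ref{lem:tollconvexset} together with Lemma~\ref{lem:walk} (to turn ``does not separate'' hypotheses into honest tolled walks).

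The two cases in the first \textbf{for} loop are immediate, with $F$ playing the role of $F^\bullet$. If $\semborda{F}$ is t-concave of type~$1$, then by definition $(\semborda{F},N(\semborda{F}))$ is not complete; in particular $N(\semborda{F})\neq\varnothing$, and $N(\semborda{F})=\borda{F}$ by Lemma~\ref{lem:concave}~$(\ref{ite:borda})$, so some $u\in\semborda{F}$ has a non-neighbour in $\borda{F}$, which is exactly Choice~\ref{cho:T1}. If $\semborda{F}$ is t-concave of type~$2$, then $\semborda{F}$ is not a clique, hence contains two non-adjacent vertices, which is exactly Choice~\ref{cho:T2}; non-emptiness of $\semborda{F}$ in both cases is Lemma~\ref{lem:F}~$(\ref{ite:non-empty})$.

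For the \textbf{while}-loop cases one always has $F^\bullet\in{\cal F}$ with $\semborda{F^\bullet}$ t-concave, so Lemma~\ref{lem:concave} applies to $F^\bullet$: $\borda{F^\bullet}$ is a clique, $\borda{F^\bullet}=N(\semborda{F^\bullet})$, and $G[\semborda{F^\bullet}]$ is connected; moreover ${\cal M'}\cup{\cal F'}$ has at least two members ($F^\circ$ and, by the loop guard, a second one containing $\borda{F^\circ}$), each a proper subset of $F^\bullet$. For Choice~\ref{cho:T1FJ} ($i=1,k=0$): because $\borda{F^\bullet}$ is a clique it lies inside one mp-subgraph of $G$, and using Lemma~\ref{lem:F}~$(\ref{ite:maximal})$ and~$(\ref{ite:FM})$ together with the way $F^\bullet$ is assembled from the members containing $\borda{F^\circ}$ one pins $\borda{F^\bullet}$ inside a member $F_2\in{\cal M'}\cup{\cal F'}$; then, exploiting that type~$1$ makes $(\semborda{F^\bullet},\borda{F^\bullet})$ not complete and that $k=0$, one selects $u\in\semborda{F^\bullet}\setminus F_2$ with a non-neighbour in $\borda{F^\bullet}$, together with a member $F_1\in{\cal M'}\cup{\cal F'}$ with $u\in\semborda{F_1}$ (this last from $N(u)\subseteq F^\bullet$ and the disjointness in Lemma~\ref{lem:F}~$(\ref{ite:disjoint})$), so that $F_1\neq F_2$ follows from $u\notin F_2$. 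For Choice~\ref{cho:T21T2} ($i=2,k=1$): letting $F_1$ be the unique member of ${\cal F'}$ with t-concave shore, one first checks that $\semborda{F_1}$ must be of type~$1$ (a type-2 or type-3 shore of a merged member being incompatible with $\semborda{F^\bullet}$ of type~$2$ together with $k=1$), and then produces $F_2\in({\cal M'}\cup{\cal F'})\setminus\{F_1\}$ with $\semborda{F_2}\neq\varnothing$ and $u_2\in\semborda{F_2}$ having a non-neighbour in $\borda{F^\circ}$: if ${\cal F'}$ has a second member this $F_2$ comes from Lemma~\ref{lem:F}~$(\ref{ite:non-empty})$, and otherwise one argues some member of ${\cal M'}$ must meet $\semborda{F^\bullet}$, for otherwise $\semborda{F^\bullet}$ would coincide with $\semborda{F^\circ}$, contradicting that the latter is not t-concave. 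Finally, for line~\ref{lin:choT2k0} ($i=2,k=0$): as $\semborda{F^\bullet}$ is not a clique, fix non-adjacent $a,b\in\semborda{F^\bullet}$; using $\borda{F^\bullet}=N(\semborda{F^\bullet})$, the connectedness of $G[\semborda{F^\bullet}]$ from Lemma~\ref{lem:concave}~$(\ref{ite:connected})$, and Lemma~\ref{lem:walk} to route tolled walks through the merged members, one locates $u_1,u_2$ in shores of merged members — in a single member (Choice~\ref{cho:T20T1a}) or in two distinct ones (Choice~\ref{cho:T20T1b}) — with non-neighbours $u'_1,u'_2\in\borda{F^\circ}$, and the non-separation conditions ``$N[u_i]$ does not separate $u_{3-i}$ from $u'_i$'' are precisely what makes those tolled walks exist, by Lemma~\ref{lem:tollconvexset}.

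The whole difficulty lies in the \textbf{while}-loop cases, and it is bookkeeping rather than a single conceptual hurdle: one must meet \emph{all} side conditions of a given \textbf{Choice} at once — the witness vertex has to sit in the shore of a \emph{merged} member, not merely in $F^\bullet$; $\borda{F^\bullet}$ has to be captured by one merged member; one needs $F_1\neq F_2$; a member with non-empty shore must be available once $k\geq 1$; and, most delicately, the non-separation conditions of Choices~\ref{cho:T20T1a} and~\ref{cho:T20T1b} must be certified, which forces an explicit construction of tolled walks through the prescribed vertices rather than an abstract argument. The disjointness and clique-intersection items of Lemma~\ref{lem:F}, the four items of Lemma~\ref{lem:concave}, and the walk-routing of Lemma~\ref{lem:walk} are exactly the ingredients that close this bookkeeping, and the ``if possible \ldots\ else \ldots'' structure of lines~\ref{lin:choT1k0}--\ref{lin:choT2k0} is what lets the weakest, always-available alternative serve as the fallback.
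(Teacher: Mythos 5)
Your treatment of the two \textbf{for}-loop cases and of Choices~\ref{cho:T2} and~\ref{cho:T21T2} matches the paper's (the paper dispatches these in one sentence), and your plan for Choice~\ref{cho:T1FJ} is close in spirit, though imprecise at one point: you propose to pick $F_2\supseteq\borda{F^\bullet}$ first and then select $u\in\semborda{F^\bullet}\setminus F_2$ with a non-neighbour in $\borda{F^\bullet}$, but type~$1$ only guarantees \emph{some} such $u$, which may well lie in $F_2$. The paper avoids this by splitting on whether $\borda{F^\bullet}\subseteq\borda{F^\circ}$: in the first case every member of ${\cal M'}\cup{\cal F'}$ contains $\borda{F^\bullet}$, so after locating the member $F_1$ with $u\in\semborda{F_1}$ one may take \emph{any other} member as $F_2$; in the second case one takes $F_1$ to be a member with $\borda{F^\bullet}\not\subseteq F_1$, whereupon every vertex of $\semborda{F_1}$ automatically has a non-neighbour in $\borda{F^\bullet}$.

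The genuine gap is in line~\ref{lin:choT2k0} ($i=2$, $k=0$). You start from an arbitrary non-adjacent pair $a,b\in\semborda{F^\bullet}$ and say one then ``locates'' $u_1,u_2$ with non-neighbours in $\borda{F^\circ}$ satisfying the non-separation conditions, remarking that those conditions ``are precisely what makes those tolled walks exist.'' That is the statement to be proved, not a proof: nothing about an arbitrary non-adjacent pair in $\semborda{F^\bullet}$ forces the existence of a tolled walk between them meeting $\borda{F^\circ}$ in the required way, and you give no mechanism for producing one. The paper's key move, which you do not use here, is to exploit the \textbf{while}-loop guard: $\semborda{F^\circ}$ is \emph{not} t-concave and $\borda{F^\circ}$ is a clique, so there already exist $u_1,u_2\notin F^\circ$ and a tolled $(u_1,u_2)$-walk through $\semborda{F^\circ}$; Lemma~\ref{lem:tollconvexset} and Lemma~\ref{lem:walk}~$(\ref{ite:complex})$ then hand you the non-neighbours $u_1',u_2'\in\borda{F^\circ}$ and the non-separation conditions for free. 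The type-$2$ hypothesis (completeness of $(\borda{F^\bullet},\semborda{F^\bullet})$) is then used to force $u_1,u_2$ into $\semborda{F^\bullet}\setminus(\borda{F^\bullet}\cup\borda{F^\circ})$, and the only remaining dichotomy --- both in one member of ${\cal M'}\cup{\cal F'}$, or in two distinct ones --- is exactly the disjunction between Choices~\ref{cho:T20T1a} and~\ref{cho:T20T1b}. Without this use of the non-t-concavity of $\semborda{F^\circ}$ as the source of the witnesses, the hardest of the five assertions is not established.
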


\begin{proof}
	Since it is easy to see that
	Choice~$\ref{cho:T1}$ is always possible for a t-concave set having type~1 and
	Choices~$\ref{cho:T2}$ and~\ref{cho:T21T2} are always possible for a t-concave set having type~2, we discuss the choices of lines~\ref{lin:choT1k0} and~\ref{lin:choT2k0} separatedly in the sequel.
	
	Now, we show that Choice~\ref{cho:T1FJ} is possible in line~\ref{lin:choT1k0}. By definition of type~1, there is $u \in \semborda{F^\bullet}$ with a non-neighbor in $\borda{F^\bullet}$. On the one hand $\borda{F^\bullet} \subset \borda{F^\circ}$. By Lemma~\ref{lem:F}~$(\ref{ite:FM})$, $\borda{F^\circ}$ is a clique. Let $F_1 \in {\cal M'} \cup {\cal F'}$ be a set containing $u$. If $u \in \borda{F_1}$, then there is $F' \in {\cal F'} \cup {\cal M'} - F_1$ such that $u \in F'$. Using Lemma~\ref{lem:F}~$(\ref{ite:FM})$ again, we have that $F_1 \cap F'$ is a clique containing $\borda{F^\bullet}$, which would imply that $u$ is adjacent to all vertices of $\borda{F^\bullet}$. Then, $u \in \semborda{F_1}$. Choose any other member of ${\cal M'} \cup {\cal F'}$ as $F_2$ and Choice~\ref{cho:T1FJ} is possible for this case.
	On the other hand $\borda{F^\bullet} \not\subseteq \borda{F^\circ}$. Then, there is one member of ${\cal M'} \cup {\cal F'}$ not containing $\borda{F^\bullet}$, call it $F_1$. Any vertex of $\semborda{F_1}$ can be chosen as $u$.
	Choose as $F_2$ any member containing $\borda{F^\bullet}$. It there exists because $\borda{F^\bullet}$ is a clique. Then, Choice~\ref{cho:T1FJ} is possible in line~\ref{lin:choT1k0}.
	
	Finally, we show that Choice~\ref{cho:T20T1a} or~\ref{cho:T20T1b} is possible in line~\ref{lin:choT2k0}. Since $\semborda{F^\circ}$ is not t-concave and $\borda{F^\circ}$ is a clique, there are vertices $u_1,u_2 \not\in F^\circ$ for which there is a tolled $(u_1,u_2)$-walk containing vertices of $\semborda{F^\circ}$. Since $\semborda{F^\bullet}$ is t-concave of type 2, $(\borda{F^\bullet},\semborda{F^\bullet})$ is complete, which implies that both $u_1,u_2$ belong to $\semborda{F^\bullet}$. If there is $F_1$ such that $u_1,u_2 \in F_1 - (\borda{F^\bullet} \cup \borda{F^\circ})$, we have Choice~\ref{cho:T20T1a} because if $u_i \in \borda{F_i}$ for some $i \in \{1,2\}$, then $u_i$ is adjacent to all vertices of $\borda{F^\circ}$ or $u_i \in \borda{F^\bullet}$.
	The other possibility is $u_1 \in F_1 - (\borda{F^\bullet} \cup \borda{F^\circ})$ and $u_2 \in F_2 - (\borda{F^\bullet} \cup \borda{F^\circ})$ for $F_1,F_2 \in {\cal M'} \cup {\cal F'} - \{F^\circ\}$, which matches with Choice~\ref{cho:T20T1b}.
	$\hfill\square$
\end{proof}

The following result guarantees that if the $F^\bullet$ constructed in line~\ref{lin:F3} of Algorithm~\ref{alg:thn} is such that $\semborda{F^\bullet}$ is t-concave, then at most two sets $F$ used to compose $F^\bullet$ are such that $\semborda{F}$ is t-concave. Furthermore, the type of $\semborda{F}$ is 1.

\begin{lemma} \label{lem:formed}
	Let ${\cal F'}$ and $F^\bullet$ be obtained in lines~$\ref{lin:H2}$ and~$\ref{lin:F3}$ of some iteration $k$ of the {\bfseries while} loop of Algorithm~$\ref{alg:thn}$, respectively. If $\semborda{F^\bullet}$ is t-concave, then ${\cal F'}$ has at most two sets $F$ such that $\semborda{F}$ is t-concave and the type of $\semborda{F}$ is~$1$.
\end{lemma}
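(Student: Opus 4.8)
The proof I have in mind is by contradiction and is driven by a single ``witness'' tolled walk through $\semborda{F^\circ}$ together with the rerouting equivalences of Lemma~\ref{lem:walk}.

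\emph{Set-up and the witness walk.} Since the {\bfseries while} loop was entered, $F^\circ\in{\cal F'}$ and $\semborda{F^\circ}$ is not t-concave; hence $F^\circ$ is never counted, so it suffices to bound the sets $F\in{\cal F'}-\{F^\circ\}$ with $\semborda F$ t-concave and to show each has type~$1$. I will freely use that $\borda{F^\circ}$ is a clique with $\borda{F^\circ}\subseteq F$ for every $F\in{\cal F'}\cup{\cal M'}$ (Lemma~\ref{lem:F}~$(\ref{ite:FM})$ and the loop condition), and that for such an $F$ with $\semborda F$ t-concave, Lemma~\ref{lem:concave} gives $G[\semborda F]$ connected and $\borda F=N(\semborda F)$, so every vertex of $\borda{F^\circ}\subseteq F=\semborda F\cup\borda F$ lies in $\semborda F$ or is adjacent to $\semborda F$; in particular $G[\semborda F\cup\borda{F^\circ}]$ is connected. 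As $\semborda{F^\circ}\ne\varnothing$ (Lemma~\ref{lem:F}~$(\ref{ite:non-empty})$) and $\semborda{F^\circ}$ is not t-concave, the remarks after Lemma~\ref{lem:tollconvexset} together with Lemma~\ref{lem:walk} provide a component $C$ of $G[\semborda{F^\circ}]$ and a tolled $(a,b)$-walk $W=a\cdots d_1\,W_C\,d_2\cdots b$ with $a,b\notin F^\circ$, $d_1,d_2\in\borda{F^\circ}$, $ad_2,bd_1\notin E(G)$, and $W_C$ a walk in $G[C]$ covering $C$. Since $\borda{F^\circ}$ is a clique, $d_1d_2\in E(G)$, and moreover $a,b\notin\borda{F^\circ}$ (a vertex of $\borda{F^\circ}$ would be adjacent to both $d_1$ and $d_2$, contradicting $ad_2\notin E(G)$ or $bd_1\notin E(G)$); finally, since $N(c)\subseteq F^\circ$ for every $c\in C$, neither $a$ nor $b$ is adjacent to any vertex of $C$, so the second and second-to-last vertices of $W$ lie outside $W_C$.

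\emph{Rerouting, and the bound two.} The engine is: if $F\in{\cal F'}\cup{\cal M'}$ has $\semborda F$ t-concave and $a,b\notin F$, then replacing the segment $d_1\,W_C\,d_2$ of $W$ by a walk from $d_1$ to $d_2$ whose internal vertices all lie in $\semborda F$ and which visits $\semborda F$ (available from $d_1d_2\in E(G)$ and connectedness of $G[\semborda F\cup\borda{F^\circ}]$) yields again a tolled $(a,b)$-walk $W_F$: only internal vertices of that segment change, they now lie in $\semborda F$, and $a,b\notin F\supseteq\semborda F$ have no neighbour among them while the second and second-to-last vertices of $W$ are untouched. As $a,b\notin\semborda F$, $W_F$ contradicts the t-concavity of $\semborda F$. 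So every $F\in{\cal F'}-\{F^\circ\}$ with $\semborda F$ t-concave contains $a$ or $b$. To bootstrap this to ``at most two'' I would use the hypothesis that $\semborda{F^\bullet}$ is t-concave: it forbids $a$ and $b$ from both lying outside $\semborda{F^\bullet}$ (else $W$ itself violates t-convexity of $V(G)-\semborda{F^\bullet}$), whence, choosing $W$ via Lemma~\ref{lem:walk} inside $F^\bullet$, one may take $a\in\semborda{F_a}$ and $b\in\semborda{F_b}$ for pieces $F_a,F_b\in({\cal F'}\cup{\cal M'})-\{F^\circ\}$ that are unique by Lemma~\ref{lem:F}~$(\ref{ite:disjoint})$; for any third piece $F\in{\cal F'}-\{F^\circ,F_a,F_b\}$ disjointness of lower borders gives $a,b\notin F$, so the rerouting step applies and $\semborda F$ cannot be t-concave. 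A short additional argument (rerouting $W$ inside $F_a$) rules out $F_a=F_b$. Hence at most the two sets $F_a,F_b$ have t-concave lower border.

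\emph{Type~$1$.} Suppose some $F$ with $\semborda F$ t-concave has type $2$ or $3$, so $(\semborda F,\borda F)$ is complete. For a would-be third t-concave piece $F\notin\{F_a,F_b\}$ (the case subsumed above) this is immediate: $a,b\notin F$, hence $d_1,d_2\in\borda F$ (if $d_i\in\semborda F$ then $W$ meets $\semborda F$ with $a,b\notin\semborda F$, a contradiction), so $d_1,d_2$ are joined to all of $\semborda F$ and rerouting $d_1\,W_C\,d_2$ through a single $v\in\semborda F-\{a,b\}$ contradicts t-concavity of $\semborda F$. For $F=F_a$ itself, I would instead exploit tolledness of $W$ at $a$: since $a\in\semborda F$ and $(\semborda F,\borda F)$ is complete, $a$ is adjacent to all of $\borda F$, so $\borda F$ is contained in the single second vertex of $W$; combined with $ad_2\notin E(G)$ this forces $d_2\in\semborda F$ and pins down such a rigid local structure that Lemma~\ref{lem:tollconvexset}, used as in the proof of Lemma~\ref{lem:granularity}, produces a tolled walk meeting $\semborda F$ between two vertices outside $\semborda F$ (or a direct contradiction with $\semborda{F^\bullet}$ being t-concave). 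Either way $\semborda F$ has type~$1$.

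\emph{Main obstacle.} I expect the hard part to be the precise choice of the witness walk $W$ --- forcing both endpoints into distinct sets $\semborda{F_a},\semborda{F_b}$ while keeping the normal form above --- and the type-$1$ analysis for $F_a$ and $F_b$ themselves; the repeated verifications that the rerouted walks stay tolled (no endpoint reappears as an internal vertex, no new chord at an endpoint) are routine but must be carried out carefully.
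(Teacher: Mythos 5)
Your overall strategy is the same as the paper's: fix a witness tolled $(a,b)$-walk through a component of $G[\semborda{F^\circ}]$ with $a,b\notin F^\circ$ and entry/exit vertices $d_1,d_2\in\borda{F^\circ}$ satisfying $ad_2,bd_1\notin E(G)$, reroute it through $\semborda{F}$ via Lemma~\ref{lem:walk} for any piece $F$ avoiding both endpoints, and use disjointness of the interiors to bound the number of surviving pieces by two. Two steps, however, are not justified as written. First, for a ``third piece'' $F$ you claim that ``disjointness of lower borders gives $a,b\notin F$''; Lemma~\ref{lem:F}~$(\ref{ite:disjoint})$ only gives $a,b\notin\semborda{F}$, and nothing you have said prevents $a$ or $b$ from lying in $\borda{F}$, in which case the rerouting hypothesis $a,b\notin F$ fails and that piece is not eliminated. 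The paper closes this with one extra observation you omit: $\borda{F}$ is a clique (Lemma~\ref{lem:concave}~$(\ref{ite:clique})$) containing $\borda{F^\circ}$, and each of $a,b$ has a non-neighbour ($d_2$, resp.\ $d_1$) in $\borda{F^\circ}$, so neither endpoint can lie in $\borda{F}$.

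Second, your type argument for the pieces $F_a,F_b$ that do contain an endpoint is left as a promissory note (``pins down such a rigid local structure that \dots''). This case is in fact vacuous, by the same observation: if $\semborda{F}$ had type $2$ or $3$, then $(\semborda{F},\borda{F})$ is complete, so a vertex of $\semborda{F}$ is adjacent to all of $\borda{F}\supseteq\borda{F^\circ}$, while $a$ and $b$ each miss a vertex of $\borda{F^\circ}$; combined with $a,b\notin\borda{F}$ this gives $a,b\notin F$, and the rerouting already used for the counting kills the piece --- no separate case for $F_a,F_b$ is needed. (Alternatively, your own partial analysis can be finished: $d_2\in\semborda{F}$ forces the last vertex $c$ of $W_C$ into $\borda{F}$, since $c\in N(d_2)\subseteq F$ and $\semborda{F^\circ}\cap\semborda{F}=\varnothing$; completeness then gives $ac\in E(G)$, contradicting $c\in\semborda{F^\circ}$ and $a\notin F^\circ$.) With these two points repaired, your proof coincides with the paper's.
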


\begin{proof}
Let $F^\circ$ be the set chosen in line~$\ref{lin:F}$ of iteration $k$ and let $\{u,v\} \not\in \semborda{F^\circ}$ be such that there is a tolled $(u,v)$-walk $W$ containing some vertex of $\semborda{F^\circ}$.
Denote by ${\cal F''}$ the family formed by the sets $F \in {\cal F'}$ such that $\semborda{F}$ is t-concave.
First, suppose by contradiction that $F_1, F_2, F_3 \in {\cal F''}$. Since $\semborda{F_1}$ is t-concave, Lemma~\ref{lem:concave}~$(\ref{ite:clique})$ implies that $\borda{F_1}$ is a clique. Since $\borda{F^\circ} \subseteq \borda{F_1}$, we conclude that $\borda{F^\circ}$ is a clique, which means that that $u,v \not\in F^\circ$.

By Lemma~\ref{lem:F}~$(\ref{ite:disjoint})$, the sets $\semborda{F_1},\semborda{F_2},$ and $\semborda{F_3}$ are pairwise disjoint. Hence, for at least one of them, say $\semborda{F_3}$, it holds $u,v \not\in \semborda{F_3}$. Furthermore, we have that $u,v \not\in \borda{F_i}$ for $i \in \{1,2,3\}$ because each of $u$ and $v$ has at least one non-neighbour in $\borda{F^\circ}$, $\borda{F^\circ} \subseteq \borda{F_i}$ for $i \in \{1,2,3\}$, and $\borda{F_i}$ is a clique. 
Note that $W$ has different vertices $u', v' \in \borda{F^\circ}$ such that $uv' \not\in E(G)$ and $vu' \not\in E(G)$.
By Lemma~\ref{lem:concave}~$(\ref{ite:borda})$, every vertex of $\borda{F_3}$ has at least one neighbor in $\semborda{F_3}$.
Since $\borda{F^\circ} \subseteq \borda{F_3}$ and $G[\semborda{F_3}]$ is connected by Lemma~\ref{lem:concave}~$(\ref{ite:connected})$, 
Lemma~\ref{lem:walk}~$(\ref{ite:complex})$ implies that $\semborda{F_3}$ $\subset [u,v]$, which contradicts the assumption that $\semborda{F_3}$ is t-concave. Therefore, $|{\cal F''}| \le 2$.

Now, suppose by contradiction that $\semborda{F_1}$ is a t-concave set of type 2 or 3. Hence $(\borda{F_1},\semborda{F_1})$ is complete, which implies that $u,v \not\in F_1$ because each of $u$ and $v$ has at least one non-neighbor in $\borda{F^\circ}$, $\borda{F^\circ} \subseteq \borda{F_1}$, and $\borda{F_1}$ is a clique. As in the previous case, we have that $\semborda{F_1} \subset [u,v]$, which is a contradiction because $\semborda{F_1}$ is a t-concave set.
$\hfill\square$
\end{proof}

From now on, denote by 
$S^*, {\cal F^*},$ and ${\cal M^*}$ the instances of the set $S$ and the families ${\cal F}$ and ${\cal M^*}$ of Algorithm~\ref{alg:thn} when line~\ref{lin:return} is reached, respectively, and write ${\cal C} = \{\semborda{F} : F \in {\cal F^*}$ and $\semborda{F}$ is t-concave$\}$. We will show that ${\cal C}$ is a toll hull characteristic family of the input graph.

\begin{lemma} \label{lem:complement}
Let $F \in {\cal F^*}$. For every connected component $H$ of $G - F$, there is $F' \in {\cal F^*}$ such that $\semborda{F'} \subseteq V(H)$ and $\semborda{F'}$ is t-concave.
\end{lemma}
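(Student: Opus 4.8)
The plan is to prove, by induction on $|V(H)|$, the stronger statement: for \emph{every} $B \in {\cal F^*} \cup {\cal M^*}$ and every connected component $H$ of $G-B$, there is $F' \in {\cal F^*}$ with $\semborda{F'} \subseteq V(H)$ and $\semborda{F'}$ t-concave. Lemma~\ref{lem:complement} is the case $B \in {\cal F^*}$; the strengthening is needed because the recursion will descend through members of ${\cal M^*}$ as well. First I would record the structural picture from the clique-separator decomposition: the mp-subgraphs of $G$ are the nodes of a tree with the running-intersection property, and throughout Algorithm~\ref{alg:thn} every member of ${\cal F} \cup {\cal M}$ is the union of a connected subtree of this tree, these subtrees partitioning the node set. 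Thus ${\cal F^*} \cup {\cal M^*}$ is the node set of a quotient tree ${\cal T}$, again with running intersection, in which two bags are adjacent precisely when their vertex sets meet (necessarily in a clique, by Lemma~\ref{lem:F}$(\ref{ite:FM})$), and in which the vertex sets of distinct bags are incomparable. Since ${\cal M}$ is only ever shrunk by the algorithm, each $B \in {\cal M^*}$ is a single non-extremal mp-subgraph, so $G-B$ is disconnected by Corollary~\ref{cor:non-extremal}. Running intersection then yields the bookkeeping fact I will use repeatedly: the connected components of $G-B$ correspond to the neighbours of $B$ in ${\cal T}$, so for our component $H$ there is a unique neighbour $B_1$ of $B$ with $\varnothing \ne V(B_1)-V(B) \subseteq V(H)$, and every bag contributing a vertex to $H$ lies in the branch of ${\cal T}-B$ rooted at $B_1$.

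The main preliminary is a border estimate: with $B_1$ as above, $\semborda{B_1} \subseteq V(H)$. Let $C_1 = V(B_1) \cap V(B)$, which is a clique by Lemma~\ref{lem:F}$(\ref{ite:FM})$, and take $v \in \semborda{B_1}$; if $v \notin V(H)$ then $v \in V(B)$ (otherwise $v \in V(B_1)-V(B) \subseteq V(H)$), hence $v \in C_1$. I would then apply Lemma~\ref{lem:F}$(\ref{ite:maximal})$ to the bag $B$ with the clique $C_1 - \{v\} \subseteq V(B_1)$: the graph $G[V(B)-(C_1-\{v\})]$ is connected, and since it contains $v$ together with the nonempty set $V(B)-C_1$, the vertex $v$ has a neighbour in $V(B)-C_1 = V(B)-V(B_1)$, contradicting $v \in \semborda{B_1}$. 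Therefore $\semborda{B_1} \subseteq V(B_1)-C_1 = V(B_1)-V(B) \subseteq V(H)$.

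The induction itself is short. Let $B_1$ be the neighbour of $B$ toward $H$. If $B_1 \in {\cal F^*}$ and $\semborda{B_1}$ is t-concave, take $F'=B_1$ and finish by the border estimate. Otherwise $G-B_1$ is disconnected: if $B_1 \in {\cal M^*}$ this is Corollary~\ref{cor:non-extremal}, and if $B_1 \in {\cal F^*}$ with $\semborda{B_1}$ not t-concave then, the {\bfseries while} loop of Algorithm~\ref{alg:thn} having terminated, $\borda{B_1}$ is contained in no other member of ${\cal F^*} \cup {\cal M^*}$, so Lemma~\ref{lem:F}$(\ref{ite:border-not-contained})$ applies. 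A bag whose removal disconnects $G$ is not a leaf of ${\cal T}$, so $B_1$ has a neighbour $B_2 \ne B$, and the component $H'$ of $G-B_1$ toward $B_2$ satisfies $V(H') \subseteq V(H)$ (its bags lie in the $B_1$-branch of ${\cal T}-B$) and $V(H') \subsetneq V(H)$ (it avoids the nonempty set $V(B_1)-V(B)$). Thus $|V(H')| < |V(H)|$, and the induction hypothesis applied to $(B_1,H')$ gives $F' \in {\cal F^*}$ with $\semborda{F'} \subseteq V(H') \subseteq V(H)$ and $\semborda{F'}$ t-concave. Since $|V(H)|$ strictly decreases and the recursion can bottom out only in the first case, the lemma follows.

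I expect the border estimate to be the main obstacle. A priori the only thing known about the attaching set $V(B) \cap V(B_1)$ is that it is a clique (Lemma~\ref{lem:F}$(\ref{ite:FM})$), not that it avoids the interior $\semborda{B_1}$; establishing $\semborda{B_1} \subseteq V(H)$ is exactly what lets the conclusion $\semborda{F'} \subseteq V(H)$ survive the recursion, and the argument above circumvents this via the connectivity in Lemma~\ref{lem:F}$(\ref{ite:maximal})$, though one must not skip the seemingly trivial point that $V(B) \not\subseteq V(B_1)$. The second delicate part is the bookkeeping identifying components of $G-B$ with branches of the decomposition tree, where running intersection, the incomparability of distinct bags, and Corollary~\ref{cor:non-extremal} must be used with care; everything else is routine once these are in place.
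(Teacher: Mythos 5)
Your overall engine --- descend to a strictly smaller piece whenever the candidate bag's interior fails to be t-concave, using Corollary~\ref{cor:non-extremal} and Lemma~\ref{lem:F}~$(\ref{ite:border-not-contained})$ to force the descent --- is the same as the paper's, and your border estimate via Lemma~\ref{lem:F}~$(\ref{ite:maximal})$ is a correct and genuinely useful observation. The gap is in the step that selects $B_1$, i.e.\ exactly the bookkeeping you flag as delicate. Two of the structural claims it rests on are false. Declaring two bags adjacent ``precisely when their vertex sets meet'' does not define a tree: three mp-subgraphs pairwise sharing a vertex give a triangle in that intersection graph. And even for a genuine atom tree with the running-intersection property, the components of $G-B$ do not correspond to the neighbours of $B$ in the tree, and there need not exist \emph{any} tree-neighbour $B_1$ of $B$ with $\varnothing\ne V(B_1)-V(B)\subseteq V(H)$. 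Concretely, let $G=K_{1,3}$ with centre $a$ and leaves $x,y,z$; the mp-subgraphs are $\{a,x\},\{a,y\},\{a,z\}$, all three end up in ${\cal F^*}$, and the path $\{a,x\}-\{a,y\}-\{a,z\}$ is a legitimate atom tree satisfying running intersection. For $B=\{a,x\}$ and the component $H=\{z\}$ of $G-B$, the unique tree-neighbour of $B$ is $B_1=\{a,y\}$, yet $V(B_1)-V(B)=\{y\}=\semborda{B_1}\not\subseteq V(H)$: your choice of $B_1$, the border estimate, and the induction all point at the wrong bag, although the lemma itself holds (take $F'=\{a,z\}$). The underlying phenomenon is that one branch of ${\cal T}-B$ can shatter into several components of $G-B$ when the separator between consecutive bags of the branch lies inside $V(B)$; no choice of atom tree avoids this for all bags simultaneously.

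This is exactly where the paper's route differs: it never consults a decomposition tree. It observes that $G'=G[V(H)\cup F]$ is reducible and that every mp-subgraph of $G'$ is one of $G$, so Lemma~\ref{lem:extremalmp} produces an extremal mp-subgraph of $G'$ contained in $V(H)\cup N(H)$; the member of ${\cal F^*}$ containing it is the next bag, and a minimality argument over pairs $(F,H)$ closes the recursion. If you want to salvage your own route, replace ``the tree-neighbour of $B$ toward $H$'' by ``a bag of ${\cal F^*}\cup{\cal M^*}$ containing an edge between $H$ and $N(H)$'' (one exists since every edge lies in some mp-subgraph); for such a $B_1$ the set $V(B_1)-V(B)$ is nonempty, connected by Lemma~\ref{lem:F}~$(\ref{ite:maximal})$, hence inside $V(H)$, and your border estimate then applies verbatim. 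You would still owe an argument that the smaller component of $G-B_1$ produced by the descent lies inside $V(H)$, which your tree picture was also being asked to supply, so the gap is structural rather than cosmetic.
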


\begin{proof}
Let $F$ be the set of ${\cal F^*}$ minimizing the number of vertices of a connected component of $G - F$ and let $H$ be a connected component of $G - F$ with minimum number of vertices.
Since $F$ is the union of mp-subgraphs of $G$, every mp-subgraph of $G' = G[V(H) \cup F]$ is also an mp-subgraph of $G$. Hence $G'$ is reducible, which means that $G'$ has at least two extremal mp-subgraphs such that at least one of them is contained in $V(H) \cup N(H)$. Call it by $M$.
Therefore, $M$ is contained in a member $F_M$ of ${\cal F^*}$ contained in $V(H) \cup N(H)$. If $F_M$ is not t-concave, then $\borda{F_M}$ is not contained in one member of ${\cal F^*} \cup {\cal M^*}$. By Lemma~\ref{lem:F}~$(\ref{ite:border-not-contained})$, $G - F_M$ is disconnectec containing a connectec component with less vertices than $H$, which is a contradiction. Therefore, $F_M \in {\cal F^*}$ is such that $\semborda{F_M} \subseteq V(H)$ and $\semborda{F_M}$ is t-concave.
$\hfill\square$
\end{proof}

The next result is essential to show that only one vertex suffices for every set $C \in {\cal C}$ having type~1.

\begin{lemma} \label{lem:1T1}
If $F \in {\cal F^*}$ is such that $\semborda{F}$ is t-concave of type~$1$, then $S^* \cap \semborda{F} = \{u\}$ and $\semborda{F} \subset \langle u,w \rangle$ for any $w \not\in F$.
\end{lemma}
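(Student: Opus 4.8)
The plan is to analyze how the single vertex $u$ ends up in $S^*\cap\semborda{F}$ and to show that it alone, together with any external vertex $w$, generates all of $\semborda{F}$ under the toll convex hull operator. I would split the argument according to how $F$ came to be a member of ${\cal F^*}$ and which Choice placed a vertex of $\semborda{F}$ into $S$: either $F$ was handled directly in line~\ref{lin:choT1} (Choice~\ref{cho:T1}), or $F$ arose as an $F^\bullet$ in line~\ref{lin:F3} of some iteration with $\semborda{F^\bullet}$ t-concave of type~$1$, handled in line~\ref{lin:choT1k0} by Choice~\ref{cho:T1F0} or Choice~\ref{cho:T1FJ}. In each case exactly one vertex is added, and by Lemma~\ref{lem:F}~$(\ref{ite:disjoint})$ the sets $\semborda{F}$ are pairwise disjoint, so no later Choice can add a second vertex of $\semborda{F}$ to $S$ as long as $F$ remains in the family; and once $\semborda{F}$ is t-concave of type~$1$ it is never absorbed into a larger $F^\bullet$, because absorbing it would require $\borda{F^\circ}\subseteq F$ for the chosen $F^\circ$, and Lemma~\ref{lem:formed} bounds how the t-concave type-$1$ pieces combine. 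This gives $|S^*\cap\semborda{F}|\le 1$; combined with Lemma~\ref{lem:granularity} ($g(F)\ge 1$ for type~$1$) and the fact that $S^*$ is a toll hull set (to be established later, but which I may invoke here once proven), we get $S^*\cap\semborda{F}=\{u\}$ exactly.

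For the containment $\semborda{F}\subset\langle u,w\rangle$, I would use the structural facts from Lemma~\ref{lem:concave}: $G[\semborda{F}]$ is connected, $\borda{F}=N(\semborda{F})$ is a clique, and $\borda{F}$ lies inside an mp-subgraph not composing $F$, hence (via Lemma~\ref{lem:F}~$(\ref{ite:border-not-contained})$ / Corollary~\ref{cor:non-extremal} and Lemma~\ref{lem:complement}) there is a connected component $H$ of $G-F$ and in fact structure on both ``sides'' of $F$. The chosen $u$ has, by the defining clause of Choice~\ref{cho:T1} (resp.\ \ref{cho:T1F0}, \ref{cho:T1FJ}), a non-neighbor $u'\in\borda{F}$. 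The key geometric claim is that for any $w\not\in F$, the pair $\{u,w\}$ already tolls across all of $\semborda{F}$: I would produce, for an arbitrary target $v\in\semborda{F}$, a vertex $x'\in\borda{F}$ with $x'$ non-adjacent to $w$ and with a neighbor in the connected set $\semborda{F}$, and a vertex $y'\in\borda{F}$ non-adjacent to $u$ with a neighbor in $\semborda{F}$; then Lemma~\ref{lem:walk} (equivalence of $(\ref{ite:complex})$ and $(\ref{ite:Cxy})$, applied with $S=F$ and $C=\semborda{F}$) yields $\semborda{F}\subset[u,w]\subseteq\langle u,w\rangle$. One must check the side conditions of Lemma~\ref{lem:walk}~$(\ref{ite:complex})$: $uy'\notin E(G)$ (this is exactly the non-neighbor furnished by the relevant Choice, since $\borda{F}$ is a clique and $u$'s non-neighbor in $\borda{F}$ cannot be adjacent to $u$), $wx'\notin E(G)$, and that $x',y'$ actually lie on a common tolled $(u,w)$-walk — the latter reduces, via Lemma~\ref{lem:walk}~$(\ref{ite:simple})\Leftrightarrow(\ref{ite:complex})$ applied to the component of $\semborda{F}$ containing $v$, to the existence of some tolled walk through $\semborda{F}$ between $u$ and $w$, which follows because $\semborda{F}$ separates $u$'s side from $w$ (both being outside $F$ but $F$ straddling them, using connectivity of $G[\semborda{F}]$ and $\borda{F}=N(\semborda{F})$).

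The main obstacle I anticipate is the case $w\in N(F)$, i.e.\ $w$ adjacent to some boundary vertices of $F$: then I cannot freely take $y'$ (the vertex handling the ``$w$ end'' of the tolled walk) and I must argue that among the clique $\borda{F}$ there is still a vertex non-adjacent to $w$ that has a neighbor in $\semborda{F}$, or else reroute so that $w$ itself plays the role of the second-to-last vertex's neighbor. Here I would lean on the type-$1$ hypothesis: $(F,N(F))$ is not complete, so some vertex of $F$ — and by Lemma~\ref{lem:concave}~$(\ref{ite:borda})$ one may push this to a vertex of $\borda{F}$ adjacent to $\semborda{F}$ — has a non-neighbor outside $F$, and one must match this up with $w$ by a connectivity argument through $G-F$ using Lemma~\ref{lem:complement} (which guarantees a t-concave $\semborda{F'}$ in every component of $G-F$, giving ``room'' on $w$'s side). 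The subtlety is purely combinatorial bookkeeping about which boundary vertices are hit; once the two endpoints $x',y'\in\borda{F}$ with the right non-adjacencies are exhibited, Lemma~\ref{lem:walk} does the rest mechanically. I would also need to double-check the degenerate possibilities $w\in\borda{F}$ is impossible since $w\notin F$, and $|\semborda{F}|=1$, which is fine because then $\semborda{F}=\{u\}$ and $u$ itself is the unique element, making the containment trivial.
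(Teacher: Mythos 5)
There is a genuine gap, and it sits exactly at the crux of the lemma. Your plan reduces $\semborda{F}\subset\langle u,w\rangle$ to the single interval containment $\semborda{F}\subset[u,w]$ via one application of Lemma~\ref{lem:walk} with $S=F$ and $C=\semborda{F}$. But that lemma requires both endpoints $x,y\not\in S$, whereas your first endpoint is $u\in\semborda{F}\subseteq F$, so the lemma simply does not apply; and the statement you want from it is false in general. By Lemma~\ref{lem:tollconvexset}, a vertex $v\in\semborda{F}$ with $N(v)\subseteq N[u]$ (or, more generally, any $v$ separated from $w$ by $N[u]-\{v\}$) cannot lie on any tolled $(u,w)$-walk, so $[u,w]$ need not contain all of $\semborda{F}$ --- this is unavoidable once $F$ is a composite of several mp-subgraphs and $u$ sits deep inside one of them. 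The containment in the lemma is a \emph{hull} statement and must be built iteratively: the paper's proof first gets a boundary vertex $u'$ (and the pieces $F'-F_1$ of the other members of ${\cal F'}\cup{\cal M'}$) into $\langle u,w\rangle$ using one tolled $(u,w)$-walk, and then uses pairs of \emph{already captured} vertices as fresh endpoints of new tolled walks to reach $\semborda{F_1}$, $\borda{F_1}-\borda{F^\bullet}$, etc.; in the base case where $F$ is a single extremal mp-subgraph it invokes Theorem~\ref{thm:atom} on the pair $u,u'$ after $u'$ has been captured. The whole argument is organized as an induction on a level parameter $\ell(F^\bullet)$ recording how $F$ was assembled across iterations of the {\bfseries while} loop --- none of which is replaceable by a one-shot application of Lemma~\ref{lem:walk}.

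Two smaller problems in the first half: your claim that a member $F_1$ with $\semborda{F_1}$ t-concave of type~$1$ ``is never absorbed into a larger $F^\bullet$'' is false --- that absorption is precisely the $k=1$ case of line~\ref{lin:choT1k0} and the reason the paper needs Lemma~\ref{lem:formed}, Corollary~\ref{cor:T12}, and the induction on $\ell$; the count $|S^*\cap\semborda{F}|=1$ holds because in that case \emph{no new vertex} is added and the single vertex inherited from $F_1$ survives. And you cannot invoke ``$S^*$ is a toll hull set'' to get the lower bound $|S^*\cap\semborda{F}|\geq 1$, since Theorem~\ref{thm:hrf} is proved \emph{using} this lemma; the equality must come from tracing the algorithm's choices, as the paper does via Lemma~\ref{lem:choice}.
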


\begin{proof}
Let $F^\bullet \in {\cal F^*}$ be such that $\semborda{F^\bullet}$ is t-concave of type~1. By Lemma~\ref{lem:choice}, there are $u \in S^* \cap \semborda{F^\bullet}$ and $u' \in \borda{F^\bullet}$ such that $u' \not\in N(u)$. Let $H$ be a connected component of $G - F^\bullet$ such that there is $u'' \in V(H)$ with $u' \in N(u'')$. By Lemma~\ref{lem:complement}, there is $w \in S^* \cap V(H)$. Let $Q$ be a $(u,u''')$-path of $G[\semborda{F^\bullet}]$ where $u''' \in N(u')$ and let $Q'$ be a $(w,u'')$-path of $H$.
Write $P$ as the concatenation of $Q$ with the edge $u'''u'$, and
$P_{wu'}$ as the concatenation of $Q'$ with the edge $u'u''$.
It is clear that the concatenation of $Q$ with $Q'$ is a tolled $(u,w)$-walk containing $u'$.

If $F^\bullet$ was added to ${\cal F}$ in line~\ref{lin:extremal}, then, $F^\bullet$ is an extremal mp-subgraph of $G$. 
By Theorem~\ref{thm:atom}, $F^\bullet \subset \langle u,u' \rangle$, which means that $\semborda{F^\bullet} \subset \langle u,w \rangle$.
The other possibility is that $F^\bullet$ was added to $\cal F$ in line~\ref{lin:join} of some iteration $k$ of the {\bfseries while} loop. Let $\semborda{F^\circ}$, ${\cal F'}$, and ${\cal M'}$ be obtained in lines~\ref{lin:F}, ~\ref{lin:H}, and~\ref{lin:H2}, respectively, of iteration $k$. Observe that every vertex $v \not\in F^\bullet$ has a non-neighbour $v' \in \borda{F^\circ}$ because otherwise $\borda{F^\circ} \cup \{v\}$ would be a clique, which would mean that $\borda{F^\circ} \cup \{v\}$ is contained in some mp-subgraph of $G$, which would imply that $v$ belongs to $F^\bullet$ by the construction of $F^\bullet$. Observe that Lemma~\ref{lem:concave}~$(\ref{ite:inter})$ implies $\borda{F^\circ} \not\subseteq \borda{F^\bullet}$ because otherwise it would exist an mp-subgraph outside $F^\bullet$ containing $\borda{F^\circ}$ which was not used to compose $F^\bullet$. Therefore, $\borda{F^\circ} - \borda{F^\bullet} \neq \varnothing$ and we can assume that $v' \in \borda{F^\circ} - \borda{F^\bullet}$ because for any set $T$, there is no edge between a vertex of $\semborda{T}$ and a vertex of $V(G) - T$.

For every set $T$ added to ${\cal F}$ such that $\semborda{T}$ has type~1, we associate a natural number $\ell(T)$. It is clear that a set is added to ${\cal F}$ at most once and this occurs in lines~\ref{lin:extremal} or~\ref{lin:join}. If $T$ is added to ${\cal F}$ in line~\ref{lin:extremal}, set $\ell(T) = 1$. If $T$ is added to ${\cal F}$ in line~\ref{lin:join} of iteration $k$ and $\semborda{F}$ is not t-concave for every $F \in {\cal F'}$, then set $\ell(T) = 1$. Otherwise, define $\ell(T) = 1 + \max \{ \ell(F) : F \in {\cal F'}$ and $\semborda{F}$ is t-concave$\}$. Lemma~\ref{lem:formed} guarantees that if $\semborda{F}$ is t-concave for $F \in {\cal F'}$ in iteration $k$, then the type of $\semborda{F}$ is~1.
It is clear that $\ell$ is well defined. We use induction on $\ell(F^\bullet)$ to prove that $\semborda{F^\bullet} \subset \langle u,w \rangle$. For the basis, consider $\ell(F^\bullet) = 1$. 

Suppose that Choice~\ref{cho:T1F0} is possible in line~\ref{lin:choT1k0}.
Let $P_{uw'}$ be a $(u,w')$-path of $F_1 - (\borda{F^\circ} - w')$, and
let $P'''$ be a $(w',u')$-path of $G[F_2 - (F_1 - \{w',u'\})]$.
It is clear that the paths $P_{wu'}$, $P_{uw'}$, and $P'''$ form a tolled $(u,w)$-walk.
For every $F' \in {\cal M'} \cup {\cal F'} - \{F_1,F_2\}$, it holds that $u'',w' \in \borda{F'}$ because $u'',w' \in \borda{F^\circ} \subseteq \borda{F'}$.
Observe that $N[w] \cap F' \subset \borda{F^\circ} \subseteq \borda{F'}$ and $N[u] \cap F' \subset \borda{F_1} \cap \borda{F'}$. Since $\borda{F^\circ} \subseteq \borda{F_1}$, $(N[u] \cup N[v]) \cap F' \subseteq \borda{F_1} \cap \borda{F'}$.
By Lemma~\ref{lem:F}~$(\ref{ite:FM})$, $\borda{F'} \cap \borda{F_1}$ is a clique, which means that $\borda{F'} \cap \borda{F_1}$ is contained in an mp-subgraph forming $F'$ and that every vertex of $F' \cap F_1$ has a neighbor in $F' - F_1$.
By Lemma~\ref{lem:F}~$(\ref{ite:maximal})$, $F' - F_1$ induces a connected graph. 
Then, there is a $(u'',w')$-walk containing all vertices of $F' - F_1$. Therefore, $F' - F_1 \subset \langle u,w \rangle$ for every $F' \in {\cal M'} \cup {\cal F'} - \{F_1,F_2\}$.

Now, we show that $F_2 - (F_1 \cup \borda{F^\bullet}) \subset \langle u,w \rangle$
Let $C_1, \ldots, C_s$ be the vertex sets of the connected components of $G[F_2] - (F_1 \cup \borda{F^\bullet})$. On the one hand, $\borda{F^\bullet}$ and $F_1 \cap F_2$ are incomparable sets. Since $G[F_2] - \borda{F^\bullet}$ and $G[F_2] - F_1$ are connected graphs,
for $1 \le z \le s$, $C_z$ contains 
a vertex $x_z$ having a neighbor in $\borda{F^\bullet} - \borda{F_1}$ and 
a vertex $y_z$ having a neighbor in $\borda{F_1} - \borda{F^\bullet}$.
By Lemma~\ref{lem:F}~$(\ref{ite:maximal})$,
there is a $(w,x_z)$-path $P_z$ of $G - F_1$ and
there is a $(u,y_z)$-path $P'_z$ of $G - \borda{F^\bullet}$.
Now, for each set $C_z$, the paths $P_z$ and $P'_z$ can be used to find a tolled $(u,w)$-walk such that, using Lemma~\ref{lem:walk}, we can conclude that $F_2 - (F_1 \cup \borda{F^\bullet}) \subset \langle u,w \rangle$.
On the other hand, $\borda{F^\bullet} \subset \borda{F_1}$. Therefore, $s = 1$ and we can assume that $u'$ and $w'$ belong to $\borda{F_1}$, and a tolled $(u,w)$-walk containing all vertices of $F_2 - F_1$ also exists. Then, $F_2 - (F_1 \cup \borda{F^\bullet}) \subset \langle u,w \rangle$.

Next, we show that $\semborda{F_1} \subset \langle u,w \rangle$.
We claim that for every $z \in \semborda{F_1}$, there are $x,y \not\in F_1$ for which there is a tolled $(x,y)$-walk containing $z$.
If $\borda{F_1}$ is clique, then, since $\borda{F_1}$ is contained in an mp-subgraph $M_1$ composing $F_1$, $G[\semborda{F_1}]$ is connected, and the follows from Lemma~\ref{lem:walk} and the fact that $\semborda{F_1}$ is not t-concave.
Then, consider that $\borda{F_1}$ is not a clique and let $C$ be a connected component of $\semborda{F_1}$. Then, there are non-adjacent vertices $x^*,y^* \in \borda{F_1}$ both having neighbors in $C$ and sets $F_x,F_y \in ({\cal F'} \cup {\cal M'}) - \{F_1\}$ such that $x^* \in F_x - F_y$ and $y^* \in F_y - F_x$.
Since we have already shown that $F_x - F_1 \subset \langle u,w \rangle$ and $F_y - F_1 \subset \langle u,w \rangle$, for any vertices $x \in F_x - F_1$ and $y \in F_y - F_1$, there is a tolled $(x,y)$-walk containing a vertex of $C$. Using Lemma~\ref{lem:walk} again the proof of the claim is complete.

If $x,y \in \semborda{F^\bullet} - F_1$, then we are done. The other possibility is that exactly one of $x$ and $y$, say $x$, does not belong to $\semborda{F^\bullet}$. Therefore, $y$ has a non-neighbor in $\borda{F^\bullet}$, which means that there is a tolled $(y,w)$-walk containing $z$, which implies that $\semborda{F_1} \subset \langle u,w \rangle$.

It remains to show that $\borda{F_1} - \borda{F^\bullet} \subset \langle u,w \rangle$. Let $z \in \borda{F_1} - \borda{F^\bullet}$. There is some $F' \in ({\cal F} \cup {\cal M}) - F_1$ such that $z \in F_1 \cap F'$. Then, $v$ has a neighbor in $F_1 - F'$ and a neighbor in $F' - F_1$. Since we know that $F_1 - F' \subset \langle u,w \rangle$ and $F' - F_1 \subset \langle u,w \rangle$, it holds that $z \in \langle u,w \rangle$.

We now consider that only Choice~\ref{cho:T1FJ} can be done.
On the one hand, $\borda{F^\bullet} \subseteq \borda{F}$ for every $F \in {\cal F'} \cup {\cal M'}$. For every $F' \in {\cal F'} \cup {\cal M'} - F_1$, note that $u', w' \in \borda{F'}$ and both have neighbors in $F' - F_1$. Therefore, there is a $(u',w')$-walk $W$ in $F' - (F_1 - \{u',w'\})$ containing a vertex $v \in F' - F_1$. The concatenation of $P_{uw'}$, $W$, and $P_{wu'}$ is a tolled $(u,w)$-walk containing $v$. 
By Lemma~\ref{lem:F}~$(\ref{ite:maximal})$, $G[F' - F_1]$ is connected. By Lemma~\ref{lem:walk}, $F' - F_1 \subset \langle u,w \rangle$. The same reasonings of the previous case can be used here to show that $\semborda{F_1} \subset \langle u,w \rangle$ and that $\borda{F_1} - \borda{F^\bullet} \subset \langle u,w \rangle$.

On the other hand, there is $F \in {\cal F'} \cup {\cal M'}$ such that $\borda{F^\bullet} \not\subseteq{F}$. Note that for every $F \in {\cal F'} \cup {\cal M'}$ such that $\borda{F^\bullet} \not\subseteq{F}$, $(F - \borda{F^\circ}, \borda{F^\circ})$ is complete. For instance, this is the case for $F^\circ$ and for the set chosen as $F_1$.
Since $\semborda{F^\circ}$ is not t-concave, there are vertices $x,y \not\in F^\circ$ for which there is a tolled $(x,y)$-walk containing some vertex $z \in \semborda{F^\circ}$.
Therefore, for $F' \in ({\cal F'} \cup {\cal M'}) - \{F^\circ\}$ such that $\borda{F^\bullet} \not\subseteq F'$, $x,y \not\in F'$. Furthermore, $x$ and $y$ have both non-neighbors in $\borda{F^\circ}$.
Without loss of generality, there is $F_x \in {\cal F'} \cup {\cal M'}$ such that $x \in F_x$. As in the previous case, $F_x - (F_1 \cup \borda{F^\bullet}) \subset \langle u,w \rangle$.
In fact, $F'' - (F_1 \cup \borda{F^\bullet}) \subset \langle u,w \rangle$ for every $F''$ such that $\borda{F^\bullet} \subseteq{F''}$. Since $x \not\in F_1$, $x \in F_x - F_1$, i.e., $x \in \langle u,w \rangle$. If $y \in \semborda{F^\bullet}$, then using the same reasoning, we conclude that $y \in \langle u,w \rangle$. Otherwise, consider $y = w$. Therefore, $z \in \langle u,w \rangle$. More generally, let $z' \in F$ such that $\borda{F^\bullet} \not\subseteq{F}$. Not that $z'$ dominates $\borda{F^\circ}$. Then, $xy'z'x'y$ is a $(x,y)$-walk containing a tolled $(x,y)$-walk passing through $z'$.
The same reasoning of the previous case can be used here to show that $\borda{F_1} - \borda{F^\bullet} \subset \langle u,w \rangle$ and that $\borda{F^\circ} - \borda{F^\bullet} \subset \langle u,w \rangle$.

Now, consider $\ell(F^\bullet) \geq 2$ and that the result holds for every t-concave set $F$ added to ${\cal F}$ such that $\ell(F) < \ell(F^\bullet)$. Hence ${\cal F'} - \{F^\circ\}$ contains at least one member $F$ such that $\semborda{F}$ is t-concave of type 1, say $F_1$. By the induction hypothesis, there is $u_1 \in S^* \cap \semborda{F_1}$ and $\semborda{F_1} \subset \langle u,w \rangle$.
We claim that some vertex of $\semborda{F_1}$ has a non-neighbor in $\borda{F^\circ}$. Suppose the contrary. Since $\semborda{F^\circ}$ is not t-concave, there exist vertices $z,z' \not\in F^\circ$ such that $\semborda{F^\circ} \cap [z,z'] \neq \varnothing$. Since both $z$ and $z'$ have at least one non-neighbor in $\borda{F^\circ}$ and, by assumption, $\semborda{F^\bullet}$ is t-concave, at least one, say $z$, belongs to $\semborda{F^\bullet} - \semborda{F^\circ}$. Since every vertex of $\borda{F^\circ}$ has a neighbor in $F_1 - F^\circ$, we conclude that a tolled $(z,z')$-walk $W$ containing some vertex of $\semborda{F^\circ}$ can be modified to contain a vertex of $\semborda{F_1}$, which is not possible because $\semborda{F_1}$ is t-concave.
Therefore, let $u \in \semborda{F_1}$ having a non-neighbor $u'' \in \borda{F^\circ}$. Recall that $w' \in \borda{F^\circ}$ is a vertex non-adjacent to $w$. Therefore, there is a tolled $(u,w)$-walk $W'$ containing vertices $u''$ and $w'$. 
Note that for $F' \in {\cal M'} \cup {\cal F'} - \{F_1\}$, $F' - (F_1 - w')$ induces a connected graph by Lemma~\ref{lem:F}~$(\ref{ite:maximal})$.
Hence, there is a walk containing all vertices of $F' - F_1$ that can be concatenated with $W'$ forming a tolled $(u,w)$-walk containing all vertices of $F' - F_1$.
Therefore, $F' - F_1 \subset \langle u,w \rangle = \langle u_1,w \rangle$ for every $F' \in {\cal M'} \cup {\cal F'} - \{F_1\}$.
It remains to show that $\borda{F_1} - \borda{F^\bullet} \subset \langle u,v \rangle$.
Let $z \in \borda{F_1} - \borda{F^\bullet}$. This means that there is $F \in {\cal F'} \cup {\cal M'} - F_1$ such that $z \in \borda{F}$, and consequently that $z$ has a neighbor in $F_1 - F$ and a neighbor in $F - F_1$. Since $\semborda{F_1} \subset \langle u,w \rangle$ and $F - F_1 \subset \langle u,w \rangle$, we conclude that $\borda{F_1} - \borda{F^\bullet} \subset \langle u,v \rangle$. Therefore, $\semborda{F^\bullet} \subset \langle u,w \rangle$.
$\hfill\square$
\end{proof}

The above proof has the following consequence.

\begin{corollary} \label{cor:T12}
	If $F^\bullet$ was obtained in line~$\ref{lin:F3}$ of Algorithm~$\ref{alg:thn}$ such that $\semborda{F^\bullet}$ has type~$1$, then the family ${\cal F'}$ obtained in line~$\ref{lin:H2}$ of the same iteration has at most one member $F$ such that $\semborda{F}$ is t-concave.
\end{corollary}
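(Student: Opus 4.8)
The plan is to argue by contradiction, reusing the machinery already set up in the proof of Lemma~\ref{lem:1T1}. Write $k$ for the iteration of the {\bfseries while} loop in which the set $F^\bullet$ of line~\ref{lin:F3} is constructed, and suppose, for a contradiction, that the family ${\cal F'}$ of line~\ref{lin:H2} of iteration~$k$ contains two distinct members $F_1,F_2$ with $\semborda{F_1}$ and $\semborda{F_2}$ both t-concave. By Lemma~\ref{lem:formed} each of $\semborda{F_1},\semborda{F_2}$ has type~$1$, by Lemma~\ref{lem:F}~$(\ref{ite:disjoint})$ the sets $\semborda{F_1}$ and $\semborda{F_2}$ are disjoint, and by Lemma~\ref{lem:F}~$(\ref{ite:non-empty})$ they are non-empty. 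The first elementary fact I would record is that $\semborda{\cdot}$ is monotone under inclusion (if $N(x)\subseteq A\subseteq B$ then $x\in\semborda{B}$); since $F_2\subseteq F^\bullet$ this yields $\varnothing\ne\semborda{F_2}\subseteq\semborda{F^\bullet}$.

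Next I would feed the hypothesis that $\semborda{F^\bullet}$ has type~$1$ into the inductive part of the proof of Lemma~\ref{lem:1T1}. Because ${\cal F'}$ contains $F_1$ with $\semborda{F_1}$ t-concave, the quantity $\ell(F^\bullet)$ defined there satisfies $\ell(F^\bullet)\ge 2$, so we land exactly in the inductive case of that argument; moreover the ``distinguished'' t-concave member of ${\cal F'}$ selected there is arbitrary and may be taken to be $F_1$ (note $F_1\ne F^\circ$, since $\semborda{F^\circ}$ is not t-concave). That sub-argument then produces a vertex $u\in\semborda{F_1}$ with $\semborda{F^\bullet}\subset\langle u,w\rangle$ for every $w\notin F^\bullet$; such a $w$ exists because type~$1$ forces $N(F^\bullet)\ne\varnothing$, hence $F^\bullet\ne V(G)$. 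Combining with the monotonicity step gives $\semborda{F_2}\subseteq\semborda{F^\bullet}\subset\langle u,w\rangle$.

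Finally I would use the t-concavity of $\semborda{F_2}$ to close the contradiction: $V(G)-\semborda{F_2}$ is t-convex and it contains $u$ (as $u\in\semborda{F_1}$, which is disjoint from $\semborda{F_2}$) and $w$ (as $w\notin F^\bullet\supseteq F_2\supseteq\semborda{F_2}$), so it contains the whole toll convex hull $\langle u,w\rangle$. Hence $\langle u,w\rangle\cap\semborda{F_2}=\varnothing$, contradicting $\varnothing\ne\semborda{F_2}\subseteq\langle u,w\rangle$. Therefore ${\cal F'}$ has at most one member $F$ with $\semborda{F}$ t-concave.

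The main obstacle, and the one step I would treat with care, is justifying that the conclusion ``$\semborda{F^\bullet}\subset\langle u,w\rangle$ with $u\in\semborda{F_1}$'' is actually available here: the statement of Lemma~\ref{lem:1T1} is phrased for members of ${\cal F^*}$, whereas the $F^\bullet$ of line~\ref{lin:F3} may be absorbed into a larger set in a later iteration. The resolution is that the induction inside the proof of Lemma~\ref{lem:1T1} ranges over \emph{all} sets ever added to ${\cal F}$ (well-ordered by $\ell$), not only the surviving ones, so its inductive step applies verbatim to our $F^\bullet$ with $F_1$ in the role of the distinguished t-concave member. Everything else used — monotonicity of $\semborda{\cdot}$, $\semborda{F_2}\ne\varnothing$ via Lemma~\ref{lem:F}~$(\ref{ite:non-empty})$, and the implication ``type~$1\Rightarrow N(F^\bullet)\ne\varnothing$'' — is immediate from the definitions.
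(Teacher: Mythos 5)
Your argument is correct and is essentially the derivation the paper intends: the paper offers no explicit proof, merely asserting the corollary as a consequence of the proof of Lemma~\ref{lem:1T1}, and you carry out exactly that extraction — the inductive step of that proof yields $\semborda{F^\bullet}\subset\langle u,w\rangle$ with $u\in\semborda{F_1}$ and $w\notin F^\bullet$, which by monotonicity of $\semborda{\cdot}$ and Lemma~\ref{lem:F} would place the nonempty set $\semborda{F_2}$ inside the hull of two vertices outside it, contradicting its t-concavity. Your remark that the induction in Lemma~\ref{lem:1T1} must range over all sets ever added to $\cal F$ (not just ${\cal F^*}$) is accurate and is in fact already implicit in how the paper applies its induction hypothesis to $F_1\in{\cal F'}$.
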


Now, we show that the toll hull number of $C \in {\cal C}$ is $2$ if $C$ has type~2.

\begin{lemma} \label{lem:2T2}
If $F \in {\cal F^*}$ is such that $\semborda{F}$ is t-concave of type~$2$, then $S^* \cap \semborda{F} = \{u_1,u_2\}$ and $\semborda{F} \subseteq \langle u_1,u_2 \rangle$ for $u_1,u_2 \in \semborda{F}$.
\end{lemma}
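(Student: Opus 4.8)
The plan is to mirror the structure of the proof of Lemma~\ref{lem:1T1}, adapting it to the type-2 situation. First I would argue the lower bound: since $\semborda{F}$ is t-concave of type~$2$, Lemma~\ref{lem:granularity} gives $g(\semborda{F}) \geq 2$, so every toll hull set of $G$ contains at least two vertices of $\semborda{F}$; in particular $|S^* \cap \semborda{F}| \geq 2$. The bulk of the work is the upper bound: showing that the two vertices $u_1, u_2$ that the algorithm actually places in $S$ (via one of Choices~\ref{cho:T2}, \ref{cho:T20T1a}, \ref{cho:T20T1b}, \ref{cho:T21T1}, \ref{cho:T21T2}) satisfy $\semborda{F} \subseteq \langle u_1, u_2 \rangle$, which will then force $S^* \cap \semborda{F} = \{u_1,u_2\}$ exactly (since a minimum hull set cannot contain redundant vertices beyond what granularity demands, and $\semborda{F} \subseteq \langle u_1,u_2\rangle$ means no third vertex of $\semborda{F}$ is needed).

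For the upper bound I would split on how $F$ entered $\cal F$. If $F = F^\bullet$ was added in line~\ref{lin:extremal} as an extremal mp-subgraph, then $\semborda{F}$ t-concave of type~$2$ means $(\borda{F},\semborda{F})$ is complete and $\semborda{F}$ is not a clique, so there are non-adjacent $u_1,u_2 \in \semborda{F}$ (Choice~\ref{cho:T2}); since $F$ is prime and not complete, Theorem~\ref{thm:atom} gives $F \subseteq \langle u_1,u_2 \rangle$, hence $\semborda{F} \subseteq \langle u_1,u_2\rangle$. If $F = F^\bullet$ was built in line~\ref{lin:join} of some iteration $k$, I would use the sets $\semborda{F^\circ}$, ${\cal F'}$, ${\cal M'}$ of that iteration. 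By Corollary~\ref{cor:T12} applied in the type-1 sub-cases — and more relevantly Lemma~\ref{lem:formed} — at most two members $F$ of ${\cal F'}$ have $\semborda{F}$ t-concave, and such $\semborda{F}$ has type~$1$; so exactly the cases $k=0$ (no t-concave member, Choices~\ref{cho:T20T1a}/\ref{cho:T20T1b}) and $k=1$ (one t-concave type-1 member $F_1$, Choices~\ref{cho:T21T1}/\ref{cho:T21T2}) can occur. In each case the algorithm secures $u_1,u_2 \in \semborda{F^\bullet}$ (possibly inside a previously-formed $F_1$) with carefully chosen non-neighbors $u_1',u_2' \in \borda{F^\circ}$ and the non-separation conditions guaranteeing, via Lemma~\ref{lem:tollconvexset}, that there is a tolled $(u_1,u_2)$-walk through the relevant vertices. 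The argument then propagates $\langle u_1,u_2\rangle$ through all pieces $F' \in {\cal F'} \cup {\cal M'}$: for each such $F'$, the vertices $u_1',u_2' \in \borda{F^\circ} \subseteq \borda{F'}$ have neighbors in $F' - F_1$ (or in $F'$ directly), $G[F' - (F_1 - \{u_1',u_2'\})]$ is connected by Lemma~\ref{lem:F}~$(\ref{ite:maximal})$, so a walk through all of $F' - F_1$ can be spliced into a tolled $(u_1,u_2)$-walk, and Lemma~\ref{lem:walk} yields $F' - F_1 \subset \langle u_1,u_2\rangle$. When $k=1$ one first invokes the induction of Lemma~\ref{lem:1T1} to get $\semborda{F_1} \subset \langle u_1, w\rangle$ for an appropriate $w \notin F_1$, but here, because $u_2$ itself has a non-neighbor in $\borda{F^\circ}$, one can play the role of $w$ with $u_2$, collapsing everything into $\langle u_1,u_2\rangle$. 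Finally $\borda{F_1} - \borda{F^\bullet}$ and $\borda{F^\circ} - \borda{F^\bullet}$ are swept up exactly as in Lemma~\ref{lem:1T1}: any such vertex lies in $F_1 \cap F'$ for some other piece $F'$, has neighbors on both sides, both sides are already in $\langle u_1,u_2\rangle$, so it is too.

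The main obstacle I expect is bookkeeping the non-separation hypotheses attached to Choices~\ref{cho:T20T1a}, \ref{cho:T20T1b}, \ref{cho:T21T1}: one must verify that the tolled $(u_1,u_2)$-walk promised by Lemma~\ref{lem:choice}'s existence argument can be routed so that it simultaneously passes through $\semborda{F^\circ}$ (to justify merging), through the interior of every other piece $F' \in {\cal F'}\cup{\cal M'}$, and through $\semborda{F_1}$ when $k=1$, without the endpoints $u_1,u_2$ accidentally being adjacent to interior vertices in a way that violates the tolled-walk condition. The key device is that $u_1,u_2 \in \semborda{F^\bullet}$ forces $N[u_i] \cap F' \subseteq \borda{F^\circ} \subseteq$ a clique for every foreign piece $F'$, so the only neighbors of $u_i$ outside $F_1$ lie in the clique $\borda{F^\circ}$, and one simply avoids revisiting $\borda{F^\circ}$ after leaving it — exactly the argument template already used repeatedly in Lemma~\ref{lem:1T1}. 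Once $\semborda{F^\bullet} \subseteq \langle u_1,u_2 \rangle$ is established, minimality of $S^*$ and $g(\semborda{F}) \geq 2$ together pin down $S^* \cap \semborda{F} = \{u_1,u_2\}$, completing the proof. $\hfill\square$
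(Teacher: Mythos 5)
Your overall route is the same as the paper's: lower bound from Lemma~\ref{lem:granularity}, case split on whether $F$ entered $\cal F$ at line~\ref{lin:extremal} (handled by Theorem~\ref{thm:atom}/Corollary~\ref{cor:atom}) or at line~\ref{lin:join}, then Lemma~\ref{lem:formed} to bound the number of t-concave members of ${\cal F'}$, propagation of $\langle u_1,u_2\rangle$ through the other pieces via Lemma~\ref{lem:F}~$(\ref{ite:maximal})$ and Lemma~\ref{lem:walk}, and Lemma~\ref{lem:1T1} for pieces that are already t-concave of type~1. However, there is a concrete gap in your case analysis: from ``at most two members of ${\cal F'}$ have $\semborda{F}$ t-concave'' you conclude that \emph{exactly} the cases $j=0$ and $j=1$ occur. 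That inference is wrong -- $j=2$ is a genuine possibility, and it is treated separately in the paper. In that case the algorithm applies \emph{no} new choice (lines~\ref{lin:choT2k0} and~\ref{lin:choT2k1} only fire for $k=0$ and $k=1$); the two vertices $u_1,u_2$ are the ones placed earlier in $\semborda{F_1}$ and $\semborda{F_2}$ by the type-1 machinery, $\semborda{F_1}\cup\semborda{F_2}\subseteq\langle u_1,u_2\rangle$ comes from Lemma~\ref{lem:1T1}, and one still has to sweep up $F'-\borda{F^\circ}$ for the remaining pieces and $(\borda{F_1}\cup\borda{F_2})-\borda{F^\bullet}$. Your proof as written simply never produces the pair $u_1,u_2$ in this case, so the claim $S^*\cap\semborda{F}=\{u_1,u_2\}$ is unestablished there.

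A second, smaller omission: in the $j=0$ subcases your propagation yields $F'-F_1\subset\langle u_1,u_2\rangle$ for the \emph{foreign} pieces $F'$, but you never say how $\semborda{F_1}$ itself (the home piece of $u_1,u_2$, which by $j=0$ is \emph{not} t-concave) is recovered. The paper closes this by using non-t-concavity of $\semborda{F_1}$ together with t-concavity of $\semborda{F^\bullet}$ to find witnesses $w,w'\in\semborda{F^\bullet}-\semborda{F_1}$ whose toll interval meets $\semborda{F_1}$; since those witnesses are already in $\langle u_1,u_2\rangle$, so is $\semborda{F_1}$. Finally, note that $S^*\cap\semborda{F}=\{u_1,u_2\}$ is a statement about what the algorithm constructs, verified by tracking which choices fire; it does not follow from ``minimality of $S^*$'' as you suggest, since the correctness (hence minimality) of $S^*$ is exactly what is being proved.
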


\begin{proof}
Let $F^\bullet \in {\cal F^*}$ such that $\borda{F^\bullet}$ is t-concave of type~$2$.
If $F^\bullet$ was added to $\cal F$ in line~\ref{lin:extremal}, then the result follows from Corollary~\ref{cor:atom} because $F^\bullet$ is an mp-subgraph.
Now, we consider that $F^\bullet$ was added to $\cal F$ in line~\ref{lin:join} of some iteration $k$ of the {\bfseries while} loop. Let $\semborda{F^\circ}, {\cal M'}$, and ${\cal F'}$ be obtained in lines~\ref{lin:F}, ~\ref{lin:H}, and~\ref{lin:H2} of iteration $k$, respectively.
If $\borda{F^\bullet} - \borda{F^\circ} \ne \varnothing$, then it would exist a vertex of $\semborda{F^\bullet}$ having a non-neighbor in $\borda{F^\bullet}$, which would imply that the type of $\semborda{F^\bullet}$ is not 2. Then, $\borda{F^\bullet} \subset \borda{F^\circ}$.
By Lemma~\ref{lem:formed}, the number $j$ of sets $F$ of ${\cal F'}$ such that $\semborda{F}$ is t-concave is at most~$2$.

If $j = 0$ and Choice~\ref{cho:T20T1a} was done in line~\ref{lin:choT2k0}, then, by Lemma~\ref{lem:walk}, it holds $F' - \borda{F^\circ} \subset \langle u_1,u_2 \rangle$ for every $F' \in {\cal M'} \cup {\cal F'} - \{F_1\}$ because $F' - (\borda{F^\circ} - u_i)$ induces a connected graph for $i \in \{1,2\}$ by Lemma~\ref{lem:F}~$(\ref{ite:maximal})$.
If $j = 0$ and Choice~\ref{cho:T20T1b} was done in line~\ref{lin:choT2k0}, or $j = 1$ and Choice~\ref{cho:T21T1} was done in line~\ref{lin:F*end}, or $j = 2$, then, by Lemma~\ref{lem:walk}, it holds $F' - \borda{F^\circ} \subset \langle u_1,u_2 \rangle$ for every $F' \in {\cal M'} \cup {\cal F'} - \{F_1,F_2\}$ because $F' - \borda{F^\circ}$ induces a connected graph by Lemma~\ref{lem:F}~$(\ref{ite:maximal})$. If $j = 1$ and Choice~\ref{cho:T21T2} was made, then $F - F_1 \subset \langle u_1,u_2 \rangle$ for every $F \in {\cal F'} \cup {\cal M'} - \{F_1\}$.

Therefore, it remains to show that $\semborda{F_1} \subseteq \langle u_1,u_2 \rangle$ for Choice~\ref{cho:T20T1a}~or~\ref{cho:T21T2} and that $\semborda{F_1} \cup \semborda{F_2} \subseteq \langle u_1,u_2 \rangle$ for $j = 2$ or Choice~\ref{cho:T20T1b} or~\ref{cho:T21T1}.

First consider $j = 0$. For Choice~\ref{cho:T20T1a}, since no vertex outside $F^\bullet$ belongs to a tolled walk containing vertices of $\semborda{F^\bullet}$, there are vertices $w,w' \in \semborda{F^\bullet} - \semborda{F_1}$ whose tolled interval contains vertices of $\semborda{F_1}$. Since we have already shown that $\semborda{F^\bullet} - \semborda{F_1} \subset \langle u_1,u_2 \rangle$, we have $\semborda{F_1} \subset \langle u_1,u_2 \rangle$

For Choice~\ref{cho:T20T1b}, we can assume that Choice~\ref{cho:T20T1a} is not possible. Therefore, for some $F \in {\cal M'} \cup {\cal F'} - \{F_1,F_2\}$, there is $w \in F$ such that $\semborda{F_1} \subset \langle u_2,w \rangle$ or there are $w \in F$ and $w' \in F'$ such that $\semborda{F_1} \subset \langle w,w' \rangle$. For both cases, we have that $\semborda{F_1} \subset \langle u_1,u_2 \rangle$ because we have already shown that $\semborda{F^\bullet} - \semborda{F_1} \subset \langle u_1,u_2 \rangle$. It is clear that the same does hold for $\semborda{F_2}$.

For $j = 1$, consider first that Choice~\ref{cho:T21T1} is possible. From Lemma~\ref{lem:1T1}, it holds that $\semborda{F_1} \subseteq \langle u_1,u_2 \rangle$. Now, since no vertex outside $F^\bullet$ belongs to a tolled walk containing vertices of $\semborda{F_2}$, it holds $\semborda{F_2} \subset \langle \semborda{F^\bullet} - F_2 \rangle$, and we have already shown that $\semborda{F^\bullet} - F_2 \subset \langle u_1,u_2 \rangle$.
For Choice~\ref{cho:T21T2}, From Lemma~\ref{lem:1T1}, it holds that $\semborda{F_1} \subseteq \langle u_1,u_2 \rangle$.

For $j = 2$, the result follows directly from Lemma~\ref{lem:1T1}. 

It remains to show that $(\borda{F_1} \cup \borda{F_2}) - \borda{F^\bullet} \subset \langle u_1,u_2 \rangle$.
Let $w_1 \in \borda{F_1} - \borda{F^\bullet}$.
We know that $w_1$ also belongs to a set $F' \in {\cal M'} \cup {\cal F'} - \{F_1\}$. If $F' = F_2$, then 
observe that $w_1$ belongs to a tolled $(u_1,u_2)$-walk. Otherwise, 
observe that $w_1$ belongs to a tolled $(u_1,w')$-walk where $w' \in F' - F_1$. Since $w' \in \langle u_1,u_2 \rangle$, $\semborda{F^\bullet} \subseteq \langle u_1,u_2 \rangle$.
$\hfill\square$
\end{proof}

\begin{theorem} \label{thm:hrf}
	Algorithm~{\em \ref{alg:thn}} is correct.
\end{theorem}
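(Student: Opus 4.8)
The plan is to show two things: (a) the set $S^*$ returned by Algorithm~\ref{alg:thn} is a toll hull set of $G$, and (b) no toll hull set of $G$ has fewer than $|S^*|$ vertices. Combining these gives correctness. For the lower bound~(b), the natural route is to prove that the family $\mathcal{C} = \{\semborda{F} : F \in \mathcal{F^*}$ and $\semborda{F}$ is t-concave$\}$ is a toll hull characteristic family of $G$, so that any toll hull set must contain at least $g(\semborda{F})$ vertices in each member, and the total matches $|S^*|$. To invoke the granularity bounds of Lemma~\ref{lem:granularity}, I first need the members of $\mathcal{C}$ to be pairwise disjoint; this follows from Lemma~\ref{lem:F}~$(\ref{ite:disjoint})$. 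Then Lemma~\ref{lem:granularity} tells me each type-1 member forces at least one vertex, each type-2 member at least two, and each type-3 member all of its vertices into any toll hull set. Summing over $\mathcal{C}$ gives a lower bound on the toll hull number, and I must check that the choices made by the algorithm add exactly $g(\semborda{F})$ vertices per member: Choices~\ref{cho:T1},~\ref{cho:T1F0},~\ref{cho:T1FJ} add a single vertex for type~1, Choices~\ref{cho:T2},~\ref{cho:T20T1a},~\ref{cho:T20T1b},~\ref{cho:T21T1},~\ref{cho:T21T2} add two vertices for type~2, and line~\ref{lin:mp-subgraph-end} adds the whole set for type~3. Lemma~\ref{lem:choice} guarantees the choices are always available, so $|S^*|$ equals the granularity of $\mathcal{C}$.

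For the upper bound~(a), I would argue $\langle S^* \rangle = V(G)$ by showing every $F \in \mathcal{F^*}$ satisfies $F \subseteq \langle S^* \rangle$, since $\bigcup_{F \in \mathcal{F^*}} F = V(G)$ (the families $\mathcal{F}$ and $\mathcal{M}$ start as a cover of $V(G)$ by mp-subgraphs, and every merge in line~\ref{lin:join} or deletion in line~\ref{lin:removeM} preserves the property that the union over $\mathcal{F} \cup \mathcal{M}$ is $V(G)$; moreover the while loop eventually terminates with every non-t-concave $\semborda{F^\circ}$ for $F^\circ \in \mathcal{F^*}$ having $\borda{F^\circ}$ not contained in any other member, so by Lemma~\ref{lem:F}~$(\ref{ite:border-not-contained})$ and Lemma~\ref{lem:complement} these are accounted for). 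For $F \in \mathcal{F^*}$ with $\semborda{F}$ t-concave, Lemmas~\ref{lem:1T1} and~\ref{lem:2T2} give $\semborda{F} \subseteq \langle S^* \rangle$ directly for types~1 and~2, while type~3 has $\semborda{F} \subseteq S^*$ trivially; the border $\borda{F}$ is covered because $\borda{F}$ lies in another member (Lemma~\ref{lem:concave}~$(\ref{ite:inter})$), and for $F \in \mathcal{F^*} \cup \mathcal{M^*}$ whose $\semborda{F}$ is \emph{not} t-concave I would use the termination condition of the while loop together with Lemma~\ref{lem:complement} to find, in each component of $G - F$, a t-concave $\semborda{F'}$ containing a vertex of $S^*$, then apply Lemma~\ref{lem:tollconvexset} (as in the proof of Lemma~\ref{lem:1T1}) to pull the vertices of $F$ into the toll hull of two such vertices lying on opposite sides. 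An induction on the order in which sets are merged (using the $\ell$-labeling from the proof of Lemma~\ref{lem:1T1}) organizes this.

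The main obstacle is the bookkeeping in part~(a) for the sets $F$ whose shadow $\semborda{F}$ is not t-concave: these are precisely the ones the while loop could not resolve, and I must verify that after the loop halts, every such $F$ still has $G - F$ disconnected with t-concave shadows in at least two components (so that two vertices of $S^*$ straddle $F$ and Lemma~\ref{lem:walk} applies to show $\semborda{F} \subset \langle S^* \rangle$), and that $\borda{F}$ is absorbed into a neighboring member already shown to be in $\langle S^* \rangle$. A secondary subtlety is confirming the while loop actually terminates and that its termination condition is exactly the negation of ``there exists $F^\circ \in \mathcal{F}$ with $\semborda{F^\circ}$ not t-concave and $\borda{F^\circ} \subset F$ for some other $F$'', which is what licenses the appeal to Lemma~\ref{lem:F}~$(\ref{ite:border-not-contained})$. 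Once these points are nailed down, the two bounds meet and correctness follows.
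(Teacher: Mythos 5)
Your proposal is correct and follows essentially the same route as the paper: the lower bound via the granularity of the pairwise-disjoint family $\mathcal{C}$ (Lemma~\ref{lem:granularity} with Lemma~\ref{lem:F}~$(\ref{ite:disjoint})$), and the upper bound via Lemmas~\ref{lem:1T1} and~\ref{lem:2T2} for t-concave shadows together with the disconnection argument (Corollary~\ref{cor:non-extremal}, Lemma~\ref{lem:F}~$(\ref{ite:border-not-contained})$, Lemma~\ref{lem:complement}) forcing tolled walks between $S^*$-vertices in different components to sweep up the remaining vertices. The paper packages the last step slightly more cleanly by taking a maximal connected set $B \subseteq V(G) - \langle S^* \rangle$ and deriving a contradiction, but the underlying idea is the one you describe.
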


\begin{proof}
For every set $F \in {\cal F^*}$ such that $\semborda{F}$ is t-concave, if $\semborda{F}$ has type~$i$ for $i \in \{1,2\}$, then $|S^* \cap \semborda{F}| = i$; and if $i = 3$, then $\semborda{F} \subseteq S^*$. Since every vertex of $S^*$ belongs to $\semborda{F}$ for some $F \in {\cal F^*}$ such that $\semborda{F}$ is t-concave and $\semborda{F} \cap \semborda{F'} = \varnothing$ for any $F,F' \in {\cal F^*}$, by Lemma~\ref{lem:granularity}, it suffices to show that $S^*$ is a toll hull set of $G$. We also have as a consequence that $\cal C$ is a toll hull characteristic family of $G$.

First, consider ${\cal F^*} = \{F\}$. Then, $\borda{F} = \varnothing$, which means that the type of $\semborda{F}$ is 2 or 3. In the former case, the result follows from Lemma~\ref{lem:2T2} and, in the latter case, $G$ is a complete graph and the result follows from Corollary~\ref{cor:atom}.
	
Now, consider $|{\cal C}| \geq 2$ and let $F \in {\cal F^*}$ such that $\semborda{F}$ is t-concave.
First, we show that $\semborda{F} \subset \langle S^* \rangle$. If $\semborda{F}$ has type 3, then $\semborda{F} \subseteq S^*$ by line~\ref{lin:mp-subgraph-end}. If $\semborda{F}$ has type 2, then $\semborda{F} \subset \langle S^* \rangle$ by Lemma~\ref{lem:2T2}. In the remaining case, $\semborda{F}$ has type 1. By Lemma~\ref{lem:1T1}, it suffices to show that $V(G) - F$ contains a set of ${\cal C}$. But this is a consequence of Lemma~\ref{lem:F}~$(\ref{ite:disjoint})$ and the fact that $|{\cal C}| \geq 2$.

It remains to show that every $v \not\in \underset{C \in {\cal C}}{\bigcup} C$ belongs to $\langle S^* \rangle$. Suppose the contrary and let $v \in B \subset V(G) - \langle S^* \rangle$ such that $G[B]$ is connected and $B$ is maximal. Write $G' = G - B$. Note that $v \in B$ belongs to a member of ${\cal M^*}$ or to a member $F \in {\cal F^*}$ such that $\semborda{F}$ is not t-concave.
In the former case, $G'$ is disconnected by Corollary~\ref{cor:non-extremal}.
In the latter case, by the assumption on ${\cal F^*}$, $\borda{F}$ is not contained in any another member. Then, Lemma~\ref{lem:F}~$(\ref{ite:border-not-contained})$ implies that $G'$ is disconnected.
In both cases, each of the connected components contains a member of ${\cal C}$, i.e., there are vertices $u_1, u_2 \in S^*$ belonging to different connected components of $G'$. It is clear that any tolled $(u_1,u_2)$-walk contains some vertex of $B$, which is a contradiction.
$\hfill\square$
\end{proof}

The task of deciding whether a set is t-concave appears many tymes in Algorithm~\ref{alg:thn}. A direct application of Lemma~\ref{lem:tollconvexset} leads to an algorithm for deciding whether a set is t-concave with time complexity equals $O(n^3m)$. However, all sets that this test is necessary in the algorithm have particular properties. In the next result, we show how to decide whether these sets are t-concave in $O(n^3)$ steps.

\begin{lemma} \label{lem:tconvex}
Let $G$ be a graph with $n$ vertices. If $F \subseteq V(G)$ is such that $\borda{F}$ is a clique and $G[\semborda{F}]$ is connected, then to test whether $\semborda{F}$ is t-concave can be done in $O(n^3)$ steps.
\end{lemma}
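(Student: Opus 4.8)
The plan is to exploit the two hypotheses — $\borda{F}$ is a clique and $G[\semborda{F}]$ is connected — to replace the naive $O(n^3 m)$ test by Lemma~\ref{lem:tollconvexset} with something cheaper, by observing that these conditions mean $\semborda{F}$ fails to be t-concave only in a very structured way. First I would recall that $\semborda{F}$ is t-concave if and only if there is no tolled walk between two vertices $x,y \notin \semborda{F}$ that passes through $\semborda{F}$. Since $\borda{F} = N(\semborda{F})$ is a clique (note $N(\semborda{F}) \subseteq \borda{F}$ always, and if $\borda{F}$ is a clique one may reduce to the case $\borda{F}=N(\semborda{F})$, or handle isolated border vertices separately), any vertex $w \in \semborda{F}$ that lies on a tolled $(x,y)$-walk is reached from $x$ through some $x' \in \borda{F}$ and left toward $y$ through some $y' \in \borda{F}$ with $xy' \notin E(G)$ and $yx' \notin E(G)$. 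Because $\semborda{F}$ is connected, by Lemma~\ref{lem:walk} the existence of \emph{one} such $w$ already forces $\semborda{F} \subseteq [x,y]$; so the whole question collapses to: does there exist a pair $x,y \notin \semborda{F}$ and a pair $x',y' \in \borda{F}$ witnessing condition~$(\ref{ite:complex})$ of Lemma~\ref{lem:walk}?

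Next I would turn that into an efficient search. Fix a vertex $w \in \semborda{F}$ (any one, since $G[\semborda{F}]$ is connected the choice is immaterial for deciding existence). For each candidate pair $x', y' \in \borda{F}$ with $N(x') \cap \semborda{F} \neq \varnothing$ and $N(y') \cap \semborda{F} \neq \varnothing$, I need to decide whether there are $x,y \notin \semborda{F} \cup \{x',y'\}$ (or possibly $x = $ a neighbor outside, with the tolled-walk endpoint conditions) with $xx' \in E(G)$, $yy' \in E(G)$, $xy' \notin E(G)$, $yx' \notin E(G)$, and such that the relevant separation conditions of Lemma~\ref{lem:tollconvexset} hold — but here the clique structure of $\borda{F}$ makes those separation conditions trivial to check, because $N[x] - \{w\}$ separating $w$ from $y$ reduces to a statement about whether $N(x)$ dominates the single ``exit'' clique $\borda{F}$. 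I would argue that for fixed $x',y'$ the existence of a suitable $x$ can be tested in $O(n)$ time (scan the neighbors of $x'$ outside $\semborda{F}$ and check the non-adjacency to $y'$ and the domination condition), and likewise for $y$; there are $O(n^2)$ pairs $(x',y')$, giving $O(n^3)$ overall. One still must verify that a witnessing $x$ and a witnessing $y$ can be chosen simultaneously — i.e. that the choices do not conflict — which I expect to follow because the only possible conflict is $x = y$, easily avoided, or the degenerate short-walk cases $W = uv$ or $W = uw_2 v$, which are ruled out by $w$ being internal.

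The main obstacle I anticipate is handling the separation conditions of Lemma~\ref{lem:tollconvexset} correctly \emph{without} recomputing connected components of $G$ minus a closed neighborhood for every candidate, since that is exactly what would reintroduce the extra factor of $m$. The key realization to push through is that, with $\borda{F}$ a clique and $G[\semborda{F}]$ connected, ``$N[x] - \{w\}$ does not separate $w$ from $y$'' becomes equivalent to a purely local condition: $x$ must fail to dominate $\borda{F}$ (so that some border vertex gives $w$ an escape route toward the rest of $G$), and symmetrically for $y$. Establishing that equivalence rigorously — that the global separation question degenerates to this local domination check precisely under our two hypotheses — is the heart of the argument; once it is in hand, the $O(n^3)$ bound is a routine double loop over $\borda{F} \times \borda{F}$ with an inner $O(n)$ scan, plus an initial $O(n+m)$ computation to identify $\semborda{F}$, $\borda{F}$, and to confirm $G[\semborda{F}]$ is connected, all absorbed into $O(n^3)$.
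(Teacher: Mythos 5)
Your reduction to a single witness vertex is exactly right and matches the paper: since $G[\semborda{F}]$ is connected, Lemma~\ref{lem:walk} shows that $\semborda{F}$ fails to be t-concave if and only if one fixed $v \in \semborda{F}$ lies in $[x,y]$ for some pair $x,y \not\in F$, and since $v$ has no neighbour outside $F$, Lemma~\ref{lem:tollconvexset} applies with $N[x]-\{v\} = N[x]$. The gap is in the search you build on top of this. First, you enumerate candidate endpoints $x$ among the neighbours of $x' \in \borda{F}$, but Lemma~\ref{lem:walk}~$(ii)$ does not require $x$ to be adjacent to $x'$: the witnessing endpoints of a tolled walk through $\semborda{F}$ can lie arbitrarily far from $F$, so your scan is incomplete and will wrongly certify $\semborda{F}$ as t-concave whenever no witnessing pair has both endpoints adjacent to the border. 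Second, the equivalence you yourself identify as the heart of the argument --- that ``$N[x]$ does not separate $v$ from $y$'' reduces to ``$x$ fails to dominate $\borda{F}$'' --- is only a necessary condition, not a sufficient one. Failing to dominate $\borda{F}$ gives $v$ an escape vertex $w' \in \borda{F} - N[x]$, but whether $w'$ can then reach $y$ in $G - N[x]$ depends on the structure of $G$ outside $F$ and remains a global connectivity question; if every path from $\borda{F}$ to $y$ in $G-F$ passes through $N(x)$, your local test accepts while $N[x]$ in fact separates $v$ from $y$. (Conversely, for endpoints that \emph{are} adjacent to the border and satisfy your non-adjacency conditions, the separation conditions hold automatically and no domination check is needed at all --- another sign that the local test is not doing the work you need it to do.)

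The paper closes exactly this gap not by localizing the separation test but by amortizing it: for every $u \not\in F$ it precomputes the connected components of $G - N[u]$ (one graph search per vertex, $O(nm) \subseteq O(n^3)$ in total), and then each of the $O(n^2)$ non-adjacent pairs $u,z \not\in F$ is decided in constant time by asking whether $v$ lies in the component of $G - N[u]$ containing $z$ and in the component of $G - N[z]$ containing $u$. Adopting this precomputation is the missing idea; with it, your first paragraph already completes the proof.
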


\begin{proof}
By definition, $\semborda{F}$ is not t-concave if and only if there are $x,y \not\in \semborda{F}$ such that there is a tolled $(x,y)$-walk containing some vertex of $\semborda{F}$. The assumption that $\borda{F}$ is a clique implies that $x,y \not\in F$. Since $\semborda{F}$ is connected, by Lemma~\ref{lem:walk}, either all vertices of $\semborda{F}$ belong to $[x,y]$ or none. Let $v \in \semborda{F}$. Since $v \not\in N(x) \cup N(y)$, by Lemma~\ref{lem:tollconvexset}, $v$ is in some tolled walk between non-adjacent vertices $x,y \not\in F$ if and only if $N[x]$ does not separate $v$ from $y$ and $N[y]$ does not separate $v$ from $x$.

For every $u \not\in F$, denote by ${\cal T}_u$ the family formed by the vertex sets of the connected components of $G - N[u]$. Using an standard search algorithm, all families can be found in $O(nm)$ steps.
Now, observe that $v$ belongs to $[V(G) - F]$ if and only if there are $u,z \not\in F$ with $uz \not\in E(G)$ such that $v \in {\cal T}_u(z) \cap {\cal T}_z(u)$. Since this test can be done in $O(n^3)$ steps, the proof is complete.
$\hfill\square$
\end{proof}

\begin{theorem}
For an input graph with $n$ vertices, Algorithm~$\ref{alg:thn}$ runs in $O(n^4)$ steps.
\end{theorem}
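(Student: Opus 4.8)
The plan is to account separately for the cost of the ingredients of Algorithm~\ref{alg:thn}, keeping everything in check with three observations. First, by the standard theory of clique-separator decompositions a graph on $n$ vertices has $O(n)$ maximal prime subgraphs and its mp-decomposition (line~\ref{lin:decomposition}), together with the partition into extremal and non-extremal ones (lines~\ref{lin:extremal}--\ref{lin:othermp}), can be produced in $O(nm) = O(n^3)$ time; hence $|{\cal F} \cup {\cal M}| = O(n)$ at every moment, since after line~\ref{lin:othermp} these families only lose members or have several of them (those in ${\cal M'} \cup {\cal F'}$) replaced in line~\ref{lin:join} by their union $F^\bullet$. Second, the \textbf{while} loop runs $O(n)$ times: in each iteration $|{\cal M'} \cup {\cal F'}| \ge 2$ (it contains both $F^\circ$ and the member $F \supseteq \borda{F^\circ}$ witnessing the loop guard), so $|{\cal F} \cup {\cal M}|$ strictly decreases, and it never drops below $1$. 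Third, once and for all I would compute, for each vertex $u$, the partition of $V(G) - N[u]$ into the vertex sets of the connected components of $G - N[u]$ together with a component-label array; this takes $O(n(n+m)) = O(n^3)$ in total, it is exactly the data structure underlying Lemma~\ref{lem:tconvex}, and it makes every query ``are the two vertices $a,b$ (both outside $N[u]$) in the same component of $G - N[u]$'' --- equivalently, every separation condition appearing in the Choices and in Lemma~\ref{lem:tollconvexset} --- an $O(1)$ comparison of two labels.

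Next I would handle the one expensive recurrent operation, deciding t-concavity. Only $O(n)$ distinct sets ever belong to ${\cal F}$: the $O(n)$ extremal mp-subgraphs inserted in line~\ref{lin:extremal}, plus at most one new set $F^\bullet$ per \textbf{while} iteration (line~\ref{lin:join}), again $O(n)$ by the second observation. For each such $F$, at the moment it enters ${\cal F}$ I would decide whether $\semborda{F}$ is t-concave and, if so, record its type, and cache the outcome. By Lemma~\ref{lem:concave}, if $\semborda{F}$ is t-concave then $\borda{F}$ is a clique and $G[\semborda{F}]$ is connected, so I would first test these two conditions in $O(n^2)$ time, reporting ``not t-concave'' if either fails, and otherwise invoke Lemma~\ref{lem:tconvex} to decide t-concavity in $O(n^3)$ steps, reusing the precomputed data of the third observation; given t-concavity, computing the type is an $O(n^2)$ check of whether $(F,N(F))$ is complete and whether $F \cup N(F)$ is a clique. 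Thus the caching costs $O(n) \cdot O(n^3) = O(n^4)$, and afterwards every reference to ``$\semborda{F}$ is t-concave of type $i$'' --- in the \textbf{while} guard of line~\ref{lin:F}, in the \textbf{if}-tests of lines~\ref{lin:concave}--\ref{lin:F*end}, and in the count $k$ of line~\ref{lin:F*beg} --- is an $O(1)$ table look-up.

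With this in place, each \textbf{while} iteration costs $O(n^3)$ besides the cached look-ups: evaluating the guard means, for each of the $O(n)$ members $F^\circ$ of ${\cal F}$, an $O(1)$ t-concavity look-up and an $O(n^2)$ search for a member containing $\borda{F^\circ}$; forming ${\cal M'},{\cal F'},F^\bullet$ and rebuilding ${\cal F},{\cal M}$ (lines~\ref{lin:H}--\ref{lin:join}) are set operations on $O(n)$ sets of size $O(n)$, i.e.\ $O(n^2)$; and applying the relevant Choice reduces to scanning a bounded collection of candidate vertex tuples while performing $O(1)$-per-query separation tests. The point needing care is the number of tuples: Choice~\ref{cho:T20T1a}, for instance, ranges over unordered pairs $\{u_1,u_2\}$ lying inside a single $\semborda{F_1}$, and since the sets $\semborda{F}$ for $F \in {\cal M'} \cup {\cal F'}$ are pairwise disjoint by Lemma~\ref{lem:F}~$(\ref{ite:disjoint})$, the number of such pairs over all choices of $F_1$ is at most $\binom{n}{2} = O(n^2)$; for each pair one does $O(n)$ work (locating suitable $u'_1,u'_2 \in \borda{F^\circ}$ and checking, by label comparison, the two non-separation conditions), and the remaining Choices are analogous or cheaper, so each Choice costs $O(n^3)$. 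Summing over the $O(n)$ iterations gives $O(n^4)$; the preliminary \textbf{for} loop of lines~\ref{lin:mp-subgraph-beg}--\ref{lin:mp-subgraph-end} runs $O(n)$ times at $O(n^3)$ each, hence $O(n^4)$; and the initialization through line~\ref{lin:othermp} costs $O(n^3)$. Adding all contributions, Algorithm~\ref{alg:thn} runs in $O(n^4)$ steps.

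The main obstacle is not any individual step but the bookkeeping that prevents a blow-up. Naively re-deciding t-concavity every time the \textbf{while} guard is re-evaluated, or every time a Choice needs the status of a set, would cost $\Theta(n^3)$ per set per iteration and push the total to $O(n^5)$; so the argument hinges on (i) deciding t-concavity and type once per set, with a single $O(n^3)$ computation, and caching it; (ii) the global precomputation of the neighbourhood-deletion component partitions, which turns all the separation queries of Lemma~\ref{lem:tollconvexset} and of the Choices into $O(1)$ label comparisons; and (iii) the disjointness statement of Lemma~\ref{lem:F}~$(\ref{ite:disjoint})$, which keeps the number of candidate tuples inspected by the type-$2$ Choices quadratic rather than cubic.
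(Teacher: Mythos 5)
Your proposal is correct and follows essentially the same route as the paper: Leimer's $O(nm)$ clique-separator decomposition giving $O(n)$ mp-subgraphs, an $O(n)$ bound on the number of \textbf{while} iterations, the $O(n^3)$ t-concavity test of Lemma~\ref{lem:tconvex} applied once per set with the result cached, and $O(n^3)$ per iteration for the remaining work, for a total of $O(n^4)$. Your write-up merely makes explicit some bookkeeping the paper leaves implicit (the precomputed component partitions of $G-N[u]$, which are exactly the families ${\cal T}_u$ inside the proof of Lemma~\ref{lem:tconvex}, and the tuple-counting for the Choices), so no substantive difference.
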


\begin{proof}
Lines~\ref{lin:prime} and~\ref{lin:decomposition} can done in $O(nm)$ using the algorithm in~\cite{Leimer1993}.
The construction of ${\cal F}$ and ${\cal M}$ in lines~\ref{lin:extremal} and~\ref{lin:othermp} can be done in $O(n^3)$.
The number of iterations of the {\bfseries for} loop is $O(n)$. Using Lemma~\ref{lem:tconvex}, one can test whether a set is t-concave in $O(n^3)$ steps. Since the time complexity of each Choice $i$ for $i \in \{\ref{cho:T1}, \ldots, \ref{cho:T21T2}\}$ is clearly $O(n^2)$, lines~\ref{lin:mp-subgraph-beg} to~\ref{lin:mp-subgraph-end} can be done in $O(n^2m)$ steps.
	
Every time that line~\ref{lin:F} is reached, we already know for each member of $F \in \cal F$, whether $\semborda{F}$ is not t-concave. The conditions of line~\ref{lin:F} can be checked in $O(n^3)$ considering the intersection of the mp-subgraphs forming the members of ${\cal F} \cup{\cal M}$.
Clearly, each operation from line~\ref{lin:H} to~\ref{lin:join} can be done in $O(n^2)$.
Since Line~\ref{lin:concave} can be done in $O(n^3)$ time by Lemma~\ref{lem:tconvex} and lines~\ref{lin:F*beg} to~\ref{lin:F*end} can be done in $O(n^2)$, the {\bfseries while} loop costs $O(n^4)$, which is the overall time complexity of Algorithm~\ref{alg:thn}.
$\hfill\square$
\end{proof}

\section{Concluding remarks}

We conclude discussing some consequences of Algorithm~\ref{alg:thn}. First, we observe that the number of minimum toll hull sets can be exponential on the size of the graph. However, using the toll hull characteristic family constructed by Algorithm~\ref{alg:thn}, one can enumerate all minimum toll hull sets of $G$ with polynomial time delay. For this, it suffices to change the choices used by the algorithm so that they find all possible selections for a concave set $C$ accordingly to the appropriate choice, i.e.,
if $C$ has type 1, let $t(C)$ be formed by all vertices $x$ such that $x$ satisfies the appropriate choice for $C$; and 
if $C$ has type 2, let $t(C)$ be formed by all pairs $\{x,y\}$ such that $\{x,y\}$ satisfies the appropriate choice for $C$. Therefore, the algorithm of enumaration consists of finding all combinations considering the possible choices for each concave set of the toll hull characteristic family.

Another consequence of Algorithm~\ref{alg:thn} together with the notion of granularity is a characterization of toll extreme vertices of a graph. As discussed in~\cite{Alcon2015}, the property of a vertex being an extreme vertex is not well-behaviored in toll convexity as in other well-studied convexities, such as geodetic, nonophonic, and $P_3$ convexities, where the neighborhood of the vertex has all information to answer the question. Using the toll hull characteristic family of Algorithm~\ref{alg:thn}, we have the following characterization of the toll extreme vertices of a graph.

\begin{corollary}
The set of toll extreme vertices of $G$ is formed by the vertices belonging to the sets of ${\cal C}$ having type~$3$.
\end{corollary}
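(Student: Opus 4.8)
The plan is to characterize the toll extreme vertices of $G$ by identifying them exactly with the vertices lying in the type-$3$ members of the hull characteristic family $\mathcal{C}$ produced by Algorithm~\ref{alg:thn}. First I would establish the easy inclusion: every vertex belonging to a set $C\in\mathcal{C}$ of type~$3$ is toll extreme. This is essentially already done inside the proof of Lemma~\ref{lem:granularity} for the case $t=3$, where it is shown that all vertices of a type-$3$ t-concave set $F$ are toll extreme vertices of $G$; since each $C\in\mathcal{C}$ of type~$3$ equals $\semborda{F}$ for some $F\in\mathcal{F^*}$ with $\semborda{F}$ t-concave of type~$3$, and $F\cup N(F)=\semborda{F}\cup\borda{F}$ is a clique, the argument applies verbatim. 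So one direction is immediate by quoting Lemma~\ref{lem:granularity}.

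For the reverse inclusion I would argue by contraposition: if $v$ does not belong to any type-$3$ member of $\mathcal{C}$, then $v$ is not toll extreme. Recall from the proof of Theorem~\ref{thm:hrf} that every vertex of $V(G)$ either lies in some member $C\in\mathcal{C}$ (i.e.\ in $\semborda{F}$ for some $F\in\mathcal{F^*}$ with $\semborda{F}$ t-concave) or lies in $\bigcup_{C\in\mathcal{C}}C$'s complement, which was handled by the $B$-argument at the end of that proof. So the two cases are: (a) $v\in C$ for some $C\in\mathcal{C}$ of type~$1$ or~$2$, and (b) $v\notin\bigcup_{C\in\mathcal{C}}C$. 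In case (a) with $C$ of type~$2$, Lemma~\ref{lem:2T2} shows $\semborda{F}\subseteq\langle u_1,u_2\rangle$ for two distinct vertices $u_1,u_2\in\semborda{F}$; since $|\semborda{F}|\ge 2$ and $\langle\{v\}\rangle=\{v\}$ would be needed for $v$ extreme, producing a tolled walk between two vertices of $\semborda{F}$ through $v$ (or through another vertex, forcing $V(G)-\{v\}$ not t-convex) shows $v$ is not extreme. In case (a) with $C$ of type~$1$, Lemma~\ref{lem:1T1} gives $\semborda{F}\subset\langle u,w\rangle$ for a single $u\in\semborda{F}$ and any $w\notin F$; since by Theorem~\ref{thm:hrf} (using $|\mathcal{F^*}|\ge 2$ or the structure of $G-F$) there is such a $w$ outside $F$ lying in a different member, the tolled walk realizing $\semborda{F}\subset\langle u,w\rangle$ passes through $v$ if $v\ne u$, and if $v=u$ one still finds a tolled walk through $v$ because type~$1$ means $(F,N(F))$ is not complete, so $v$ has a non-neighbor in $\borda{F}$ sitting on a tolled walk from $v$ to $w$. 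In case (b), the $B$-argument in Theorem~\ref{thm:hrf} exhibits $u_1,u_2\in S^*$ in distinct components of $G-B$ with every tolled $(u_1,u_2)$-walk meeting $B\ni v$; choosing $B$ minimal connected containing $v$, one extracts a tolled walk through $v$ between vertices outside $\{v\}$, so $V(G)-\{v\}$ is not t-concave, hence not t-convex's complement argument—more directly, $v$ lies on a tolled walk between two vertices of $\langle S^*\rangle=V(G)$, so removing $v$ breaks t-convexity.

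The main obstacle I expect is handling the boundary subtleties: a vertex $v$ may lie in $\borda{F}$ rather than $\semborda{F}$, and the hull characteristic family only directly controls the $\semborda{F}$ parts. For such a $v\in\borda{F}$, I would use Lemma~\ref{lem:F}~$(\ref{ite:FM})$ and Lemma~\ref{lem:concave}~$(\ref{ite:inter})$ to note $v$ lies in at least two members of $\mathcal{F^*}\cup\mathcal{M^*}$ (or on the boundary between t-concave pieces), hence $v$ has a neighbor on each side and sits on an induced path between non-adjacent vertices of $\langle S^*\rangle$, which is a tolled walk; this shows $V(G)-\{v\}$ is not t-convex. The cleanest way to organize the whole argument is: \emph{for every $v$ not in a type-$3$ set of $\mathcal{C}$, produce explicitly two vertices $x,y$ with $v\in[x,y]$ and $v\notin\{x,y\}$}, which by definition means $V(G)-\{v\}$ is not t-convex, hence $v\notin Ext_t(G)$; combined with the forward inclusion from Lemma~\ref{lem:granularity}, the characterization follows. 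The potential trap is ensuring in every sub-case that the endpoints $x,y$ are genuinely distinct from $v$ and that the walk through $v$ is genuinely tolled (the non-adjacency and neighborhood conditions of a tolled walk), which the already-constructed walks in Lemmas~\ref{lem:1T1} and~\ref{lem:2T2} and the $B$-argument of Theorem~\ref{thm:hrf} supply.
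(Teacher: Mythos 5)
There is a genuine gap. Your reduction ``$v\in\langle S\rangle-S$ implies $v$ is not toll extreme'' (because $v$ first enters the hull via $v\in[x,y]$ for some $x,y\neq v$) is correct and cleanly disposes of every vertex of $\langle S^*\rangle-S^*=V(G)-S^*$; in particular it handles your case (b) and the $\borda{F}$ vertices much more simply than you suggest, with no need for the $B$-argument. But it says nothing about the vertices that the algorithm actually puts \emph{into} $S^*$, namely the single vertex $u$ with $S^*\cap\semborda{F}=\{u\}$ for a type-$1$ member and the pair $u_1,u_2$ for a type-$2$ member, and these are exactly the hard part of the reverse inclusion. Your treatment of the sub-case $v=u$ (type $1$) is not a proof: exhibiting a tolled walk \emph{from} $v$ \emph{to} $w$ does not show that $V(G)-\{v\}$ fails to be t-convex; you need a tolled $(x,y)$-walk with $x,y\neq v$ that passes \emph{through} $v$, and neither Lemma~\ref{lem:1T1} nor Lemma~\ref{lem:2T2} hands you one with $v$ in the interior. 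The sub-case $v\in\{u_1,u_2\}$ for type $2$ is not addressed at all. (Appealing to ``another minimum hull set avoiding $v$'' also does not close the gap, since the choice of $u$, respectively $u_1,u_2$, need not be unique but could be.)

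The paper closes this gap by a different mechanism that your proposal is missing: if $u$ is toll extreme then $u$ is simplicial, so $N[u]$ is a clique and is itself the vertex set of an mp-subgraph $F$ with $u\in\semborda{F}$ and $(\semborda{F},\borda{F})$ complete. This localizes the problem. If $F\in{\cal M^*}$, Corollary~\ref{cor:non-extremal} gives that $G-F$ is disconnected and a tolled walk between the components is rerouted through $u$. If $F$ was merged into some $F'\in{\cal F^*}$ with $\semborda{F'}$ t-concave but not of type $3$, the merging step supplies vertices $v,w\notin F$ and a tolled $(v,w)$-walk meeting $\semborda{F^\circ}$, and the completeness of $(\semborda{F},\borda{F})$ lets one splice $u$ into that walk between two non-neighbours $v',w'$ of $v$ and $w$, contradicting extremeness. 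Without some version of this simpliciality argument (or another explicit construction of a tolled $(x,y)$-walk through the chosen vertices of $S^*$), your proof of the reverse inclusion does not go through. The forward inclusion via Lemma~\ref{lem:granularity} is fine and is what the paper uses as well.
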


\begin{proof}
Let $X$ be formed by the vertices belonging to the sets of ${\cal C}$ having type~$3$. Suppose by contradiction that there is $u \not\in X$ such that $u$ is an extreme vertex. Since every non-simplicial vertex is not an extreme vertex, $N(u)$ is a clique. Observe that $N[u]$ is the vertex set of an mp-subgraph $F$ of $G$ and $u \in \semborda{F}$. On the one hand, $F \in {\cal F^*} \cup {\cal M^*}$. Since $\semborda{F} \not\in {\cal C}$, $F \in {\cal M^*}$. By Corollary~\ref{cor:non-extremal}, $G - F$ is disconnected, which means that $u$ is not an extreme vertex. On the other hand, $F$ is an mp-subgraph composing a set $F' \in {\cal F^*} \cup {\cal M^*}$. Since $u$ is an extreme vertex, $\semborda{F'}$ is t-concave. Therefore, in some iteration of the {\bfseries while} loop, $\borda{F^\circ} \subseteq \borda{F}$ such that $\borda{F^\circ}$ is not t-concave. Then, there are of vertices $v,w \not\in F^\circ$ such that there is a tolled $(v,w)$-walk $W$ containing some vertex of $\semborda{F^\circ}$. Since $(\semborda{F},\borda{F})$ is complete, $v,w \not\in F$. Let $v'$ and $w'$ vertices of $W$ such that $vv',ww' \not\in E(G)$. Denoting
$W_{vw'}$ the subwalk of $W$ from $v$ to $w'$ and
$W_{v'ww}$ the subwalk of $W$ from $v'$ to $w$, 
Note that the concatenation of $W_{vw'}$, $vxv'$, and $W_{v'w}$ is a tolled $(v,w)$-walk containing $u$, which is a contradiction.
$\hfill\square$
\end{proof}


\end{document}